\newtcolorbox{highlighted}{colback=yellow!20, colframe=yellow!80!black, sharp corners, boxrule=0.5mm}
\newcommand{\eqBetaAlc}{eq:BetaAlc}
\newcommand{\fIrrotational}{fIrrotational}
\newcommand{\eqBetaIrr}{eq:BetaIrr}
\newcommand{\Lie}{\mathcal{L}}
\newcommand{\vA}{\mathcal{A}}
\newcommand{\vAphi}{\mathcal{A}_{\phi}}
\newtheorem{theorem}{Theorem}
\newtheorem{proposition}[theorem]{Proposition}%
\theoremstyle{remark}
\begin{document}

\title[A Warp Drive with Predominantly Positive Invariant Energy Density and Global Hawking–Ellis Type I]{A Warp Drive with Predominantly Positive Invariant Energy Density and Global Hawking–Ellis Type I}


\author*[1]{\fnm{Jos\'{e}} \sur{Rodal}}\email{jrodal@alum.mit.edu}

\affil*[1]{\orgname{Rodal Consulting}, \orgaddress{\street{205 Firetree Ln.}, \city{Cary}, \state{NC}, \postcode{27513}, \country{USA}}}





\abstract{We present the first fully explicit, continuous, analytically derived warp--drive spacetime within General Relativity whose shift--vector flow is kinematically irrotational. Building on Santiago \emph{et al.} that scalar--potential, zero--vorticity warp fields are Hawking--Ellis Type~I for unit lapse and flat spatial slices, we supply a closed--form scalar potential and smooth shift components with proper boundary behavior, together with a Cartan--tetrad analytic pipeline and high--precision eigenanalysis. Compared with the Alcubierre and Natário models (evaluated at identical \((\rho,\sigma,v/c)\)), our irrotational solution exhibits \emph{significantly reduced} local NEC/WEC stress: its peak proper--energy deficit is reduced by a factor of \(\approx 38\) relative to Alcubierre and \(\approx 2.6\times10^{3}\) relative to Natário, and its peak NEC violation is more than \(60\times\) smaller than Natário. Crucially, the stress--energy is \emph{globally} Hawking--Ellis Type~I, with a well--defined timelike eigenvalue (proper energy density) everywhere. A fixed--smoothing vortical ablation confirms that this improvement is causally due to irrotational, curl--free kinematics rather than profile shaping: adding modest vorticity collapses the \(E_{+}/E_{-}\) balance and drives large increases in the negative--energy magnitude \(E_{-}\). We quantify the negative--energy requirement via a \emph{slice-integrated} (on $\Sigma_t$) negative--energy volume and tabulate global measures. A far--field extrapolation to \(R\!\to\!\infty\) yields tail--corrected totals \(|E_+-E_-|/(E_++E_-)=0.04\%\). Thus the net \emph{proper} energy is consistent with zero to four decimal places (in fractional units). We also establish regularity at \(r=0\) for the irrotational construction.
}


%

\keywords{Warp drive, Energy condition, Hawking–Ellis classification, General Relativity}



\maketitle
\newpage
\tableofcontents

\section{Introduction}
\label{sec:intro}

The theoretical investigation of warp–drive spacetimes in General Relativity began with Alcubierre’s proposal~\cite{Alcubierre_1994}, which produces apparent faster-than-light travel by contracting space ahead of a “bubble’’ and expanding it behind. While mathematically consistent, the construction requires exotic matter violating the weak and null energy conditions (WEC/NEC), challenging physical plausibility.

Natário~\cite{Natário} introduced zero-expansion warp drives, characterized (under unit lapse and time-independent slicing) by vanishing spatial divergence of the shift, $K=0\Leftrightarrow D_i\beta^i=0$. These models still exhibit NEC violations.

Kinematically irrotational drives (vanishing spatial vorticity, $\boldsymbol{\omega}=0$) were analyzed by Santiago et al.~\cite{SantiagoVisser}, who showed that for unit lapse ($\alpha=1$) and flat, time-independent slices the momentum density vanishes and the stress–energy is Hawking–Ellis Type~I; they also derived bounds on local energy-condition deficits. Schuster et al.~\cite{SchusterSantiagoVisser2023} complemented this with global ADM-mass analyses, identifying generic pathologies for typical bubbles. Other approaches include discontinuous or distributional set-ups (e.g., Lentz’s phase-engineered profiles~\cite{Lentz_2021} and thin-shell models~\cite{Huey2023}), which entail junction terms; a “tilted’’ warp paradigm with nonzero vorticity and acceleration~\cite{Barzegar2024}; and broader explorations of backgrounds, numerics, and matter models~\cite{garattini2024black,garattini2025positive,Fuchs2024,Helmerich2024,AbellánVasilev,Abellán2023lapse,AbellánVasilevGRG}.

Despite these advances, the literature has lacked a \emph{fully explicit}, continuous, analytically derived construction of a kinematically irrotational warp drive with a closed-form scalar potential and continuous shift components. \textit{Critically, this gap in specific solutions is compounded by a methodological limitation:} many prior studies emphasize observer- or slice-dependent energy diagnostics rather than invariants derived from the stress–energy eigenstructure.

\textbf{This work.} We present, to our knowledge, the first explicit scalar potential $\Phi(r,\theta,t)$ that realizes a continuous, globally regular, kinematically irrotational warp-drive flow with correct asymptotics, satisfying $\boldsymbol{\omega}=*\,\dd\boldsymbol{\beta}^{\flat}=0$. Irrotationality enforces $\boldsymbol{\beta}^{\flat}=-\dd\Phi$ on each slice and, in the flat static triads adopted here, implies $G_{\hat 0\hat i}=0$; consequently $G^{\hat\mu}{}_{\ \hat\nu}$ is block-diagonal with a unique timelike eigenvalue (the proper energy density), cf.\ Prop.~\ref{prop:irrot_class}. This algebraic structure accounts for the observed reduction in local NEC/WEC deficits relative to vortical flows. Our results are consistent with and extend prior analyses—recovering the Santiago et al.\ Type-I mechanism and complementing the ADM-mass considerations of Schuster et al.—while adding an \emph{explicit} construction together with slice-integrated global energy measures (Table~\ref{tab:global_measures}).

\paragraph*{Roadmap and contributions.}
We first specify the coordinate system (Fig.~\ref{Fig1_SpherCoord}) and baseline notation; full units and extended conventions are in Appendix~\ref{Notation}. We then set up the invariant eigenstructure framework for the Einstein tensor—used to extract the proper energy density and Hawking–Ellis type—and summarize the high-precision Cartan–tetrad pipeline employed throughout. Next, we present the explicit scalar-potential construction of the kinematically irrotational warp drive and its kinematic rationale ($\dd\boldsymbol{\beta}^\flat=0 \Rightarrow G_{\hat0\hat i}=0$), followed by algebraic and energy-condition maps, whole-slice global budgets, and head-to-head comparisons with the Alcubierre and Natário models.  The main contributions are:
(i) an explicit, continuous analytic construction of a kinematically irrotational warp drive (closed-form $\Phi$ and continuous $\beta^i$);
(ii) a high-precision Cartan–tetrad pipeline that extracts invariant proper energy density via the timelike eigenvalue and cross-checks it against a Hessian invariant;
(iii) full-spacetime Hawking–Ellis classification maps and quantitative energy-condition diagnostics, showing substantially reduced local NEC/WEC deficits relative to vortical flows;
(iv) global, slice-integrated energy budgets with tail modeling, indicating near-cancellation of proper energy (baseline window and tail-extrapolated totals);
(v) a consolidated algebraic type picture under the stated assumptions ($\alpha=1$, flat, time-independent slices): the irrotational geometry is globally Type~I in this setup; Alcubierre and Natário exhibit a mix of Types~I and IV, with no Type~II/III regions observed.

\begin{figure}[htbp]
    \centering
    \begin{subfigure}[b]{0.7\textwidth}
        \includegraphics[width=\textwidth]{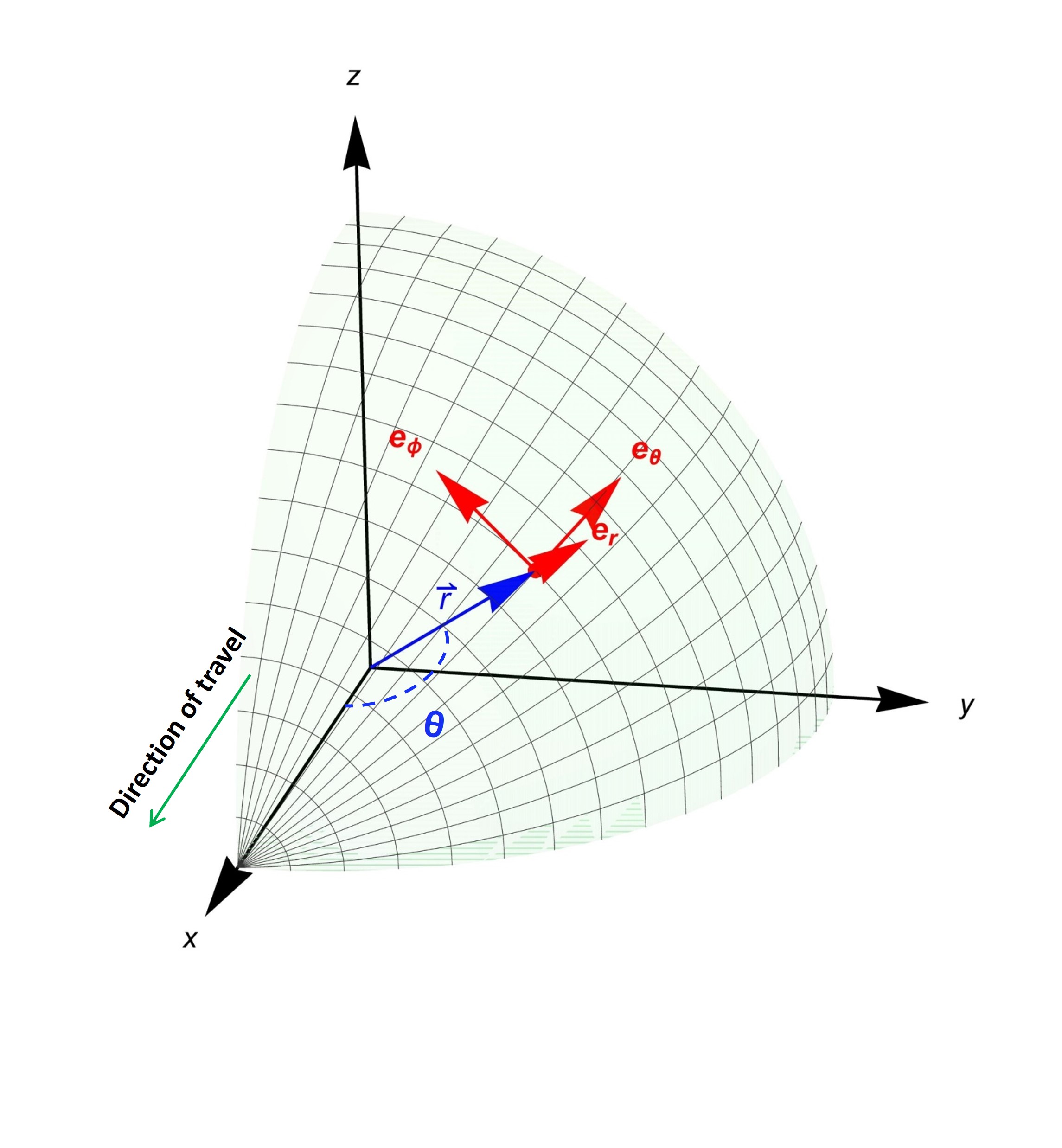}
       \caption{$\mathit{P}$ at $(r=\rho,\;0<\theta<\pi/2,\;0<\phi<\pi/2)$, with $\theta$ measured from $+x$ and $\phi$ in the $yz$ (equatorial) plane from $+y$ toward $+z$}

        \label{fig:1a}
    \end{subfigure}
    \hfill
    \begin{subfigure}[b]{0.4\textwidth}
        \includegraphics[width=\textwidth]{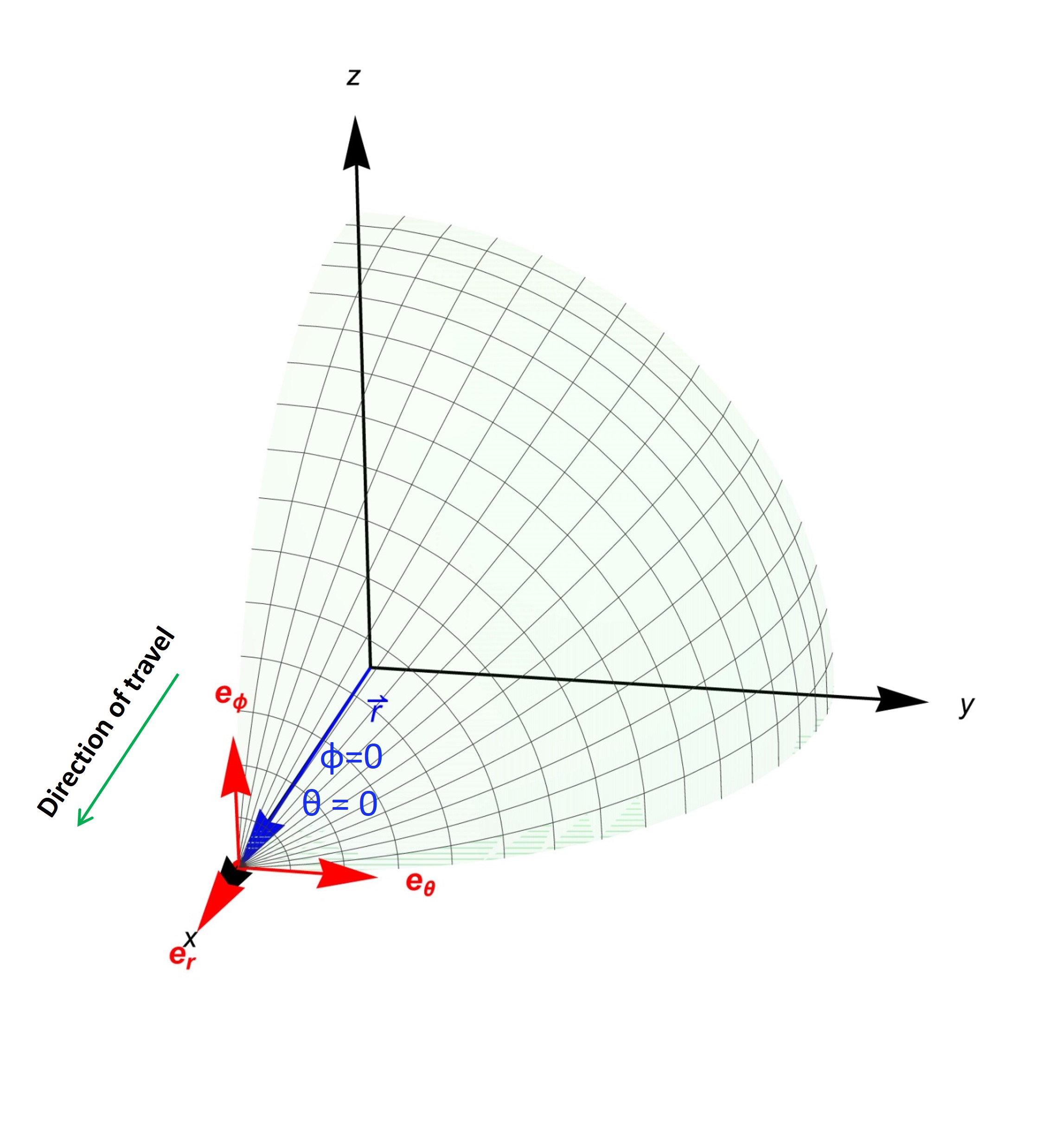}
        \caption{$\mathit{P}$ at $(r=\rho,\,\theta=0,\,\phi=0)$}
        \label{fig:1b}
    \end{subfigure}
    \hfill
    \begin{subfigure}[b]{0.4\textwidth}
        \includegraphics[width=\textwidth]{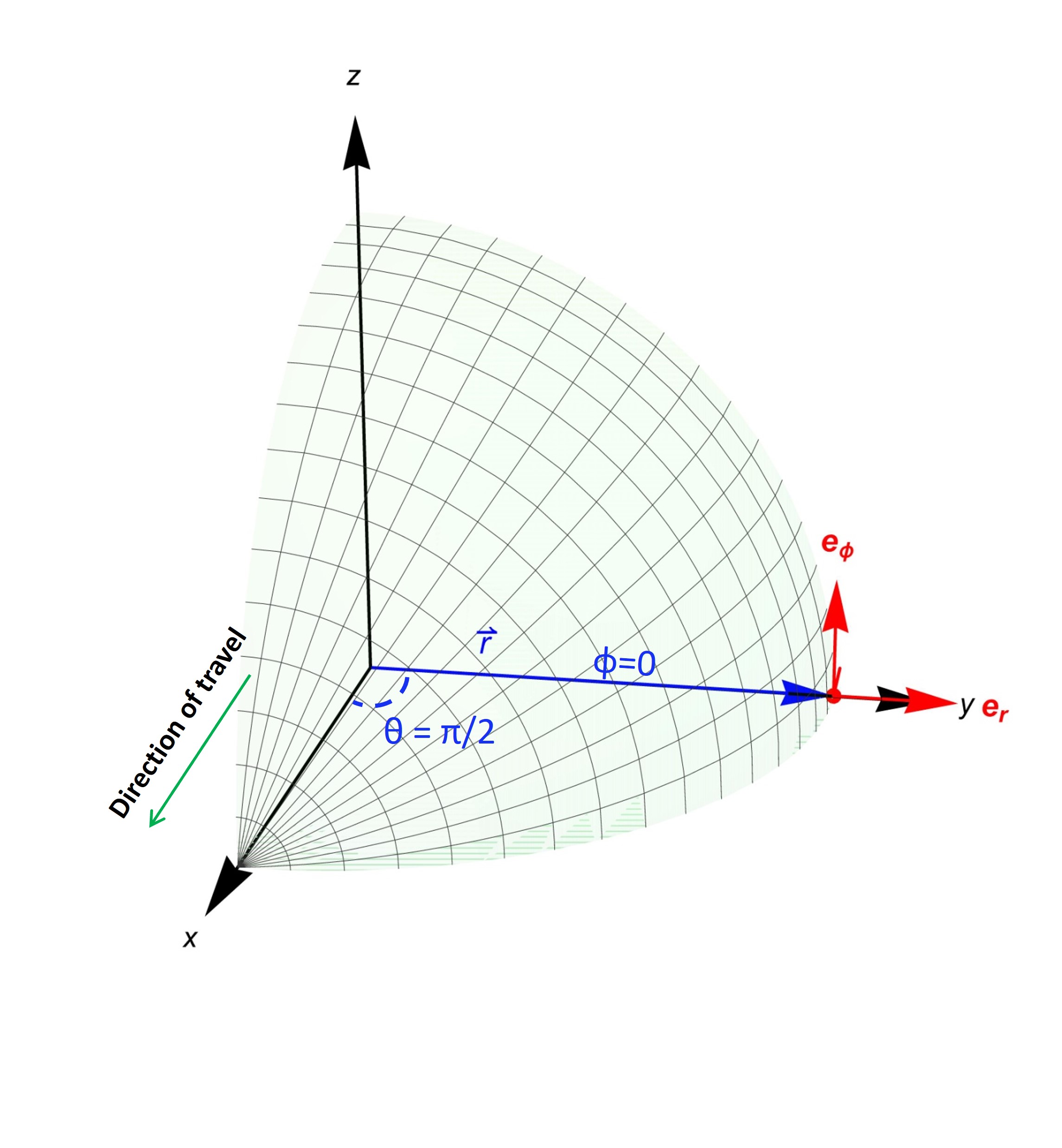}
        \caption{$\mathit{P}$ at $(r=\rho,\,\theta=\pi/2,\,\phi=0)$}
        \label{fig:1c}
    \end{subfigure}
    \caption{Natário's spherical coordinate system $(r, \theta, \phi)$ for the warp drive moving in the $x$ direction.}
    \label{Fig1_SpherCoord}
\end{figure}

\section{Timelike eigenvalue definition of proper energy density and rest frame}\label{BBsecEnergyDensity}

\noindent\emph{Quick map.} For a compact signature/tensor-form crosswalk see Table~\ref{tab:compact_summary}; for a side‐by‐side comparison of algebraic features relevant here (diagonalizability, eigenvalue content, existence of a real rest frame, and characteristic energy‐condition behavior), see Table~\ref{tab:type_comparison}.

\medskip

\noindent\textbf{Convention and use.}
We adopt signature $(-,+,+,+)$ and define the proper energy density via the timelike eigenvalue of the mixed stress--energy tensor $T^\mu{}_\nu$ in its principal orthonormal frame \cite{Lichnerowicz_1967,synge1966,PlebanskiKrasinski,LandauLifshitz,hawking_ellis_1973}:
\begin{equation}
T^{(\mu)}{}_{(\nu)} \;=\; \mathrm{diag}\!\big(-\varrho_p,\; p_1,\; p_2,\; p_3\big).
\label{eq:principal_frame}
\end{equation}
For comparison with prior literature one may also state the standard eigenvalue problem in the Lichnerowicz \cite{Lichnerowicz_1967} convention (often written for $(+,-,-,-)$):
\begin{equation}
T^\mu{}_\nu\, u^\nu \;=\; \varrho_p\, u^\mu ,
\label{LichstdEigProb}
\end{equation}
while in our adopted $(-,+,+,+)$ signature this reads
\begin{equation}
T^\mu{}_\nu\, u^\nu \;=\; -\,\varrho_p\, u^\mu .
\label{stdEigProb}
\end{equation}
Equivalently, the covariant non-standard form becomes the generalized eigenproblem
\begin{equation}
T_{\mu\nu}\,u^\nu \;=\; -\,\varrho_p\, g_{\mu\nu}\,u^\nu .
\label{genEigProb}
\end{equation}
When a real timelike eigenvector exists the tensor is Hawking–Ellis Type~I and admits a unique rest frame; in Type~IV a complex conjugate pair appears and no real rest frame exists. In the principal frame the null and weak energy conditions take the standard forms (for all $i$):
\begin{equation}
\mathrm{NEC:}\qquad \varrho_p + p_i \;\ge\; 0,
\label{eq:NEC}
\end{equation}
\begin{equation}
\mathrm{WEC:}\qquad \varrho_p \;\ge\; 0\quad \text{and}\quad \varrho_p + p_i \;\ge\; 0.
\label{eq:WEC}
\end{equation}

\begin{table}[htbp]
\centering
\begin{tabular}{|c|c|c|c|}
\hline
\thead{Tensor Form} & \thead{Signature} & \thead{Proper Energy \\ Density ($\varrho_p$)} & \thead{Principal \\ Pressures ($p_i$)} \\
\hline\hline
Mixed ($T^{(\mu)}{}_{(\nu)}$) & $(-,+,+,+)$ & $-\;T^{(0)}{}_{(0)}$ & $T^{(i)}{}_{(i)}$ \\
\hline
Mixed ($T^{(\mu)}{}_{(\nu)}$) & $(+,-,-,-)$ & $T^{(0)}{}_{(0)}$ & $-\;T^{(i)}{}_{(i)}$ \\
\hline\hline
Covariant ($T_{(\hat{\mu})(\hat{\nu})}$) & $(-,+,+,+)$ & $T_{(\hat{0})(\hat{0})}$ & $T_{(\hat{i})(\hat{i})}$ \\
\hline
Covariant ($T_{(\hat{\mu})(\hat{\nu})}$) & $(+,-,-,-)$ & $T_{(\hat{0})(\hat{0})}$ & $T_{(\hat{i})(\hat{i})}$ \\
\hline\hline
Contravariant ($T^{(\hat{\mu})(\hat{\nu})}$) & $(-,+,+,+)$ & $T^{(\hat{0})(\hat{0})}$ & $T^{(\hat{i})(\hat{i})}$ \\
\hline
Contravariant ($T^{(\hat{\mu})(\hat{\nu})}$) & $(+,-,-,-)$ & $T^{(\hat{0})(\hat{0})}$ & $T^{(\hat{i})(\hat{i})}$ \\
\hline
\end{tabular}
\caption{Proper energy density \(\varrho_p\) and principal pressures \(p_i\) in the principal orthonormal frame, showing their relation to diagonal tensor components for various tensor forms and metric signatures (\emph{assuming real eigenvalues exist}).}
\label{tab:compact_summary}
\end{table}

\begin{table}[htbp]
\caption{Algebraic Comparison of Hawking–Ellis Type I and Type IV Stress-Energy Tensors.}
\label{tab:type_comparison}
\centering
\begin{tabular*}{\textwidth}{@{\extracolsep{\fill}} | l | p{0.35\textwidth} | p{0.35\textwidth} |} 
\hline
\textbf{Property} & \textbf{Type I} (e.g., Classical Fluids \newline and Non-null Fields) & \textbf{Type IV} (e.g., Alcubierre \& \newline Natário-Zero-Exp. Warp Drives) \\
\hline \hline
\begin{tabular}[t]{@{}l@{}}\textbf{Eigenvalues}\\ \textbf{of $T^\mu\,_\nu$}\end{tabular} & 
ONE real timelike ($-\varrho_p$),\newline THREE real spacelike ($p_i$) & 
ONE pair complex conjugate, \newline TWO real spacelike
\\ \hline
\begin{tabular}[t]{@{}l@{}}\textbf{Diagonaliz.}\\ \textbf{\& Rest}\\ \textbf{Frame}\end{tabular} &
Diagonalizable over $\mathbb{R}$. \newline Admits a unique, real orthonormal tetrad frame (rest frame) where $T^{\hat{\mu}}\,_{\hat{\nu}} = \text{diag}(-\varrho_p, p_1, p_2, p_3)$. &
Diagonalizable only over $\mathbb{C}$. \newline Does not admit a real orthonormal tetrad rest frame; eigenvectors associated with complex eigenvalues are complex conjugates.
\\ \hline
\begin{tabular}[t]{@{}l@{}}\textbf{Physical}\\ \textbf{Nature}\\ \textbf{\& Energy}\\ \textbf{Conditions}\end{tabular} &
Typically represents classical matter/energy distributions. \newline Can satisfy standard energy conditions (e.g., NEC, WEC). &
Represents non-classical, \newline exotic distributions. \newline Characteristically violates all\newline standard energy conditions (e.g.,\newline NEC, WEC)~\cite{Martin-MorunoBook, Maeda2020}.\newline  Requires non-static spacetime \newline in $f(R)$ gravity~\cite{Maeda2021}. \\
\hline
\end{tabular*}
\end{table}

\section{Warp-drive spacetimes in a spherical coordinate tetrad framework}\label{secWarpDrive}

In this section, we systematically analyze warp-drive spacetimes within a spherical coordinate orthonormal tetrad framework, providing a clear and physically insightful decomposition of their kinematic and dynamic properties. We begin by discussing the generalized warp-drive spacetime, establishing a unified mathematical setting suitable for encompassing various specific warp-drive geometries and facilitating comparisons among them. Subsequently, we focus on the well-known Alcubierre warp drive, explicitly recasting its geometry and stress-energy content into spherical orthonormal tetrad components, thereby clarifying its structure and physical interpretation. We then explore Natário's kinematically divergenceless (zero expansion) warp drive, highlighting its distinct geometric features. Lastly, we address kinematically irrotational warp drives, examining the rationale for considering vorticity-free conditions in warp-drive models, and analyzing their physical consequences within our tetrad formalism.  For a compact side-by-side comparison of the three families we study, including defining kinematic constraints and shift components, see Table~\ref{tab:focused_comparison_tabularstar}.

\subsection{The generalized warp drive}\label{secGeneralized}

In this subsection, we introduce Natário's generalized warp-drive spacetime within the 3+1 ADM decomposition framework. We establish the mathematical foundations necessary to describe arbitrary warp-drive configurations, emphasizing the role of the lapse and shift functions in characterizing the geometry and evolution of spacetime. We then define the line element of the generalized warp drive, clearly delineating the distinction between coordinate and tetrad-based descriptions. This sets the stage for comparing specialized warp-drive models, such as Alcubierre's original spacetime, Natário's zero-expansion configuration, and a new irrotational warp drive which will be discussed in subsequent subsections.

Natário \cite{Natário} characterized a generalized warp-drive spacetime, whose geometry is associated with a vector field $\mathbf{X}$, using the 3+1 Arnowitt-Deser-Misner (ADM) formalism (Appendix~\ref{Notation}) \cite{Misner1973,Gourgoulhon,AlcubierreBook,baumgarte_shapiro_2010}. In this framework, spacetime is foliated by a sequence of spacelike hypersurfaces $\Sigma_t$. The \textit{Eulerian observers} are defined by the condition that their 4-velocity $n^\alpha$ is the future-pointing unit vector field normal to these hypersurfaces. The vector field $\mathbf{X}$ generating the warp effect is identified with the ADM shift vector $\boldsymbol{\beta}$, as discussed below \eqref{eq:shift_vector}. A key consideration is the causal structure: while such spacetimes may satisfy global hyperbolicity for subliminal speeds, superluminal warp drives generally exhibit Cauchy horizons \cite{Finazzi}, restricting the \textit{global} applicability of the 3+1 decomposition, although it typically remains valid \textit{locally}.

We use the standard symbols $\alpha$ (lapse) and $\boldsymbol{\beta}$ (shift), but adopt the \emph{minus-sign} shift convention in the line element, i.e.\ $(dx^{i}-\beta^{i}dt)$ rather than $(dx^{i}+\beta^{i}dt)$. Relative to Gourgoulhon \cite{Gourgoulhon} and Alcubierre \cite{AlcubierreBook}, this corresponds to $\beta^{\text{(here)}}_{\,i}=-\,\beta^{\text{(there)}}_{\,i}$. For $\alpha=1$ the slice normal has $n_\mu=(-1,0,0,0)$ and $n^\mu=g^{\mu\nu}n_\nu=(1,\beta^i)$ (since $g^{0i}=-\beta^i$ in this convention), and the time vector decomposes as $\partial_t=\alpha n-\beta^{i}\partial_{i}=n-\beta^{i}\partial_{i}$. (Misner et al. \cite{Misner1973} use $N$ and $N^{i}$ for lapse and shift.)

Following Natário \cite{Natário} and common practice for the warp-drive models considered here, we set the lapse function $\alpha=1$, simplifying the relation between coordinate time $t$ and proper time $\tau$ along the normal direction ($d\tau = \alpha \, dt = dt$). 

The shift vector $\boldsymbol{\beta}$ measures how spatial coordinates are displaced along $\boldsymbol{\beta}$ between adjacent hypersurfaces $\Sigma_t$ relative to the normal vector~\cite{Gourgoulhon}. The sign of its components determines the orientation (e.g., positive or negative $x$-direction) of this coordinate shift, representing a local velocity field relative to Eulerian observers at rest in the slicing. Following convention~\cite{SantiagoVisser}, we relate the generating vector field $\mathbf{X}$ directly to the shift vector as
\begin{equation}
    \boldsymbol{\beta}=\mathbf{X} .
    \label{eq:shift_vector}
\end{equation}

This choice corresponds to a \emph{flow} interpretation. The common negative sign convention $\beta^i = X^i = -v^i$ often seen in the literature~\cite{SantiagoVisser} typically relates the shift vector $\beta^i$ (local coordinate flow velocity relative to Eulerian observers) to the bubble's coordinate velocity $v^i$ as measured by distant observers, with the sign arising from the model's specific dynamics.

With these choices ($\alpha=1$, $\gamma_{ij}=\delta_{ij}$ for flat spatial slices in Cartesian coordinates), the line element with signature $(-,+,+,+)$ for a \textit{generalized warp-drive spacetime} takes the form \cite{Natário}:
\begin{equation}
\label{eq:lineelementGeneral}
\begin{aligned}
    ds^2 &= -dt^2 + \delta_{ij}\bigl(dx^i - \beta^{i} dt\bigr)\bigl(dx^j - \beta^{j} dt\bigr) \\
         &= -dt^2 + \bigl(dx - \beta^{x} dt\bigr)^2 + \bigl(dy - \beta^{y} dt\bigr)^2 + \bigl(dz - \beta^{z} dt\bigr)^2 .
\end{aligned}
\end{equation}
where $(x^1, x^2, x^3) = (x,y,z)$ are orthogonal Cartesian coordinates, and $\beta^i = (\beta^{x},\beta^{y},\beta^{z})$ are the corresponding Cartesian coordinate components of the shift vector field $\boldsymbol{\beta}$. 

\medskip

To facilitate physical interpretation and analysis, particularly in scenarios with spherical symmetry or specific directional properties, it is advantageous to work with local orthonormal frames (tetrads or vierbein), wherein the metric components are locally Minkowski ($\eta_{\hat{\alpha}\hat{\beta}}$) \cite{Misner1973, Wald, deFelice, PlebanskiKrasinski, Chandrasekhar}. In spherical coordinates $(r, \theta, \phi)$ (defined in Appendix~\ref{Notation}), the spatial unit orthonormal basis vectors $\boldsymbol{e}_{\hat{i}}$ are related to the coordinate basis vectors $\boldsymbol{g}_{i} = \partial/\partial x^i$ by:
\begin{align}
\boldsymbol{e}_{\hat{r}} \equiv \frac{\partial}{\partial r}=\boldsymbol{g}_{r}\; , \;
\boldsymbol{e}_{\hat{\theta}} \equiv \frac{1}{r}\, \frac{\partial}{\partial \theta}= \frac{1}{r}\, \boldsymbol{g}_{\theta}\; , \;
\boldsymbol{e}_{\hat{\phi}} \equiv \frac{1}{r\sin\theta}\, \frac{\partial}{\partial \phi}= \frac{1}{r\sin\theta}\,\boldsymbol{g}_{\phi}\, . \label{eq:tetrad_basis_def} 
\end{align}

Relative to this basis, the shift vector $\boldsymbol{\beta}$ is then expressed as:
\begin{equation}
\boldsymbol{\beta} = \beta^{\hat{r}}\, \boldsymbol{e}_{\hat{r}} + \beta^{\hat{\theta}}\, \boldsymbol{e}_{\hat{\theta}} + \beta^{\hat{\phi}}\, \boldsymbol{e}_{\hat{\phi}} \, ,\label{beta3comp}
\end{equation}
where $\beta^{\hat{i}}$ are the tetrad (physical) components.

We adopt the convention of using carets (hats) for tetrad indices \cite{Misner1973}. The corresponding dual basis one-forms are $\boldsymbol{e}^{\hat{r}} = \dd r$, $\boldsymbol{e}^{\hat{\theta}} = r\,\dd \theta$, $\boldsymbol{e}^{\hat{\phi}} = r \sin\theta\,\dd \phi$. While the radial basis one-form $\boldsymbol{e}^{\hat{r}}$ is exact, the angular one-forms are not closed ($\dd \boldsymbol{e}^{\hat{\theta}} \neq 0, \dd \boldsymbol{e}^{\hat{\phi}} \neq 0$), reflecting the fact that this standard spherical orthonormal basis is non-holonomic, meaning it cannot be integrated to define a coordinate chart on the differentiable manifold. Equivalently, the basis vector fields do not commute, as shown by the Lie bracket: e.g., $[\boldsymbol{e}_{\hat{\theta}}, \boldsymbol{e}_{\hat{\phi}}] = -\frac{\cot\theta}{r}\, \boldsymbol{e}_{\hat{\phi}} \neq 0$ (transporting a vector around an infinitesimal loop in this plane results in a net rotation of the vector).

The components of the shift vector $\boldsymbol{\beta}$ in the coordinate bases $\boldsymbol{g}_{i}$ and orthonormal bases $\boldsymbol{e}_{\hat{i}}$ are related via the definitions in \eqref{eq:tetrad_basis_def}. For instance, $\beta^{\hat{r}} = \beta^r$, $\beta^{\hat{\theta}} = r \beta^\theta$, $\beta^{\hat{\phi}} = r \sin\theta \beta^\phi$. Since the spatial part of the tetrad metric is $\delta_{\hat{i}\hat{j}}$, the distinction between covariant and contravariant spatial tetrad components is trivial ($\beta_{\hat{i}} = \delta_{\hat{i}\hat{j}} \beta^{\hat{j}} = \beta^{\hat{i}}$).

Tetrad orthonormal basis components represent the measurements an observer in free fall (subject only to gravity) would record—that is, the \emph{physical components} in a local Minkowski frame. No single inertial frame can globally encompass the effects of spacetime curvature; however, in a sufficiently small region (a local Lorentz frame), the equivalence principle ensures that Special Relativity applies.

Moreover, although tetrad components are the components of tensors relative to a local orthonormal basis, they do not transform under general coordinate transformations in the same way as coordinate basis components. Instead, when the tetrad basis is changed, the components transform via local Lorentz transformations.

Finally, the tetrad formalism employs the Ricci rotation coefficients (or spin connection), which encode how the tetrad fields rotate relative to the Levi–Civita connection as one moves through spacetime. These coefficients reflect the geometric obstruction to defining a global inertial frame in a curved spacetime.

\subsection{Alcubierre's warp drive in a spherical coordinate orthonormal tetrad}
\label{sec:Alcubierre}

Alcubierre’s solution \cite{Alcubierre_1994} was originally written in
Cartesian $(t,x,y,z)$ coordinates.  Because the geometry is axially symmetric
about the $+x$ axis (direction of travel), it is convenient to adopt
the spherical chart\,
$(t,r,\theta,\phi)$ shown in Fig.~\ref{Fig1_SpherCoord} and to pass to an
orthonormal tetrad
$
  \{\boldsymbol{e}_{\hat{0}},\boldsymbol{e}_{\hat{r}},
    \boldsymbol{e}_{\hat{\theta}},\boldsymbol{e}_{\hat{\phi}}\},
$
with dual one-forms
$
  \{\boldsymbol{e}^{\hat{0}},\boldsymbol{e}^{\hat{r}},
    \boldsymbol{e}^{\hat{\theta}},\boldsymbol{e}^{\hat{\phi}}\}.
$
We work in geometric units with \(c=1\) unless stated otherwise.  We leave Einstein's coupling constant \(\kappa\) explicit.

\paragraph*{Shift vector}

Rotational symmetry implies $\beta^{\hat\phi}=0$; hence
\begin{equation}
  \boldsymbol{\beta}
  =\beta^{\hat r}\,\boldsymbol{e}_{\hat r}
  +\beta^{\hat\theta}\,\boldsymbol{e}_{\hat\theta} \, .
  \label{eq:beta2comp}
\end{equation}
Following Alcubierre we choose the spatial covector
$
  \boldsymbol{\beta}^{\flat}
  =v(t)\,f_{\text{Alc}}(r)\,\dd x
$
with $x=r\cos\theta$.  Using
$\dd x=\cos\theta\,\dd r-r\sin\theta\,\dd\theta$
and the spherical dual basis
$\boldsymbol{e}^{\hat r}=\dd r$, $\boldsymbol{e}^{\hat\theta}=r\,\dd\theta$,
one finds
\begin{equation}
  \boldsymbol{\beta}^{\flat}
  =v(t)\,f_{\text{Alc}}(r)
  \bigl(\cos\theta\,\boldsymbol{e}^{\hat r}
        -\sin\theta\,\boldsymbol{e}^{\hat\theta}\bigr).
  \label{eq:betaflatCoordDiff}
\end{equation}

Because the triad $\{\boldsymbol{e}_{\hat r},\boldsymbol{e}_{\hat\theta},
\boldsymbol{e}_{\hat\phi}\}$ is \emph{spatial} and orthonormal, raising
indices changes only the sign of the time component; therefore
$\beta^{\hat i}=\beta_{\hat i}$ for $i=r,\theta,\phi$:
\begin{equation}
\begin{pmatrix}
\beta^{\hat r}\\[2pt]
\beta^{\hat\theta}\\[2pt]
\beta^{\hat\phi}
\end{pmatrix}
=v(t)\,f_{\text{Alc}}(r)
\begin{pmatrix}
\cos\theta\\[4pt]
-\sin\theta\\[4pt]
0
\end{pmatrix}.
\label{eq:BetaAlc}
\end{equation}

\paragraph*{Form function}

The smooth \emph{top-hat} function
\begin{equation}
  f_{\text{Alc}}(r)
  =\frac{\tanh[\sigma(r+\rho)]-\tanh[\sigma(r-\rho)]}
         {2\,\tanh(\sigma\rho)}
  \label{eq:fAlc} ,
\end{equation}
obeys \(f_{\text{Alc}}(0)=1\) and
$\lim_{r\to\infty}f_{\text{Alc}}(r)=0$.
Its transition layer of width $\simeq \pm 3/\sigma$ is centered on $r\simeq\rho$.
Here $\rho$ is the warp bubble radius and $\sigma^{-1}$ its wall thickness.

\paragraph*{Boundary behaviour (summary)}
Inside the bubble the shift approaches $+v(t)$ along $+x$; far away it decays to zero. Detailed small/large--$r$ limits at $\theta=0$ and $\theta=\pi/2$ are listed in Appendix~\ref{app:boundary-details}.

\subsection{Natário’s kinematically divergence-free warp drive}
\label{sec:Natario}

Natário \cite{Natário} sought a warp-drive spacetime with \emph{vanishing spatial
expansion}. This coincides with zero kinematic divergence $\operatorname{div}\boldsymbol{\beta}=0$ under unit lapse, flat time‑independent 3+1 slicing.  Working on each
constant-time slice $(\mathbb R^{3},\gamma)$ one may write
\begin{equation}
  \operatorname{div}\boldsymbol{\beta}
  =\frac1{\sqrt{\gamma}}\partial_i\!\bigl(\sqrt{\gamma}\,\beta^{i}\bigr)
  =-\delta\boldsymbol{\beta}^{\flat}=0,
\label{eq:NatKinDiv}
\end{equation}
where $\delta$ is the three–dimensional exterior codifferential.
Hence $\boldsymbol{\beta}^{\flat}$ must be \emph{co-exact}:
\begin{equation}
  \boldsymbol{\beta}^{\flat}=*
  \dd\boldsymbol{A}\qquad(\exists\,\text{1-form } \boldsymbol{A}).
\end{equation}
Choosing
\(
  \boldsymbol{A}= -\tfrac12\,f(r)\,r^{2}\sin^{2}\theta\,\dd\phi
\)
and multiplying by an arbitrary bubble velocity $v(t)$ gives
\begin{equation}
  \boldsymbol{\beta}^{\flat}
  =-v(t)\,*\dd\!
     \bigl[f(r)\,\tfrac12 r^{2}\sin^{2}\theta\,\dd\phi\bigr].
  \label{eq:beta_flat_def_natario}
\end{equation}

A short calculation produces covariant components
\begin{align}
  \beta_{\hat r}&=-v(t)\,f(r)\cos\theta,\\
  \beta_{\hat\theta}&=v(t)\Bigl(f(r)+\tfrac r2 f'(r)\Bigr)\sin\theta,\\
  \beta_{\hat\phi}&=0,
\end{align}
so that in our spatial orthonormal triad
\begin{equation}
\!\!
\begin{pmatrix}
\beta^{\hat r}\\[2pt]
\beta^{\hat\theta}\\[2pt]
\beta^{\hat\phi}
\end{pmatrix}
=
\begin{pmatrix}
-\,v(t)\,f(r)\cos\theta\\[4pt]
\;v(t)\left(f(r)+\tfrac r2 f'(r)\right)\sin\theta\\[6pt]
0
\end{pmatrix}.
\label{eq:NatarioShiftComponents}
\end{equation}

\paragraph*{Form function for Natário}

To facilitate comparison with Alcubierre we define
\begin{equation}
  f(r):=1-f_{\text{Alc}}(r),
  \label{eq:fNatario}
\end{equation}
so that $f(0)=0$ and $\lim_{r\to\infty}f(r)=1$.  With this choice the
Eulerian observer at the bubble center ($r=0$) sees distant matter move
with speed $-v(t)$, opposite to Alcubierre’s convention.

\paragraph*{Boundary behaviour (summary)}
The center of the bubble is at rest while the distant universe drifts with velocity $-v(t)$, consistent with Nat\'ario’s convention. Detailed limits appear in Appendix~\ref{app:boundary-details}.

\paragraph*{Kinematic constraint and global energy conditions}
Natário’s divergence–free ansatz
($K\equiv\gamma^{ij}K_{ij}=0$, equivalently
$\operatorname{div}_{\Sigma_t}\boldsymbol{\beta}=0$ for unit lapse and
time‑independent spatial metric)
renders every time slice $\Sigma_t$ \emph{instantaneously} isochoric.
Yet if the driving function $v(t)$ varies in time the resulting shift
field $\beta^i(t,x^{k})$ makes the spacetime \emph{non‑stationary}.
Although $K=0$ on each slice, the individual components
\begin{equation}
K_{ij}\;=\;\tfrac12\!\bigl(D_i\beta_j+D_j\beta_i\bigr)
\label{eq:extrinsic_curvature}
\end{equation}
acquire explicit $t$–dependence, and so do the curvature scalars that
enter the mixed Einstein tensor
$G^{\mu}{}_{\nu}$.

The averaged null energy condition (ANEC) probes the \emph{integrated}
null projection,
\begin{equation}
    \mathcal{I}_{\mathrm{ANEC}}
    \;=\;
    \int_\gamma
        T_{\mu\nu}k^\mu k^\nu\,
        \mathrm{d}\lambda
    =
    \frac1{\kappa}
    \int_\gamma
        G_{\mu\nu}k^\mu k^\nu\,
        \mathrm{d}\lambda ,
\label{eq:ANEC_integral}
\end{equation}
taken along complete null geodesics~$k^\mu$.
Because such geodesics thread a sequence of slices that may evolve in time,
a quantity that vanishes \emph{instantaneously} on each
$\Sigma_t$—for example the local expansion $K$—may still yield a non-zero
contribution to the integral, with sign depending on the detailed evolution.
Thus, imposing $K=0$ on each slice does not by itself establish compliance with
averaged energy conditions. In practice, one should assess the full
four-dimensional evolution of $\bigl(\alpha,\beta^i,\gamma_{ij}\bigr)$ along the
null geodesics when evaluating ANEC (and related \emph{averaged} conditions).

\smallskip

\begin{proposition}[Trace--energy condition for the Nat\'ario warp drive]
\label{prop:TEC_Natario}
\textbf{Assumptions:} $\alpha\equiv 1$, flat \emph{and static} spatial slice ($R^{(3)}=0$), vanishing expansion $K=0$, and $\Lie_{n}K=0$ \emph{($K=0$ fixes only the slice value, whereas $\Lie_{n}K$ is its normal–time derivative and can be nonzero).}

Then the Einstein--tensor trace satisfies $G\le 0$ everywhere; equivalently, the trace--energy condition $T\le 0$ holds throughout the spacetime.
\end{proposition}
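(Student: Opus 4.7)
\medskip

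\noindent\textbf{Proof plan.} The plan is to translate the statement into a claim about the $4$D Ricci scalar via $G := g^{\mu\nu}G_{\mu\nu} = -R$, obtain an explicit $3{+}1$ expression for $R$ under $\alpha\equiv 1$, and then read off the sign after imposing the hypotheses. Since Einstein's equation gives $T = G/\kappa$ with $\kappa>0$, the equivalence $G\le 0 \Leftrightarrow T\le 0$ is immediate, and the work reduces entirely to showing $R\ge 0$.

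First, I would set conventions ($n^\mu$ future-directed unit normal, $K_{ij} = -\nabla_{i}n_{j}$ on $\Sigma_t$, $K=\gamma^{ij}K_{ij}$) and note that $\alpha\equiv 1$ kills the Eulerian acceleration, $a_\mu = D_\mu\ln\alpha = 0$. I would then assemble the two standard identities of the ADM decomposition: the Gauss--contracted Hamiltonian constraint
\begin{equation}
2\,G_{\mu\nu}n^\mu n^\nu \;=\; R^{(3)} + K^2 - K_{ij}K^{ij},
\end{equation}
and the Raychaudhuri--type identity for the normal Ricci projection with unit lapse,
\begin{equation}
R_{\mu\nu}n^\mu n^\nu \;=\; -\,\Lie_{n} K \;-\; K_{ij}K^{ij}.
\end{equation}
Using $G_{\mu\nu}n^\mu n^\nu = R_{\mu\nu}n^\mu n^\nu + \tfrac12 R$ to eliminate $G_{\mu\nu}n^\mu n^\nu$ yields
\begin{equation}
R \;=\; R^{(3)} + K^2 + K_{ij}K^{ij} + 2\,\Lie_{n} K .
\end{equation}

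Next I would substitute the four hypotheses ($\alpha\equiv 1$, $R^{(3)}=0$, $K=0$, $\Lie_{n}K=0$) to collapse this to $R = K_{ij}K^{ij}$. Because the spatial slices carry the positive--definite flat metric $\gamma_{ij}=\delta_{ij}$, the quantity $K_{ij}K^{ij}$ is literally a sum of squares in the Cartesian or orthonormal triad, hence $\ge 0$ pointwise. Therefore $G = -R = -K_{ij}K^{ij} \le 0$, and $T = G/\kappa \le 0$ follows globally on the spacetime region where the hypotheses hold. For Natário's construction $\beta_{\hat i}$ from \eqref{eq:NatarioShiftComponents} one may additionally note that $K=0$ holds \emph{identically in $t$}, so $\Lie_{n}K = \partial_t K + \beta^i\partial_i K = 0$ is automatic, matching the parenthetical caveat in the statement.

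The only real obstacle is bookkeeping: choosing a sign convention for $K_{ij}$ and orientation of $n^\mu$ once and verifying that both the Hamiltonian constraint and the $\Lie_{n}K$ identity are written consistently in that convention (different textbook conventions flip the sign of $K$ and of the $\pm K_{ij}K^{ij}$ terms, but the combination appearing in $R$ is convention--invariant, as it must be). Once the conventions are locked, the proof is a one--line assembly plus the positivity of $K_{ij}K^{ij}$ on a Riemannian slice, so no further analytic input---such as explicit form--function derivatives or $r\to 0$ regularity---is needed here.
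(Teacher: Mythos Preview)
Your proposal is correct and follows essentially the same route as the paper: both reduce the claim to the Gauss--Codazzi scalar identity, impose the hypotheses to obtain $R = K_{ij}K^{ij}\ge 0$, and conclude $G=-R\le 0$. The only cosmetic difference is that you derive the identity from the Hamiltonian constraint plus the normal Ricci projection, whereas the paper quotes it directly; your added remark that $\Lie_n K=0$ is automatic for Nat\'ario's shift (since $K\equiv 0$ identically in $t$) is a useful observation not spelled out in the paper's proof.
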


\begin{proof}
Work in a $3{+}1$ decomposition with $(-,+,+,+)$ signature, $\gamma_{ij}\simeq\mathbb{R}^{3}$ and $K_{ij}=-\tfrac12\Lie_{n}\gamma_{ij}$.
The Gauss--Codazzi scalar identity gives
\begin{equation}
{}^{(4)}R
= {}^{(3)}R + K^{2} + K_{ij}K^{ij}
-\frac{2}{\alpha}\,\Lie_{n}K
-\frac{2}{\alpha}\,D_iD^{\,i}\alpha .
\label{eq:GaussCodazziScalar}
\end{equation}
With $R^{(3)}=0$, $K=0$, $D_i\alpha=0$ and $\Lie_{n}K=0$, we obtain $R^{(4)}=K_{ij}K^{ij}\ge0$.
Since $G=-R^{(4)}$, one has $G\le0$, and thus $T=G/\kappa\le0$.
\end{proof}

\subsection{Kinematically irrotational warp drives}
\label{sec:Irrotational}

We now construct a warp-drive spacetime whose \emph{shift} is irrotational in the
\((3+1)\) sense, i.e.\ whose spatial vorticity tensor $\omega_{ij}
  :=\tfrac12\bigl(D_{j}\beta_{i}-D_{i}\beta_{j}\bigr)$ vanishes on every constant-time slice.  The discussion is organized in
two parts: first a justification for adopting a \emph{kinematic} (purely
geometric) definition of vorticity, in preference to fluid-dynamical
potential-flow ideas; second the actual derivation of an irrotational
shift vector.

\subsubsection{Justification for a kinematic–vorticity approach}

In warp geometries the “flow’’ is the geometric shift $\boldsymbol{\beta}$, not a material four–velocity, so fluid–enthalpy analogies do not apply. Under our slice assumptions (unit lapse $\alpha\equiv1$, flat, time-independent spatial slices), imposing $\dd\boldsymbol{\beta}^{\flat}=0$ (i.e., $\boldsymbol{\beta}^{\flat}=-\dd\Phi$) sets $G_{\hat0\hat i}=0$ (see Eq.~(47)) and block-diagonalizes $G^{\hat\mu}{}_{\ \hat\nu}$ in the Eulerian tetrad, yielding a unique timelike eigenvector and Hawking–Ellis Type~I everywhere for our construction (Prop.~\ref{prop:irrot_class}). This algebraic mechanism—eliminating momentum density and the associated Type~IV patches—is the reason to enforce \emph{kinematic} irrotationality.

A fixed–smoothing ablation that adds only a vortical component (with $g(r)$, masks, and tolerances held fixed) shows the causal role: the global budgets degrade sharply (e.g., $E_-$ grows by $>500\times$ at $\eta=1$), proving the improved NEC/WEC profile is due to curl-free kinematics rather than profile smoothing (Appendix~\ref{app:ablation}, Table~\ref{tab:abl_sweep}).

We deliberately \emph{do not} also impose Natário’s $\operatorname{div}\boldsymbol{\beta}=0$ (Sec.~\ref{sec:Natario}). Enforcing both $\dd\boldsymbol{\beta}^{\flat}=0$ and $\operatorname{div}\boldsymbol{\beta}=0$ would force a harmonic potential (\(\nabla^2\Phi=0\)), severely restricting localized bubbles; allowing $\operatorname{div}\boldsymbol{\beta}\neq0$ enlarges the design space and, in our numerics, reduces negative-energy magnitudes.

In summary, within these slice assumptions, irrotationality is a \emph{sufficient} and conceptually clean algebraic constraint for achieving global Type~I; we do not claim it is necessary in general.

\subsubsection{Kinematically irrotational warp drive}\label{secIrrotational}

In this section, we construct an irrotational warp-drive spacetime by defining a velocity field using an orthonormal tetrad framework in spherical coordinates. The tetrad components are chosen to exploit the continuous rotational symmetry around the $x$ direction of travel, following a similar approach to that used by Natário in his zero-expansion warp-drive model~\cite{Natário}.

In General Relativity, the vorticity characterizes the local rotation of the congruence of timelike worldlines (paths of observers) in spacetime. On a 4-D Lorentzian pseudo-Riemannian manifold \((M^4,g)\), the spacetime vorticity tensor $\omega^{(4)}_{\mu\nu}$ characterizes dynamic rotation associated with a 4-velocity field $u^\mu$. In the 3+1 decomposition, we focus on the kinematic Euclidean (\(\mathbb{R}^3\),$\gamma$) spatial vorticity tensor on each slice, defined using the spatial covariant derivative $D_i$ acting on the shift vector $\boldsymbol{\beta}$:
\begin{equation}
\omega_{ij} = D_{[i} \beta_{j]} = \frac{1}{2} \left( D_i \beta_j - D_j \beta_i \right). \label{eq:vorticity_tensor} 
\end{equation}

The \emph{physical spacetime vorticity} of a timelike congruence with 4–velocity \(u^\mu\) is the spatially projected antisymmetric part
$\omega^{(4)}_{\mu\nu} \;=\; h_{\mu}{}^{\alpha} h_{\nu}{}^{\beta}\,\nabla_{[\alpha}u_{\beta]}, 
\;
h_{\mu\nu}=g_{\mu\nu}+u_\mu u_\nu.$
In exterior–calculus form this reads
$
\omega^{(4)} =(1/2)(\dd_{4} u^\flat \;-\; u^\flat \wedge a^\flat) ,
\;
a^\flat:= (u\!\cdot\!\nabla)u_\mu\,\dd x^\mu,$.

By contrast, the \emph{kinematic (purely spatial) vorticity} of the shift on each slice \(\Sigma_t\) is
\begin{equation}
\omega \;=\; \dd \boldsymbol{\beta}^\flat,
\label{eq:vorticity_form}
\end{equation}
where \(\dd\) is the 3D exterior derivative. Our irrotationality condition is \(\omega=0\), which (on simply connected slices) implies \(\boldsymbol{\beta}^\flat=-\dd\Phi\). This constraint involves only spatial derivatives; therefore \(\omega=0\) does \emph{not} imply \(\omega^{(4)}=0\) if \(\boldsymbol{\beta}\) depends on time, because the acceleration/tilt term \(u^\flat\wedge a^\flat\) can remain nonzero.

In the \( 3+1 \) decomposition formalism with lapse function \( \alpha = 1 \), the coordinate time \( t \) measures proper time along the worldlines of observers moving orthogonally to the spatial hypersurfaces (normal observers). Consequently, the lapse function does not contribute to the vorticity, and the dynamics of the flow are entirely encoded in the behavior of the shift vector \( \boldsymbol{\beta} \).

Computing the spatial exterior derivative $\dd \boldsymbol{\beta}^{\flat}$:
\begin{align}
\omega &= \left( \dd \beta_{\hat{r}} \wedge \boldsymbol{e}^{\hat{r}} \right) + \left( \dd \beta_{\hat{\theta}} \wedge \boldsymbol{e}^{\hat{\theta}} + \beta_{\hat{\theta}}\, \dd \boldsymbol{e}^{\hat{\theta}} \right) + \left( \dd \beta_{\hat{\phi}} \wedge \boldsymbol{e}^{\hat{\phi}} + \beta_{\hat{\phi}}\, \dd \boldsymbol{e}^{\hat{\phi}} \right).
\end{align}

The spatial exterior derivatives of the shift vector components \(\beta_{\hat{r}}, \beta_{\hat{\theta}}, \beta_{\hat{\phi}}\) in the tetrad (vierbein) spherical components are:
\begin{subequations}
\label{eq:dd_beta_all} 
\begin{align}
\dd \beta_{\hat{r}} &=\frac{\partial \beta_{\hat{r}}}{\partial r}\, \boldsymbol{e}^{\hat{r}} + \frac{1}{r}\, \frac{\partial \beta_{\hat{r}}}{\partial \theta}\, \boldsymbol{e}^{\hat{\theta}} + \frac{1}{r \sin\theta}\, \frac{\partial \beta_{\hat{r}}}{\partial \phi}\, \boldsymbol{e}^{\hat{\phi}}, \\[2ex]
\dd \beta_{\hat{\theta}} &=  \frac{\partial \beta_{\hat{\theta}}}{\partial r}\, \boldsymbol{e}^{\hat{r}} + \frac{1}{r}\, \frac{\partial \beta_{\hat{\theta}}}{\partial \theta}\, \boldsymbol{e}^{\hat{\theta}} + \frac{1}{r \sin\theta}\, \frac{\partial \beta_{\hat{\theta}}}{\partial \phi}\, \boldsymbol{e}^{\hat{\phi}}, \\[2ex]
\dd \beta_{\hat{\phi}} &= \frac{\partial \beta_{\hat{\phi}}}{\partial r}\, \boldsymbol{e}^{\hat{r}} + \frac{1}{r}\, \frac{\partial \beta_{\hat{\phi}}}{\partial \theta}\, \boldsymbol{e}^{\hat{\theta}} + \frac{1}{r \sin\theta}\, \frac{\partial \beta_{\hat{\phi}}}{\partial \phi}\, \boldsymbol{e}^{\hat{\phi}}.
\end{align}
\end{subequations}

Collecting all terms, we obtain the components of the spatial vorticity 2-form \(\omega\) in the tetrad basis:
\begin{equation}
\omega = \omega_{\hat{r}\hat{\theta}}\, \boldsymbol{e}^{\hat{r}} \wedge \boldsymbol{e}^{\hat{\theta}} + \omega_{\hat{r}\hat{\phi}}\, \boldsymbol{e}^{\hat{r}} \wedge \boldsymbol{e}^{\hat{\phi}} + \omega_{\hat{\theta}\hat{\phi}}\, \boldsymbol{e}^{\hat{\theta}} \wedge \boldsymbol{e}^{\hat{\phi}} .
\end{equation}

To obtain the spatial vorticity vector \(\boldsymbol{\omega}\), we apply the spatial Hodge star operator to the spatial vorticity 2-form $\omega$:
\begin{equation}
\boldsymbol{\omega} =* \omega= * (\dd \boldsymbol{\beta}^{\flat}). \label{vorticity_vector}
\end{equation}

(We use only the spatial (3D) exterior calculus on each slice $\Sigma_t$: $\dd$ denotes the spatial exterior derivative and $*$ the 3D Hodge star. When spacetime derivatives are mentioned, we write $\dd_4$ explicitly.)

Therefore the components of the spatial vorticity vector \(\omega^{\hat{k}}\) are given by:
\begin{equation}
\omega^{\hat{k}} =(*\omega)^{\hat{k}} = \frac{1}{2}\, \epsilon^{\hat{k}\hat{i}\hat{j}}\, \omega_{\hat{i}\hat{j}} \, ,
\end{equation}
where \(\epsilon^{\hat{r}\hat{\theta}\hat{\phi}} = +1\) is the Levi-Civita symbol in the tetrad basis, fully antisymmetric in its indices. Here we have expressed the result in terms of the contravariant components $\beta^{\hat{i}}$, using the numerical equality $\beta_{\hat{i}}=\beta^{\hat{i}}$ for spatial components in the orthonormal basis. Substituting the expressions for \(\omega_{\hat{i}\hat{j}}\), we obtain:
\begin{align}
\omega^{\hat{r}} &= \omega_{\hat{\theta}\hat{\phi}} = \frac{1}{r}\, \frac{\partial \beta^{\hat{\phi}}}{\partial \theta} - \frac{1}{r\sin\theta}\, \frac{\partial \beta^{\hat{\theta}}}{\partial \phi} + \frac{\cos\theta}{r\sin\theta}\, \beta^{\hat{\phi}}, \label{eqOmegaR} \\[2ex]
\omega^{\hat{\theta}} &= \omega_{\hat{\phi}\hat{r}} = \frac{1}{r\sin\theta}\, \frac{\partial \beta^{\hat{r}}}{\partial \phi} - \frac{\partial \beta^{\hat{\phi}}}{\partial r} - \frac{1}{r}\, \beta^{\hat{\phi}}, \label{eqOmegaTheta} \\[2ex]
\omega^{\hat{\phi}} &= \omega_{\hat{r}\hat{\theta}} = \frac{\partial \beta^{\hat{\theta}}}{\partial r} + \frac{\beta^{\hat{\theta}}}{r} - \frac{1}{r} \frac{\partial \beta^{\hat{r}}}{\partial \theta}. \label{eqOmegaPhi}
\end{align}

As expected, these tetrad components correspond to the standard physical vorticity components in spherical coordinates as used in fluid mechanics \cite{Verdiere1994, krishnamurthy2019}.

We seek an irrotational shift vector field $\boldsymbol{\beta}$ exhibiting continuous rotational symmetry about the $x$-axis (the axis of travel). This axial symmetry implies that the components of $\boldsymbol{\beta}$ in the spherical orthonormal basis must be independent of the azimuthal angle $\phi$ ($\partial/\partial\phi = 0$) and that the azimuthal component itself must vanish ($\beta^{\hat{\phi}}=0$). We therefore adopt the following ansatz for the tetrad components, separating variables in $(t, r, \theta)$:
\begin{equation}
\begin{pmatrix}
\beta^{\hat{r}} \\
\beta^{\hat{\theta}} \\
\beta^{\hat{\phi}}
\end{pmatrix}
=
\begin{pmatrix}
- v(t)\, f(r)\, \cos\theta \\
v(t)\, g(r)\, \sin\theta\\
0
\end{pmatrix} .
\label{eq:BetaIrr}
\end{equation}
Here, \(v(t)\) is the velocity along the \(x\)-axis, while $f(r)$ and $g(r)$ define the radial profile. The relative signs ensure consistency with motion along the $+x$ direction (as discussed for the Alcubierre case in Sec.~\ref{sec:Alcubierre}).

Substituting this axially symmetric form ($\partial/\partial\phi = 0$, $\beta^{\hat{\phi}}=0$) into the general expressions for the vorticity components \eqref{eqOmegaR}-\eqref{eqOmegaTheta}, we immediately find $\omega^{\hat{r}} = 0$ and $\omega^{\hat{\theta}} = 0$. The kinematic irrotationality condition, $\boldsymbol{\omega}=0$, thus reduces solely to requiring the azimuthal component $\omega^{\hat{\phi}}$ to vanish. From \eqref{eqOmegaPhi}, this condition becomes:
\begin{equation}
    \omega^{\hat{\phi}} = \frac{\partial \beta^{\hat{\theta}}}{\partial r} + \frac{\beta^{\hat{\theta}}}{r} - \frac{1}{r}\frac{\partial \beta^{\hat{r}}}{\partial \theta} = 0.
    \label{eq:vorticity_condition}
\end{equation}

 By Frobenius's theorem, this condition is equivalent to the spatial flow being surface-orthogonal \emph{on each slice}: since $\dd\beta^\flat=0$ on $\Sigma_t$, there exists a scalar potential $\Phi$ with $\beta^\flat=-\dd\Phi$, so $\boldsymbol{\beta}$ is orthogonal to the 2-surfaces $\Phi=\text{const}$ within the slice (this is distinct from spacetime vorticity). This requirement leads to a differential equation for the tetrad components \(\beta^{\hat{i}}\), which can be solved to construct the desired warp-drive spacetime.

Substituting \eqref{eq:BetaIrr} into \eqref{eq:vorticity_condition}, we obtain:

\begin{equation}
    v(t) \sin\theta \left(  \frac{\partial g(r)}{\partial r}  + \frac{g(r)}{r} - \frac{f(r)}{r} \right) = 0.
\end{equation}

Since the prefactor $v(t) \sin\theta \neq 0$ for $\theta \neq 0, \pi$; the irrotational condition $\omega^{\hat{\phi}} =0$ reduces to a linear first-order ordinary differential equation (ODE) $ \frac{\partial g(r)}{\partial r}  + \frac{g(r)}{r} = \frac{f(r)}{r}$ relating $g(r)$ and $f(r)$. 
This linear first-order ODE for $g(r)$ can be solved using the integrating factor $\exp(\int \frac{1}{r} \, dr) = r$. Multiplying by $r$ yields $r \frac{dg}{dr} + g(r) = f(r)$, or $\frac{d}{dr}(r g(r)) = f(r)$. Integrating with respect to $r$ gives $r g(r) = \int f(r) \, dr + C$, leading to:

\begin{equation}
    g(r) = \frac{1}{r} \left( \int f(r) \, dr + C \right). \label{eq:g integral}
\end{equation}

This expression provides $g(r)$ in terms of $f(r)$ and the integration constant $C$, to be determined from boundary conditions.

In the spatially divergence-free warp drive, the shift vector component in the polar angle direction involves the \textit{derivative} \eqref{eq:NatarioShiftComponents} of the form function $f(r)$ with respect to \(r\). Conversely, in the irrotational warp drive, this component involves the \textit{integral} \eqref{eq:g integral} of the form function $f(r)$ with respect to \(r\). The choice of $f(r)$ is guided by physical considerations, particularly the desired boundary conditions for the shift vector. Adopting the same convention as for the Natário drive regarding the observer perspective (Sec.~\ref{sec:Natario}, observer stationary at $r=0$, distant stars recede at $-v$), we require the form function $f(r)$ to satisfy $\lim_{r \to \infty} f(r) = 1$ and $\lim_{r \to 0} f(r) = 0$. As in Sec.~\ref{sec:Natario}, we achieve this using a definition based on Alcubierre's function \eqref{eq:fAlc}:
\begin{equation}
f(r) := 1 - f_{\text{Alc}}(r) . \label{fIrrotational}
\end{equation}

With this form function it follows from \eqref{eq:g integral} and \eqref{fIrrotational}, that evaluating the integral analytically yields $g(r)$ as follows:

\begin{equation}
\begin{aligned}
    g(r) &= \frac{ \operatorname{csch}\left( \dfrac{\rho \sigma}{2} \right) \operatorname{sech}\left( \dfrac{\rho \sigma}{2} \right) }{4\, r\, \sigma} \Bigg[ 2 r \sigma \sinh\left( \rho \sigma \right) \\
    &\quad + \cosh\left( \rho \sigma \right) \Bigg( \ln\Big( \cosh\left( \sigma (r - \rho) \right) \Big) - \ln\Big( \cosh\left( \sigma (r + \rho) \right) \Big) \Bigg) \Bigg] + \frac{C}{r}. \label{integral of f}
\end{aligned}
\end{equation}

The boundary condition $\lim_{r \to 0} g(r) = 0$ requires the integration constant $C$ to be set to zero, i.e., $C = 0$. This boundary condition determines $g(r)$ to be:

\begin{equation}
\begin{aligned}
    g(r) &= \frac{ 2 r \sigma \sinh\left( \rho \sigma \right)  + \cosh\left( \rho \sigma \right) \Bigg( \ln\Big(\frac{\cosh\left( \sigma (r - \rho) \right)}{\cosh\left( \sigma (r + \rho) \right)} \Big) \Bigg)  }{4\, r\, \sigma  \sinh\left( \dfrac{\rho \sigma}{2} \right) \cosh\left( \dfrac{\rho \sigma}{2} \right)} . \label{g(r)}
\end{aligned}
\end{equation}

\smallskip

This function \(g(r)\), given by \eqref{g(r)} with $C=0$, by construction ensures zero kinematic vorticity when used with \(f(r) = 1 - f_{\text{Alc}}(r)\). This $g(r)$ satisfies $\lim_{r\to 0} g(r)=0$ and $\lim_{r\to\infty} g(r)=1$. A short, rigorous proof of the $r\to 0$ behavior for $f(r)=1-f_{\rm Alc}(r)$ is provided in Appendix~\ref{app:gr0}. Consequently, the full shift vector defined by $f(r)$ and $g(r)$ in \eqref{eq:BetaIrr} satisfies the desired asymptotic conditions shown in the Nat\'ario rows of Table~\ref{tab:boundary-behaviour} (Appendix~\ref{app:boundary-details}, at $\theta=0$ and $\theta=\pi/2$). These boundary conditions are satisfied due to the properties of the hyperbolic functions and the chosen forms of \(f(r)\) and \(g(r)\) provided that: $\sigma > 0, \, \left( v(t),\, \rho \right) \in \mathbb{R}$.

\paragraph*{Scalar potential}
The kinematic irrotationality condition $\dd\boldsymbol{\beta}^\flat = 0$ ensures that the shift covector field can, on topologically simple spatial slices, be derived globally from a scalar potential $\Phi(r,\theta,t)$. Adopting the convention $\boldsymbol{\beta}^\flat = -\dd\Phi$ inherently satisfies the irrotationality condition, as $\dd\boldsymbol{\beta}^\flat = -\dd^2\Phi = 0$. For the specific potential consistent with the integration constant $C=0$ and the relationship $\frac{d}{dr}(r g(r)) = f(r)$,
\begin{equation}
    \Phi(r,\theta,t) = v(t) \bigl(r \, g(r)\bigr) \cos\theta,
    \label{eq:irrot_potential}
\end{equation}
where $g(r)$ is given by \eqref{g(r)}, we compute the covariant components of $\boldsymbol{\beta}^\flat = \beta_{\hat\alpha}\boldsymbol{e}^{\hat\alpha}$ in the orthonormal tetrad basis $\{\boldsymbol{e}^{\hat\alpha}\}$. The components of the exterior derivative $\dd\Phi$ in this basis are given by the projections of the gradient onto the tetrad vectors, leading to $\beta_{\hat\mu} = - e^{\nu}{}_{\hat\mu} \partial_\nu \Phi$. For the standard spherical orthonormal tetrad where $\boldsymbol{e}^{\hat r} = \dd r$ and $\boldsymbol{e}^{\hat \theta} = r\,\dd \theta$, this yields:
\begin{align}
    \beta_{\hat r} &= -e^{\nu}{}_{\hat r} \partial_\nu \Phi = -\partial_r \Phi = -v(t) f(r) \cos\theta, \label{eq:beta_hat_r_deriv}\\
    \beta_{\hat \theta} &= -e^{\nu}{}_{\hat \theta} \partial_\nu \Phi = -\frac{1}{r}\partial_\theta \Phi  = v(t) g(r) \sin\theta, \label{eq:beta_hat_theta_deriv}\\
    \beta_{\hat \phi} &= -e^{\nu}{}_{\hat \phi} \partial_\nu \Phi = -\frac{1}{r\sin\theta}\partial_\phi \Phi = 0.
\end{align}
Recalling that $\beta^{\hat{i}} = \eta^{\hat i \hat j}\beta_{\hat{j}} = \delta^{\hat i \hat j}\beta_{\hat{j}} = \beta_{\hat{i}}$ for the spatial components ($i,j \in \{r,\theta,\phi\}$) in the chosen orthonormal tetrad, these derived components \eqref{eq:beta_hat_r_deriv}--\eqref{eq:beta_hat_theta_deriv} correspond precisely to the ansatz components $\beta^{\hat{r}}, \beta^{\hat{\theta}}, \beta^{\hat{\phi}}$ defined in \eqref{eq:BetaIrr}. As previously discussed, however, no classical thermodynamic or fluid conservation interpretation attaches to this potential $\Phi$ in the context of warp-drive geometry.
\smallskip

\begin{proposition}[Hawking–Ellis classification of the irrotational drive]
\label{prop:irrot_class}
\textbf{Assumptions:} $\alpha\equiv 1$, flat \emph{and static} spatial slices (${}^{(3)}R_{ij}=0$, $\partial_t\gamma_{ij}=0$), and an irrotational shift $\boldsymbol{\beta}^{\flat}=-\dd\Phi$.
Then the stress--energy tensor of the derived irrotational warp--drive spacetime is Hawking–Ellis Type~I at every point.
\end{proposition}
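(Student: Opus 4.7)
The plan is to show that the three ingredients in the hypothesis—unit lapse, flat static slices, and $\boldsymbol{\beta}^{\flat}=-\dd\Phi$—force the mixed Einstein tensor $G^{\hat\mu}{}_{\hat\nu}$ to be block-diagonal in the Eulerian orthonormal tetrad, and then to read off Type~I by elementary linear algebra on that block structure.

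First I would combine \eqref{eq:extrinsic_curvature} with irrotationality. Because $\boldsymbol{\beta}^\flat=-\dd\Phi$ implies $\dd\boldsymbol{\beta}^\flat=-\dd^{2}\Phi=0$, the antisymmetric part of the spatial covariant derivative vanishes, $D_{[i}\beta_{j]}=0$. On a flat static slice the spatial Christoffels vanish in Cartesian coordinates, so $D_i=\partial_i$, and therefore
\[
K_{ij}\;=\;\tfrac12(D_i\beta_j+D_j\beta_i)\;=\;D_i\beta_j\;=\;-\partial_i\partial_j\Phi,
\]
i.e.\ $K_{ij}$ is literally the negative Hessian of $\Phi$. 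This Hessian structure is the whole algebraic lever.

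Second I would feed this into the ADM momentum constraint, which for $\alpha\equiv 1$ and ${}^{(3)}R_{ij}=0$ reduces to $G_{\hat0\hat i}=D_j K^{j}{}_{i}-D_i K$. With $K^{j}{}_{i}=-\partial^{j}\partial_i\Phi$ and $K=-\nabla^{2}\Phi$, commutation of partial derivatives on flat $\mathbb{R}^{3}$ gives $D_j K^{j}{}_{i}=-\nabla^{2}\partial_i\Phi=\partial_i(-\nabla^{2}\Phi)=D_i K$, so $G_{\hat 0\hat i}\equiv 0$ pointwise. By symmetry of $G_{\mu\nu}$ this forces both $G^{\hat0}{}_{\hat i}=-G_{\hat 0 \hat i}=0$ and $G^{\hat i}{}_{\hat 0}=G_{\hat 0\hat i}=0$, so $G^{\hat\mu}{}_{\hat\nu}$ decomposes as the direct sum of the $1{\times}1$ block $G^{\hat0}{}_{\hat0}$ and the spatial $3{\times}3$ block $G^{\hat i}{}_{\hat j}=\delta^{\hat i\hat k}G_{\hat k\hat j}$.

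Finally, the linear-algebra conclusion: the Eulerian tetrad vector $\boldsymbol{e}_{\hat 0}$ is a real timelike eigenvector of $G^{\mu}{}_{\nu}$ with eigenvalue $G^{\hat 0}{}_{\hat 0}=-\kappa\varrho_p$, while $G^{\hat i}{}_{\hat j}$ is a real symmetric $3\times 3$ matrix (symmetry of $G_{\hat i\hat j}$ combined with $\delta^{\hat i\hat k}$) and hence orthogonally diagonalizable over $\mathbb{R}$ with three real eigenvalues—the principal pressures. Einstein's equation $T^{\mu}{}_{\nu}=G^{\mu}{}_{\nu}/\kappa$ transports this spectrum verbatim to $T^{\mu}{}_{\nu}$, giving one real timelike and three real spacelike eigenvalues in a real orthonormal frame, i.e.\ Hawking--Ellis Type~I at every point. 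The only real source of friction is \emph{bookkeeping}: keeping the paper's minus-sign shift convention and the $(-,+,+,+)$ tetrad index raising straight so that both $G^{\hat 0}{}_{\hat i}$ and $G^{\hat i}{}_{\hat 0}$ visibly vanish together; I expect this, rather than any substantive mathematical difficulty, to be the only place where care is required.
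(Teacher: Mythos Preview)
Your proposal is correct and follows essentially the same route as the paper's proof: identify $K_{ij}=-D_iD_j\Phi$ as the Hessian, use commutation of covariant derivatives on the Riemann-flat slice to kill the momentum constraint $G_{\hat 0\hat i}\propto D_k(K^{k}{}_{i}-\delta^{k}_{i}K)$, obtain block-diagonality of $G^{\hat\mu}{}_{\hat\nu}$ in the Eulerian tetrad, and conclude Type~I from the real-symmetric spatial block plus the timelike eigenvector $\boldsymbol{e}_{\hat 0}$. Your additional explicit treatment of the index-raising signs ($G^{\hat 0}{}_{\hat i}$ vs.\ $G^{\hat i}{}_{\hat 0}$) is a welcome clarification but not a substantive departure.
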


\begin{proof}
With $\alpha=1$ and a flat static slice $(\gamma_{ij}\simeq\mathbb{R}^3)$,
the extrinsic curvature is the spatial Hessian
\(K_{\hat{i}\hat{j}}=-D_{\hat{i}}D_{\hat{j}}\Phi\).
The momentum constraint
\(G_{\hat{0}\hat{i}}\propto
  D_{\hat{k}}\bigl(K^{\hat{k}}{}_{\hat{i}}-\delta^{\hat{k}}_{\hat{i}}K\bigr)
      = -D_{\hat{k}}
        \bigl(D^{\hat{k}}D_{\hat{i}}\Phi
              -\delta^{\hat{k}}_{\hat{i}}D_{\hat{m}}D^{\hat{m}}\Phi\bigr)\)
vanishes identically because $[D_{\hat{k}},D_{\hat{i}}]=0$ on the
Riemann‑flat slice.
Hence $G^{\hat{\alpha}}{}_{\hat{\beta}}$ is block-diagonal in the orthonormal tetrad adapted to the Eulerian normal $n^\mu$ (in the coordinate basis $\{\partial_t,\partial_i\}$), with $n^\mu=(1,\beta^i)$ for $\alpha=1$, and coframe $\omega^{\hat 0}=dt$, $\omega^{\hat i}=dx^i-\beta^i dt$; equivalently,
$\partial_t=\alpha\,n^\mu\partial_\mu-\beta^i\partial_i$ so: $\partial_t=n-\beta^i\partial_i$. For flat, time–independent spatial slices $({}^{(3)}R_{ij}=0,\ \partial_t\gamma_{ij}=0)$:
\begin{equation}
G^{\hat{\alpha}}{}_{\hat{\beta}}
=\operatorname{diag}\!\bigl(G^{\hat{0}}{}_{\hat{0}},
                            G^{\hat{i}}{}_{\hat{j}}\bigr),
\qquad
G^{\hat{i}}{}_{\hat{j}}=G^{\hat{j}}{}_{\hat{i}} \, .
\end{equation}

The spatial block is real symmetric, so its three eigenvalues
\(\lambda_{G(k)}\;(k=1,2,3)\) are real; \(\lambda_{G(0)}=G^{\hat{0}}{}_{\hat{0}}\)
is also real.
Because $e_{\hat{0}}$ is timelike and is the eigenvector for
$\lambda_{G(0)}$, the set
\(\{\lambda_{G(\alpha)}\}_{\alpha=0}^{3}\) comprises four real eigenvalues
with a timelike eigenvector.  By the Hawking–Ellis scheme this is precisely
Type~I~\cite{synge1966,Lichnerowicz_1967}. \qedhere
\end{proof}

\paragraph*{Mechanism for Peak Reduction}
The observed reduction in stress–energy extrema is a direct consequence of kinematic irrotationality and its algebraic implications. In warp drives that permit vorticity (e.g., Alcubierre and Natário \cite{Alcubierre_1994,Natário}), a nonzero vorticity (\(\dd\boldsymbol{\beta}^{\flat}\neq 0\)) typically yields a nonzero momentum density \(G_{\hat 0 \hat i}\), introducing energy–flux terms that can lead to patches of Hawking–Ellis Type~IV structure \cite{Martín-Moruno_2018,Maeda2020}.

By contrast, under the present assumptions (\(\alpha\equiv1\), flat and time–independent spatial slices, and \(\dd\boldsymbol{\beta}^{\flat}=0\)), the \(3{+}1\) momentum constraint enforces vanishing momentum density:
\begin{equation}
G_{\hat 0 \hat i}=0,
\label{eq:momentum_zero}
\end{equation}
which block–diagonalizes the mixed Einstein tensor in the Eulerian orthonormal tetrad:
\begin{equation}
G^{\hat\mu}{}_{\ \hat\nu}=\mathrm{diag}\!\big(G^{\hat 0}{}_{\ \hat 0},\ G^{\hat i}{}_{\ \hat j}\big).
\label{eq:blockdiag}
\end{equation}
This guarantees a global Hawking–Ellis Type~I structure and is the algebraic origin of the improved energy–condition profile.

This causal link is not merely theoretical; it is confirmed by a rigorous numerical ablation (Appendix~\ref{app:ablation}). In that test, the baseline smoothing profile \(g(r)\) and numerical tolerances are held fixed while a tunable vortical component is introduced, isolating vorticity as the sole variable. The results (Table~\ref{tab:abl_sweep}) show a catastrophic degradation of the energy metrics: the negative–energy magnitude \(E_-\) increases by a factor exceeding \(500\), and the crucial \(E_+/E_-\) balance collapses. This demonstrates that the favorable energy properties of the baseline model are a direct consequence of its irrotational, curl–free kinematics, not an artifact of profile smoothing.

\paragraph{Comparative summary of warp-drive models}  Table~\ref{tab:focused_comparison_tabularstar} provides a comparative summary of the Alcubierre \cite{Alcubierre_1994}, Natário zero-expansion \cite{Natário}, and kinematically irrotational warp drives, based on the orthonormal tetrad components derived in Secs.~\ref{secGeneralized}, \ref{sec:Alcubierre}, \ref{sec:Natario} and \ref{secIrrotational} (assuming lapse function $\alpha=1$, flat spatial slices, axial symmetry with $x$-axis polar). The table contrasts the defining kinematic characteristic imposed on the shift vector $\boldsymbol{\beta}$, the resulting tetrad components $\beta^{\hat{i}}$, the utilized radial form functions ($f(r)$, $g(r)=\frac{1}{r}(\int f(r) \, \mathrm{d}r)$ derived from $f_{\text{Alc}}(r)$ \eqref{eq:fAlc}), and the inferred observer perspective. This highlights how the specific constraints ($\operatorname{div}\boldsymbol{\beta}=0$ for Natário, $\boldsymbol{\omega}=*\dd\boldsymbol{\beta}^\flat=0$ for the irrotational model) manifest in the shift vector components relative to the original Alcubierre spacetime.


\begin{table}[htbp] 

\caption{Warp-drive models} 
\label{tab:focused_comparison_tabularstar} 

\centering

\begin{tabular*}{\textwidth}{@{\extracolsep{\fill}}|l|l|l|l|} 
\hline
\textbf{Model} & \textbf{Alcubierre \cite{Alcubierre_1994}} & \textbf{Natário (No Exp.) \cite{Natário}} & \textbf{Kinem. irrotational} \\ \hline \hline
\textbf{Charact.} & $\boldsymbol{\beta}^{\flat} = v f_{\text{Alc}} \dd x$ \eqref{eq:betaflatCoordDiff} & $\operatorname{div}\boldsymbol{\beta} = 0$ \eqref{eq:NatKinDiv} & $\boldsymbol{\omega} = * (\dd \boldsymbol{\beta}^{\flat}) = 0$ \eqref{vorticity_vector} \\ \hline
\textbf{Shift Vr.} & from \eqref{\eqBetaAlc}: & from \eqref{eq:NatarioShiftComponents}: & from \eqref{\eqBetaIrr} and  \eqref{eq:g integral}: \\
($\beta^{\hat{i}}$ Tetrad) & $\beta^{\hat{r}} = v f \cos\theta$ & $\beta^{\hat{r}} = -v f \cos\theta$ & $\beta^{\hat{r}} = -v f \cos\theta$ \\
& $\beta^{\hat{\theta}} = -v\, f \sin\theta$ & $\beta^{\hat{\theta}} = v (f + \frac{r}{2} f') \sin\theta$ & $\beta^{\hat{\theta}} =v \frac{1}{r}(\int f \, \mathrm{d}r)\sin\theta $ \\
& $\beta^{\hat{\phi}} = 0$ & $\beta^{\hat{\phi}} = 0$ & $\beta^{\hat{\phi}} = 0$ \\ \hline
\textbf{Form Fn.} & $f =f_{\text{Alc}}$ \eqref{eq:fAlc} & $f = 1 - f_{\text{Alc}}$ \eqref{eq:fNatario} & $f = 1 - f_{\text{Alc}}$ \eqref{\fIrrotational}\\ \hline
\textbf{Observer} & Rest $r \to \infty$; Bubble $v(t)$ & Rest $r = 0$; Stars $-v(t)$ & Rest $r = 0$; Stars $-v(t)$ \\ \hline
\end{tabular*}

\end{table}

\section{Interpretation and visualization of Einstein-tensor eigenvalues}
\label{subsec:eigenvalue_visualization}

This section connects the algebraic framework developed above with the
numerical results for the \textit{Alcubierre},
\textit{irrotational} and \textit{Natário} warp-drive spacetimes.
Subsection~\ref{sec:classification_protocol_final_v5} reviews the Cartan
construction of \(G^{\hat\alpha}{}_{\;\hat\beta}\), applies the
Hawking–Ellis classification (emphasizing Types I and IV), and presents
the eigensolver, tolerance schemes and validation checks that lead to the
physically-motivated ordering of the eigenpairs.

Section~\ref{sec:eigenvalue_interpretation} analyzes the plotted eigen‐data, maps regions where the Einstein tensor is Hawking–Ellis Type I (fully diagonalizable) or Type IV (complex–conjugate null pair), and quantifies violations of both the null and weak energy conditions.  It also computes the scalar invariant $G=-R=\sum_{\alpha=0}^{3}\lambda_{G(\alpha)}$
which enters the trace energy condition, thereby enabling a direct comparison of the stress–energy requirements for the three warp‐drive spacetimes.

Numerical values of the mixed Einstein tensor \(G^{\hat\mu}{}_{\;\hat\nu}\), expressed in an orthonormal tetrad, are computed in Wolfram \textit{Mathematica\textsuperscript{\textregistered}}~\cite{Mathematica} at 50-digit precision (unless stated otherwise).  We examine a warp bubble moving at light‐speed with the parameters \(\rho=5\,[\mathrm{m}]\) and \(\sigma=4\,[\mathrm{m}^{-1}]\).  These choices reproduce those of Refs.~\cite{Rodal,Rodal2024}, which employed a curvilinear‐coordinate tensor formalism; by using the same parameters in our Cartan orthonormal‐tetrad approach (with spin coefficients), we can directly compare the computed \(G^{\hat\mu}{}_{\;\hat\nu}\) for the Alcubierre and Natário drives, thus thoroughly validating our high‐precision numerical implementation.  To economize space, we restrict attention to the light‐speed case here; superluminal warp‐bubble velocities produce analogous qualitative behavior (as follows from the analytic expressions) and have been numerically detailed for the Alcubierre and Natário drives in Ref.~\cite{Rodal2024}.  All figures sample the radial band
\(
  r\in\bigl[\rho-3/\sigma,\;\rho+3/\sigma\bigr],
\)
which contains the bubble walls where the form function \(f(r)\) departs from its asymptotic limits \(f\to1\) and \(f\to0\).

\noindent\emph{Methods note:} All eigenvalue computations used 50-digit precision with symmetry, trace, and eigen-sum checks; results were insensitive to tolerance variations over two orders of magnitude.

\paragraph*{Global energy metrics}
We summarize the slice–integrated measures defined in Appendix~\ref{app:global_energy}: the negative/positive volumes \(V_\pm\) and energy magnitudes \(E_\pm\).  These are integrals over a spatial slice \(\Sigma_t\) and hence are foliation-dependent; they are invariant under spatial coordinate changes and tetrad rotations on \(\Sigma_t\), but not under changes of slicing. They are computed with the disjoint masks of Eqs.~\eqref{eq:regions}–\eqref{eq:tolerances}, using the axially reduced volume element \eqref{eq:vol_element}, and the definitions in Eqs.~\eqref{eq:Epm_magnitudes}–\eqref{eq:volumes}. Headline values for the fiducial \((\rho,\sigma,v/c)=(5~\mathrm{[m]},\,4~\mathrm{[m^{-1}]},\,1)\) on the baseline window \(R_{\rm integ}=12\rho\) appear in Tables~\ref{tab:globals_SI} and \ref{tab:ratios_checks}; localization diagnostics are reported in Tables~\ref{tab:band_captures}–\ref{tab:shell_captures} and the meridional map in Fig.~\ref{fig:eq_density}. These global measures complement the local extrema (e.g.\ \(\min \varrho_p\)) and quantify the spatial extent and integrated magnitude of NEC/WEC–violating regions. A compact snapshot is given in Table~\ref{tab:global_measures}; for far–field completion using a two–point \(1/R\) tail model and corresponding totals, see Sec.~\ref{sec:tail_model} and Table~\ref{tab:baseline_vs_tail}, which drive the net proper energy to
\(|E_{\rm net}|/E_{\rm abs}=0.04\%\) (i.e.\ a \(4\times10^{-4}\) fractional level), consistent with zero.

\begin{table}[htbp]
\centering
\small
\caption{Local and global \emph{energy} measures for the irrotational warp drive at the fiducial \((\rho,\sigma,v/c)\).
Baseline values are on the window \(R_{\rm integ}=12\rho\); tail–corrected values use a two–point \(1/R\) extrapolation to \(R\!\to\!\infty\).
 Numbers are reported with 3\,s.f.\ .}
\label{tab:global_measures}

\begin{tabular}{|l|c|c|c|}
\hline
\textbf{Case} & $E_{-}$ [J] & $E_{+}$ [J] & $E_{+}/E_{-}$ \\
\hline
Baseline (\(12\rho\)) &
$1.21\times 10^{44}$ &
$1.29\times 10^{44}$ &
$1.07$ \\
\hline
Tail–corrected (\(R\!\to\!\infty\), \(1/R\)) &
$1.33\times 10^{44}$ &
$1.33\times 10^{44}$ &
$1.00$ \\
\hline
\end{tabular}

\vspace{0.6ex}

\begin{tabular}{|l|c|c|c|c|}
\hline
\multicolumn{5}{|c|}{\textbf{Local extrema and baseline volumes (at \(R_{\rm integ}=12\rho\))}} \\
\hline
$\max \varrho_p$ [J\,m$^{-3}$] & $\min \varrho_p$ [J\,m$^{-3}$] & $V_{-}$ [m$^{3}$] & $V_{+}$ [m$^{3}$] & $V_{0}$ [m$^{3}$] \\
\hline
$2.23\times 10^{42}$ & $-1.29\times 10^{41}$ & $6.38\times 10^{5}$ & $2.34\times 10^{5}$ & $3.40\times 10^{4}$ \\
\hline
\end{tabular}
\end{table}

\medskip

\textbf{Limitations.} All Type~I and peak-reduction results reported here assume $\alpha\equiv 1$ and flat, \emph{time-independent} spatial slices. Relaxing these conditions (e.g., allowing $\nabla\alpha\neq 0$ or evolving $\gamma_{ij}$) can reintroduce momentum density $G_{\hat0\hat i}$ and spoil global Type~I behavior.

\subsection{\texorpdfstring{Algebraic classification, interpretation, and numerical protocol for \( G^\mu{}_\nu \)}{Algebraic classification, interpretation, and numerical protocol for G mu nu}}
\label{sec:classification_protocol_final_v5} 

The Einstein tensor \(G^\mu{}_\nu=g^{\mu\alpha}G_{\alpha\nu}\) shares the same
algebraic structure as the stress–energy tensor \(T^\mu{}_\nu\).
 With the cosmological constant set to zero \(\Lambda=0\), Einstein’s field equations \(G^\mu{}_\nu=\kappa\,T^\mu{}_\nu\) imply a pointwise proportionality between these two tensors, so the
Hawking–Ellis (Segre–Plebański) classification
\cite{hawking_ellis_1973,Stephani_Kramer_MacCallum_Hoenselaers_Herlt_2003,Plebanski1964,Segre}
can be applied directly to \(G^\mu{}_\nu\).
In practice this means that computing the eigenvalues and causal character of \(G^\mu{}_\nu\) directly reveals the type of stress-energy tensor or effective source required.  This is particularly useful in warp‐drive metrics, where one specifies \(g_{\mu\nu}\) \emph{a priori} and then reads off the implied energy–momentum content rather than solving for it.

\subsubsection{Einstein tensor via Cartan formalism}

The mixed components \(G^{\hat\alpha}{}_{\;\hat\beta}\) are computed
entirely in Cartan’s tetrad language, using Wald’s sign conventions~\cite{Wald}
and avoiding Christoffel symbols~\cite{deFelice}.  We choose an orthonormal coframe
\(\{\omega^{\hat\alpha}\}\equiv
   \{e^{\hat\alpha}{}_{\;\mu}\,\mathrm{d}x^{\mu}\}\)
with
\(
  g_{\mu\nu}
    =\eta_{\hat\alpha\hat\beta}\,
      e^{\hat\alpha}{}_{\;\mu}\,
      e^{\hat\beta}{}_{\;\nu},
    \quad
  \eta_{\hat\alpha\hat\beta}=\mathrm{diag}(-,+,+,+).
\)
The dual frame vectors
\(\mathbf{e}_{\hat\alpha}=e_{\hat\alpha}{}^{\;\mu}\partial_\mu\)
satisfy the usual completeness relations.

From the first Cartan structure equation (zero torsion, \(T^{\hat\alpha}=0\)):
\begin{equation}
  \mathrm{d}\,\omega^{\hat\alpha}
  +\omega^{\hat\alpha}{}_{\;\hat\beta}
   \wedge\omega^{\hat\beta}
  =0,
  \label{eq:Cartan1}
\end{equation}
one reads off the antisymmetric spin‐connection 1-forms
\(\omega^{\hat\alpha}{}_{\;\hat\beta}
  =-\omega_{\hat\beta}{}^{\;\hat\alpha}.\)

The second Cartan structure equation defines the curvature 2-forms:
\begin{equation}
  \Omega^{\hat\alpha}{}_{\;\hat\beta}
    =\mathrm{d}\,\omega^{\hat\alpha}{}_{\;\hat\beta}
     +\omega^{\hat\alpha}{}_{\;\hat\gamma}
      \wedge\omega^{\hat\gamma}{}_{\;\hat\beta}
    =\tfrac12
      R^{\hat\alpha}{}_{\;\hat\beta\hat\gamma\hat\delta}\,
      \omega^{\hat\gamma}\wedge\omega^{\hat\delta}.
  \label{eq:Cartan2}
\end{equation}
Contracting \(R^{\hat\alpha}{}_{\;\hat\beta\hat\gamma\hat\delta}\) yields
\(
  R_{\hat\alpha\hat\beta}
    =R^{\hat\gamma}{}_{\;\hat\alpha\hat\gamma\hat\beta},
  \quad
  R=\eta^{\hat\alpha\hat\beta}R_{\hat\alpha\hat\beta}.
\)
Finally, the Einstein tensor is
\(
  G_{\hat\alpha\hat\beta}
    =R_{\hat\alpha\hat\beta}
     -\tfrac12\,R\,\eta_{\hat\alpha\hat\beta},
  \quad
  G^{\hat\alpha}{}_{\;\hat\beta}
    =\eta^{\hat\alpha\hat\gamma}G_{\hat\gamma\hat\beta},
\)
and we feed this \(4\times4\) matrix to the eigenanalysis.

\subsubsection{Hawking–Ellis Type IV: geometry and physical meaning}\label{sec:IVgeometry_and_physical_meaning}

Type IV occurs when the mixed Einstein tensor \(G^{\hat\alpha}{}_{\hat\beta}\) possesses one complex–conjugate eigen‑pair and two distinct real spacelike eigenvalues; consequently, no real causal (timelike) eigenvector exists. 
Segre \cite[§20, p.~330]{Segre} records this four-simple-divisor spectrum as \([1\,1\,1\,1]\) when viewed over \(\mathbb{C}\). Petrov adopts the same bracket \cite[§47, pp.\,324–325]{Petrov}, while Stephani et al. \cite[Table 5.1]{Stephani_Kramer_MacCallum_Hoenselaers_Herlt_2003} write it as \([1\,1,\,Z\bar Z]\). 
Following Hall \cite[Table 7.1]{Hall} we use the label \([z\,\bar z\,11]\).

Let the complex eigenvalues be
\begin{equation}\label{eq:TypeIV_spectrum}
\Lambda_G=\xi+i\mu,\quad
\bar\Lambda_G=\xi-i\mu,\quad
\lambda_{2},\lambda_{3}\in\mathbb{R},\quad
\lambda_{2}\neq\lambda_{3},\quad
\mu\neq0 .
\end{equation}
with eigenvectors \(z^\mu\) and \(\bar z^\mu\).
Because \(G^{\hat\alpha}{}_{\hat\beta}\) is self-adjoint,
\(g(G(z),\bar z)=g(z,G(\bar z))\).
Substituting \(G(z)=\Lambda_G z\) and \(G(\bar z)=\bar\Lambda_G \bar z\),
this gives \((\Lambda_G-\bar\Lambda_G)\,g(z,\bar z)=0\).
Since \(\Lambda_G\neq\bar\Lambda_G\), we have
\(g(z,\bar z)=0\).

\paragraph*{Null two-plane.}
Write \(z^\mu=x^\mu+i\,y^\mu\).  Then \(g(x,x)+g(y,y)=0\).
Without loss of generality take \(x^\mu\) timelike and \(y^\mu\)
spacelike, orthogonal; the real plane
\(\mathcal N=\mathrm{span}\{x^\mu,y^\mu\}\) is Lorentzian and contains
two independent null directions.  Every genuine Type-IV tensor shares
this intrinsic null two-plane; no additional branch with
\(g(z,\bar z)\neq0\) is algebraically possible. This impossibility traces directly to the
\(g\)-self-adjointness of \(G^{\hat\alpha}{}_{\hat\beta}\).

\paragraph*{Canonical \(2\times2\) block.}
Let \(\{t^\mu,s^\mu\}\subset\mathcal N\) be an orthonormal basis of the
invariant null two–plane, with
\(g(t,t)=-1,\; g(s,s)=+1,\; g(t,s)=0\).
Then
\begin{subequations}\label{eq:G_on_plane}
\begin{align}
  G^{\mu}{}_{\nu}\,t^{\nu} &= \xi\,t^{\mu}-\mu\,s^{\mu},\\
  G^{\mu}{}_{\nu}\,s^{\nu} &= \mu\,t^{\mu}+\xi\,s^{\mu},
\end{align}
\end{subequations}
so on the \(t\!-\!s\) plane the mixed Einstein tensor is represented by
\[
\begin{pmatrix}\xi & \mu\\[2pt]-\mu & \xi\end{pmatrix}.
\]

\emph{Canonical forms across the literature.} Our mixed \(2\times2\) Type-IV block matches the canonical forms in \cite{hawking_ellis_1973,Maeda2020,Martín-Moruno_2018,Petrov} after the appropriate index-position conversion (covariant or contravariant \(\leftrightarrow\) mixed) and signature bookkeeping; explicit parameter maps are given in App.~\ref{app:typeIV_maps}.


\paragraph*{Physical measurables.}
With Einstein’s coupling constant \(\kappa\),
an observer whose four–velocity is \(t^\mu\) (aligned with the null
plane) measures
\[
\varrho_{\text{obs}}=-\xi/\kappa,\quad
p_s=\xi/\kappa,\quad
q_s=\mu/\kappa,
\]
where \(q_s\) is an unavoidable energy- (or momentum-) flux
in the \(s^\mu\) direction.  As shown in
Prop.\,\ref{prop:mu_invariance}, the quantities
\(\xi\) and \(|\mu|\) are Lorentz scalars, but
\(\varrho_{\text{obs}}\) itself is frame-dependent because no real
causal eigenvector exists.

\smallskip

\begin{proposition}[Invariant magnitudes of \(\xi\) and \(\mu\)]
\label{prop:mu_invariance}
Let \(G^{\hat\alpha}{}_{\hat\beta}\) at a spacetime point be
Hawking–Ellis Type IV, with complex eigenvalues
\(\Lambda_G=\xi\pm i\mu\).
Then \(\xi\) and \(\lvert\mu\rvert\) are Lorentz scalars
(only the sign of \(\mu\) can flip).
Consequently
\(\lvert q_s\rvert=\lvert\mu\rvert/\kappa\) is invariant, and for
observers whose 4-velocities lie in the null two-plane \(\mathcal N\)
one has \(\varrho_{\text{obs}}=-\xi/\kappa\), also invariant within
that class.
\end{proposition}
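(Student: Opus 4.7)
The plan is to reduce the claim to the Lorentz invariance of the characteristic polynomial of the mixed tensor, then read off the physical consequences directly from the canonical block in Eqs.~\eqref{eq:G_on_plane}.

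First I would note that a change of orthonormal tetrad is effected by a local Lorentz transformation $\Lambda^{\hat\alpha}{}_{\hat\beta}$, under which the mixed tensor transforms by conjugation, $G' = \Lambda\,G\,\Lambda^{-1}$. Hence the characteristic polynomial $p(\lambda)=\det(\lambda I - G^{\hat\alpha}{}_{\hat\beta})$---and with it the coefficients $\operatorname{tr} G$, $\operatorname{tr} G^{2}$, and all elementary symmetric functions of the eigenvalues---is a Lorentz scalar, so the unordered multiset of eigenvalues of $G^{\hat\alpha}{}_{\hat\beta}$ is Lorentz invariant.

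Second, using the Type~IV spectrum in Eq.~\eqref{eq:TypeIV_spectrum} I would factor $p(\lambda)=[(\lambda-\xi)^{2}+\mu^{2}](\lambda-\lambda_{2})(\lambda-\lambda_{3})$. Since $\Lambda$ is real, the transformed matrix $G'$ is also real, so its complex eigenvalues must again appear in a conjugate pair $\xi' \pm i\mu'$. Matching the two multisets of roots forces $\xi'=\xi$ and $\mu'=\pm\mu$; the residual sign ambiguity corresponds to exchanging $\Lambda_{G}\leftrightarrow\bar\Lambda_{G}$, i.e.\ to the orientation of the canonical $2\times 2$ block. Thus $\xi$ and $|\mu|$ are Lorentz scalars.

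Third, the physical measurables follow mechanically. $|q_s|=|\mu|/\kappa$ inherits invariance from $|\mu|$. For the energy density, I would take the canonical orthonormal basis $\{t^\mu, s^\mu\}\subset\mathcal{N}$ with $g(t,t)=-1$, $g(s,s)=+1$, $g(t,s)=0$ and evaluate $\varrho_{\text{obs}}=T_{\mu\nu}t^\mu t^\nu=\kappa^{-1}\,t_\mu G^{\mu}{}_{\nu}t^\nu$. Applying $G^{\mu}{}_{\nu}t^\nu=\xi\,t^\mu-\mu\,s^\mu$ from Eqs.~\eqref{eq:G_on_plane} gives $\varrho_{\text{obs}}=\kappa^{-1}\bigl(\xi\,g(t,t)-\mu\,g(t,s)\bigr)=-\xi/\kappa$, and invariance within this class then reduces to the established invariance of $\xi$.

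I expect the only subtle point to be interpretive rather than technical: one must emphasize that the invariance of $-\xi/\kappa$ is that of a scalar label attached to the \emph{canonical} observer in the intrinsic null plane $\mathcal{N}$---not a universal proper energy density. A generic timelike observer, including one boosted within $\mathcal{N}$ away from $t^\mu$, would measure a different value, consistent with the Type~IV absence of any real causal eigenvector.
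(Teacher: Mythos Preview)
Your argument is correct and takes a somewhat different route from the paper. The paper restricts attention to the $2\times2$ block $B=G^{\hat\alpha}{}_{\hat\beta}\big|_{\mathcal N}$ and argues that Lorentz transformations \emph{preserving} $\mathcal N$ act by real similarity on $B$, so $\operatorname{tr}B=2\xi$ and $\det B=\xi^2+\mu^2$ are invariant; $\mu^2$ is then recovered from $\det B-\tfrac14(\operatorname{tr}B)^2$. You instead invoke the invariance of the full $4\times4$ characteristic polynomial under conjugation by \emph{any} $\Lambda\in O(1,3)$ and read off $\xi'=\xi$, $\mu'=\pm\mu$ directly from the preserved multiset of roots. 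Your route is slightly more elementary and more general: it establishes Lorentz-scalar status under the full group in one stroke, without first singling out the $\mathcal N$-preserving subgroup or appealing to the block structure. The paper's approach, in exchange, makes explicit which invariants of the $2\times2$ block carry the information and ties more directly to the canonical form in Eqs.~\eqref{eq:G_on_plane}.

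Your closing caveat is sharper than the paper on the $\varrho_{\text{obs}}$ claim. A direct computation with $u^\mu=\cosh\chi\,t^\mu+\sinh\chi\,s^\mu$ and the block of Eqs.~\eqref{eq:G_on_plane} gives $G_{\mu\nu}u^\mu u^\nu=-\xi-\mu\sinh(2\chi)$, so a generic timelike observer \emph{in} $\mathcal N$ does not measure $-\xi/\kappa$; only the canonical $t^\mu$ (the one in which the block takes its rotation-matrix form) does. Your reading---that $-\xi/\kappa$ is the invariant label attached to the canonical null-plane observer rather than a value shared by all observers in $\mathcal N$---is the correct one.
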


\begin{proof}
In an orthonormal tetrad adapted to \(\mathcal N\),
\[
B \;=\; G^{\hat\alpha}{}_{\hat\beta}\big|_{\mathcal N}
 =\begin{pmatrix}\xi & \mu\\[2pt]-\mu & \xi\end{pmatrix},
 \qquad \mu\neq0.
\]
Any Lorentz transformation that preserves \(\mathcal N\) acts by real
similarity on \(B\), hence
\[
\tr B = 2\xi,
\qquad
\det B = \xi^2 + \mu^2
\]
are invariants.  Since
\(\mu^2=\det B-\tfrac14(\tr B)^2\) is invariant,
only \(|\mu|\) is well defined (orientation reversals send
\(\mu\to-\mu\)).
Dividing by \(\kappa\) gives the stated invariants for
\(\lvert q_s\rvert\) and, for null-plane observers,
\(\varrho_{\text{obs}}\). \qedhere
\end{proof}

\smallskip
Although \(\xi=\tfrac12\,\tr B\) is a true scalar, it is \emph{not} a
proper energy density: no real timelike eigenvector exists.  The energy
density measured by an arbitrary observer \(t^{\hat\gamma}\) is
\[
\varrho_{\text{obs}}[t^{\hat\gamma}]
   \;=\;\frac{1}{\kappa}\,
        G_{\hat\alpha\hat\beta}\,t^{\hat\alpha}t^{\hat\beta},
\]
and varies with the frame.  Only observers whose four-velocities lie in
\(\mathcal N\) obtain the invariant value
\(\varrho_{\text{obs}}=-\xi/\kappa\).
By contrast, in a Type I region the unique timelike eigenvalue yields
the proper (frame-invariant) energy density
\(\varrho_p=-\lambda_{G(0)}/\kappa\).

\paragraph*{Null–energy–condition violation}

Because the energy–flux parameter \(\mu\) attached to the complex
eigenvalues is non-zero (\(\mu\neq0\)),
\emph{every} Hawking–Ellis Type IV tensor violates the null energy
condition,
\(G_{\mu\nu}K^\mu K^\nu\ge0\),
for a suitable null vector \(K^\mu\).
Write the complex eigenvector as \(z^\mu=x^\mu+i y^\mu\) with
\(g(x,x)=-g(y,y)\) and \(g(x,y)=0\); the invariant two–plane
\(\mathcal N=\mathrm{span}\{x,y\}\) then contains the two null
directions
\(K_\pm^\mu = x^\mu \pm y^\mu\).
A direct calculation gives
\[
G_{\mu\nu}K_\pm^\mu K_\pm^\nu
      = \mp\,2\,|\mu|\;|g(x,x)| ,
\]
so one choice of sign is strictly negative, establishing NEC
violation.

\paragraph*{Structural stability}

Type IV is \emph{structurally unstable}.
Its defining algebraic constraint
\(g(z,\bar z)=0\) is an equality that is generically spoiled by an
arbitrarily small perturbation of the tensor, causing the eigen-system
to move into another Hawking–Ellis class (typically Type I, or,
at special degeneracy points, Type II/III).
By contrast, Type I—with four distinct real eigenvalues—is an open,
stable class.
(Types II and III are likewise unstable because they rely on
eigenvalue degeneracies.)

\subsubsection{Numerical protocol and validation\label{sec:numerics}}

We analyze the mixed Einstein tensor $G^{\hat\alpha}{}_{\;\hat\beta} = G^{\mu}{}_{\nu}\,e^{\hat\alpha}{}_{\mu}\,e^{\nu}{}_{\hat\beta}$ numerically in Wolfram \textit{Mathematica\textsuperscript{\textregistered}} \cite{Mathematica}. The symbolic expression \(G^{\hat\alpha}{}_{\;\hat\beta}(r,\theta;\rho,\sigma,v)\) is evaluated at selected coordinates $(r_{0},\theta_{0})$ and warp bubble parameters, $(\rho_{0},\sigma_{0},v_{0})$, to obtain the numeric \(4\times4\) real matrix
\begin{equation}
   M=
   G^{\hat\alpha}{}_{\;\hat\beta}(r_0,\theta_0;\rho_0,\sigma_0,v_0) 
   \in\mathbb{R}^{4\times4},
\end{equation}
All numerical computations are performed using $p=50$-digit precision by default.
The raw eigensystem \(\{\lambda_I,{\bm v}_I\}_{I=1..4}\) of \(M\) is computed using \texttt{Eigensystem[]}.

\paragraph*{Phase fixing of eigenvectors}
Each raw eigenvector \({\bm v}_I\) is then multiplied by a unit-modulus factor  \(e^{\mathrm i\psi}\!\in\!U(1)\) (a \emph{global phase}). The phase \(\psi\) is chosen such that the time-like component \(v^{\hat 0}\) of the resulting vector becomes real and non-negative; if \(\lvert v^{\hat 0}\rvert<\epsilon_{\mathrm{chop}}=10^{-18}\), \(\psi\) is instead chosen to make the largest-magnitude component real and non-negative. Because this rotation by \(e^{\mathrm i\psi}\) leaves scalar products like \(\eta_{\hat\alpha\hat\beta}v^{\hat\alpha}\bar v^{\hat\beta}\) unchanged, this phase-fixing does not affect any physical result derived from such products (e.g., eigenvector norms) but ensures numerical reproducibility. The set of phase-fixed eigenvectors is denoted \(\{{\bm v}'_I\}_{I=1..4}\).

\paragraph*{Classification procedure}%
From the raw eigensystem \(\{\lambda_I,{\bm v}'_I\}\) we decide the
Hawking–Ellis type; \(N_C\) counts non-real eigenvalues.
\begin{enumerate}[label=(\Roman*)]
  \item \textbf{Type I}:  \(N_C=0\).
  \item \textbf{Type II}: at least one \emph{real} null eigenvector
        (Jordan vs.\ diagonal via \({\rm rank}(M-\lambda I)\)).
  \item \textbf{Type IV}: \(N_C=2\) \emph{and} the complex eigen-vectors
        satisfy \(g(z,\bar z)=0\) (null two-plane).
\end{enumerate}
Tests are attempted in the order I \(\rightarrow\) II \(\rightarrow\) IV.

\paragraph*{Numerical tolerances}
\begin{equation}
\begin{aligned}
|\Im\lambda_I| &>\epsilon_{\mathrm{imag}}=10^{-18} 
  &&\Rightarrow\; \lambda_I\ \text{is complex},\\ 
|\lambda_I-\lambda_J| &<\epsilon_{\mathrm{degen}}=10^{-10} 
  &&\Rightarrow\; \text{near-degeneracy},\\
|\eta_{\hat\alpha\hat\beta}v'^{\hat\alpha}\bar v'^{\hat\beta}| 
  &<\epsilon_{\mathrm{null}}=10^{-12}
  &&\Rightarrow\; {\bm v}'\ \text{is null}. 
\end{aligned}
\end{equation}

\paragraph*{Initial numerical ordering}%
After \texttt{Eigensystem[]} the raw list
\(\{\lambda_{I},\bm v'_{(I)}\}_{I=1}^{4}\) is sorted \emph{numerically}
according to the Hawking–Ellis type assigned by the classifier:
\begin{itemize}[nosep,leftmargin=1.8em]
  \item \textbf{Type I or II}: all eigen-values are real;
        order by descending \(\operatorname{Re}\lambda\).
  \item \textbf{Type IV}: complex conjugate pair first
        (\(\Im\lambda>0\) placed before \(\Im\lambda<0\)),
        then the two real roots in descending \(\operatorname{Re}\lambda\).
\end{itemize}
This step yields a deterministic but non-physical indexing.

\paragraph*{Physical reordering}%
The list is then mapped to the physically ordered eigensystem
\(\{\lambda_{G(\alpha)},\bm v_{G(\alpha)}\}_{\alpha=0}^{3}\):
\begin{itemize}[nosep,leftmargin=1.8em]
  \item \textbf{Type I or II}:  
        \(\bm v_{G(0)}\) is the unique causal eigenvector—%
        timelike for I \((g<-\epsilon_{\text{null}})\),  
        \emph{numerically} null for II \((|g|<\epsilon_{\text{null}})\).(The absolute‑value tolerance accounts for round‑off error.)  The remaining spacelike vectors occupy
        \(G(1), G(2), G(3)\) via a two-key sort:  
        (i) dominant spatial component, ascending \((r,\theta,\phi)\);  
        (ii) tie-breaker \(\operatorname{Re}\lambda\), descending.
\item \textbf{Type IV}:  
      the complex–conjugate pair is placed in \(G(0),G(1)\)
      (the root with $\operatorname{Im}\lambda>0$ goes to $G(0)$,
      the one with $\operatorname{Im}\lambda<0$ to $G(1)\bigr)$;  
      the two real spacelike eigenvectors then fill $G(2),G(3)$
      using the same two-key rule.
\end{itemize}
\smallskip
This two-stage procedure ensures that every slot label
\(G(0)\)–\(G(3)\) refers to the same physical quantity at every grid
point, regardless of the raw eigensolver order.
If no unique causal eigenvector is found, the original numerical order is retained.

\paragraph*{Convention for the Physically Ordered Spectrum}\label{par:spectrum_convention}
The final output of our numerical pipeline is the physically ordered set of eigenvalues \(\{\lambda_{G(0)},\lambda_{G(1)},\lambda_{G(2)},\lambda_{G(3)}\}\) and their eigenvectors. For Type I regions, \(\lambda_{G(0)}\) corresponds to \(-\kappa\varrho_p\), while \(\lambda_{G(1)},\lambda_{G(2)},\lambda_{G(3)}\) correspond to \(\kappa p_{(r)}, \kappa p_{(\theta)}, \kappa p_{(\phi)}\) respectively, based on the eigenvector projections detailed in the physical reordering. For Type IV regions, \(\lambda_{G(0)},\lambda_{G(1)}\) are the complex pair \(\xi \pm i\mu\), and \(\lambda_{G(2)},\lambda_{G(3)}\) are the sorted real eigenvalues corresponding to the transverse pressures.

\paragraph*{Reliability and precision} For every initially computed eigenpair $\{\lambda_I, \bm{v}_I\}_{I=1..4}$, we verify the residual 
$\|M{\bm v}_I-\lambda_I{\bm{v}}_I\|_2 = \mathcal{O}(10^{-p})$ at the working precision $p$. 
The numerical matrix $M=G^{\hat{\alpha}}{}_{\hat{\beta}}$, derived from real tetrad components, 
is itself real; any non-real eigenvalues occur in complex conjugate pairs. Reliability is further supported by verifying that the sum of the final, physically reordered eigenvalues, $\sum_{\alpha=0}^{3}\lambda_{G(\alpha)}$, equals the trace 
$\mathrm{Tr}\,M = G^{\hat{\beta}}{}_{\hat{\beta}} =  G^{(\beta)}{}_{(\beta)} =G = \kappa \, T =-R$ (the scalar curvature invariant) 
and is purely real to $\mathcal{O}(10^{-p})$ accuracy. Exploratory parameter scans are performed at machine precision ($p\approx16$ decimal digits), whereas all quoted numbers and the figures employ $p=50$.

\begin{figure}[htbp] 
    \centering 

    \begin{subfigure}[b]{0.32\textwidth} 
        \centering
        \includegraphics[width=\linewidth]{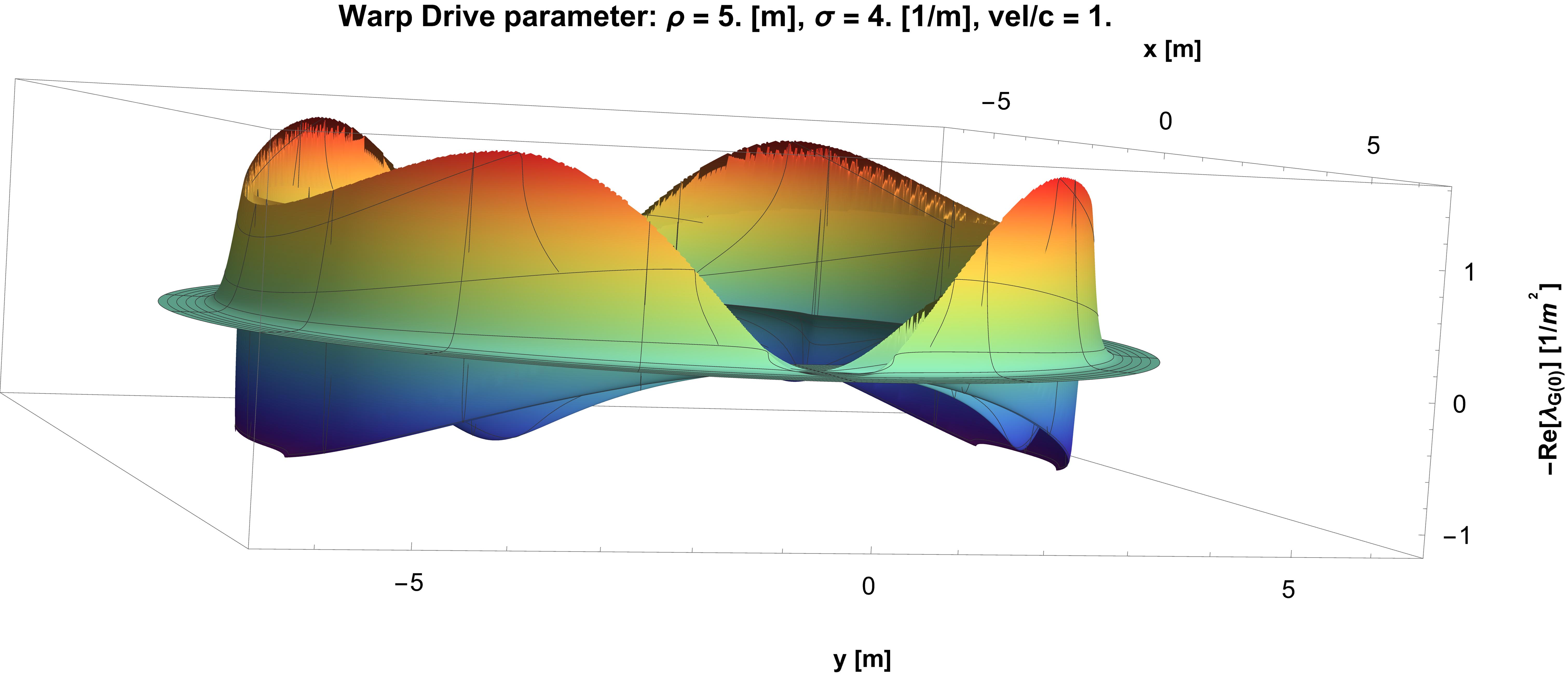}
        \caption{Alcubierre: $-\operatorname{Re}[\lambda_{G(0)}]$} 
        \label{fig:sub_r1_c1 A}
    \end{subfigure}
    \hfill 
    \begin{subfigure}[b]{0.32\textwidth} 
        \centering
        \includegraphics[width=\linewidth]{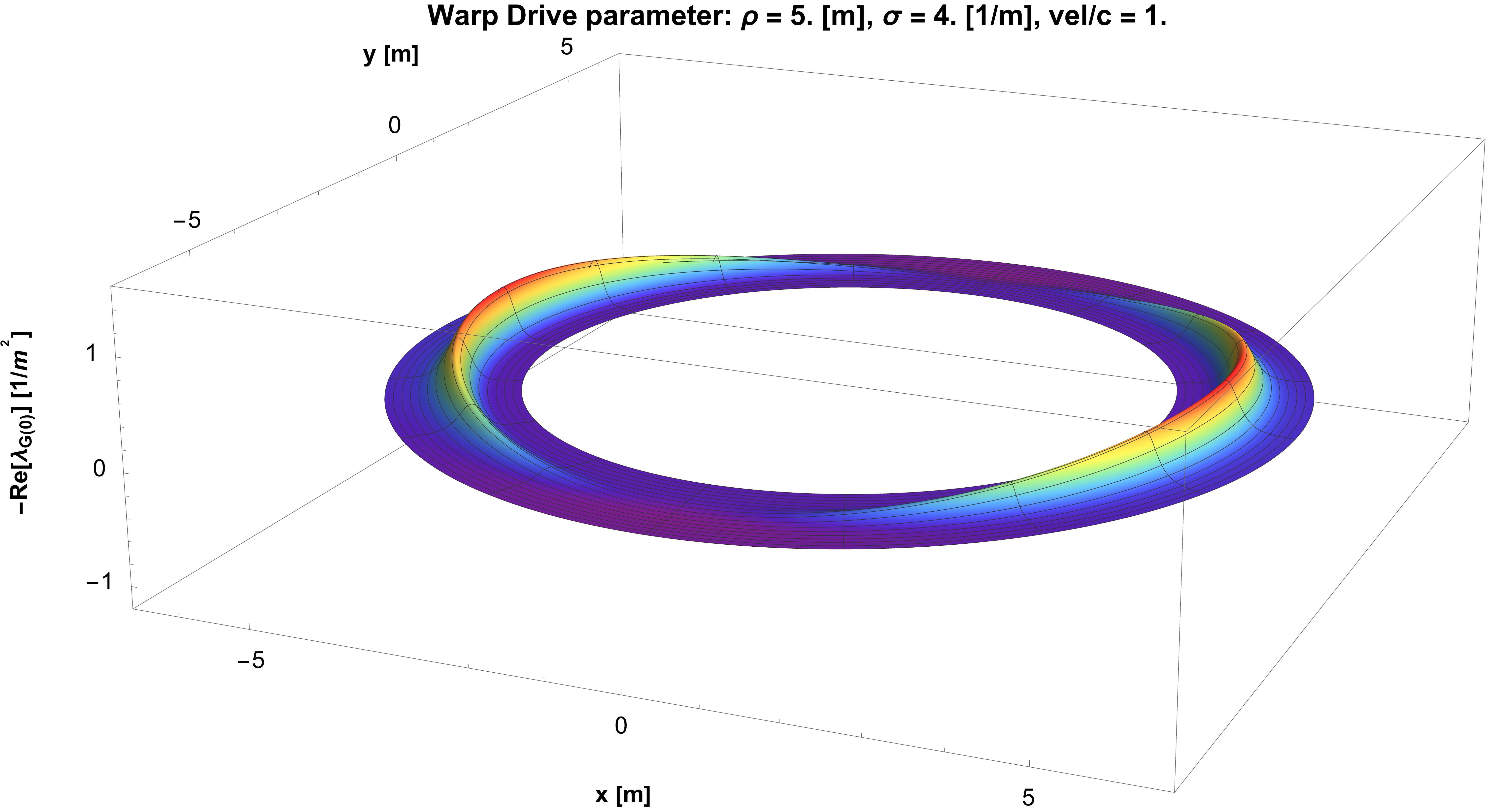} 
        \caption{Irrotational: $-\operatorname{Re}[\lambda_{G(0)}]$} 
        \label{fig:sub_r1_c2 A} 
    \end{subfigure}
    \hfill 
    \begin{subfigure}[b]{0.32\textwidth} 
        \centering
        \includegraphics[width=\linewidth]{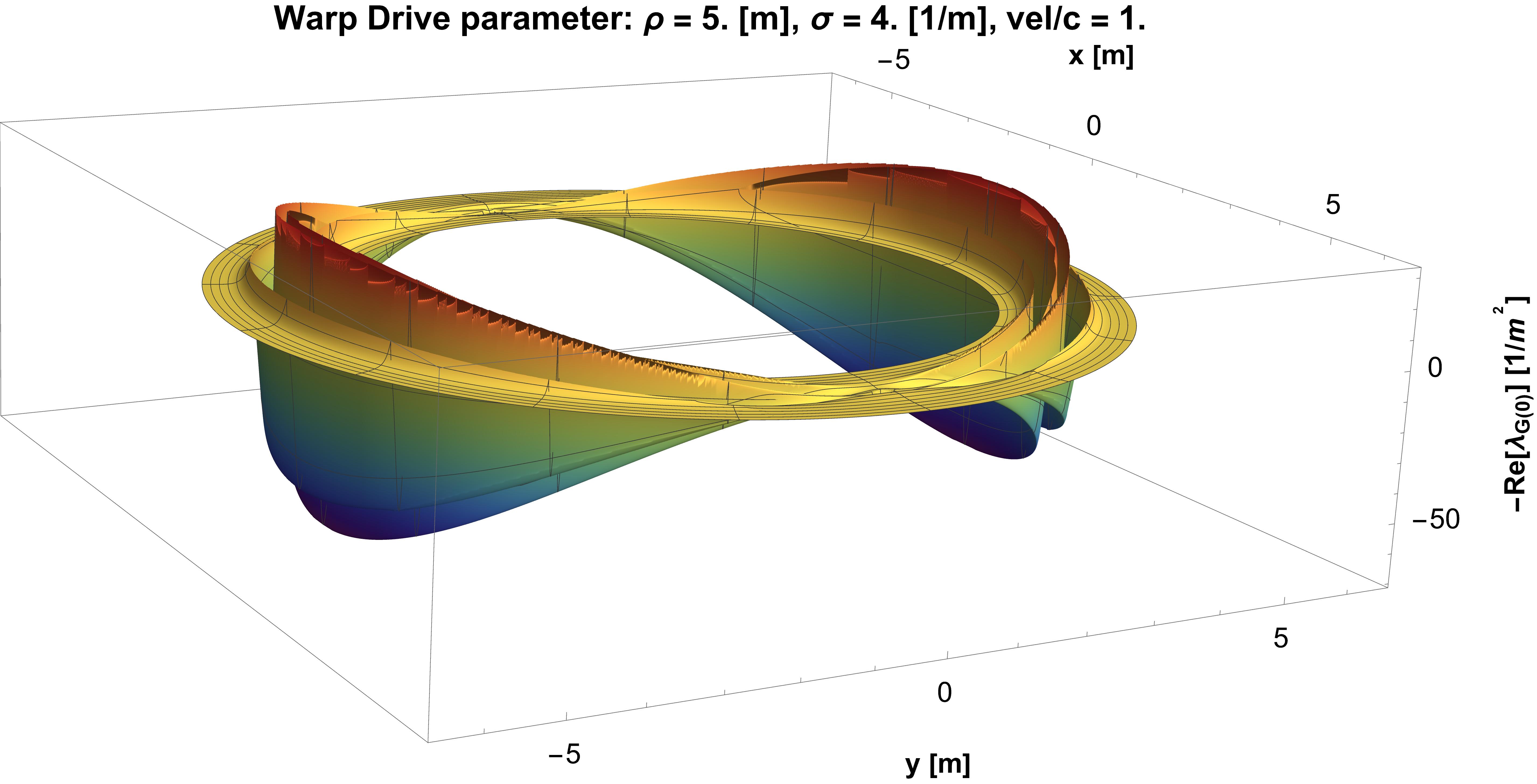}
        \caption{Natário: $-\operatorname{Re}[\lambda_{G(0)}]$}
        \label{fig:sub_r1_c3 A} 
    \end{subfigure}

    \vspace{0.5cm} 

    \begin{subfigure}[b]{0.32\textwidth} 
        \centering
        \includegraphics[width=\linewidth]{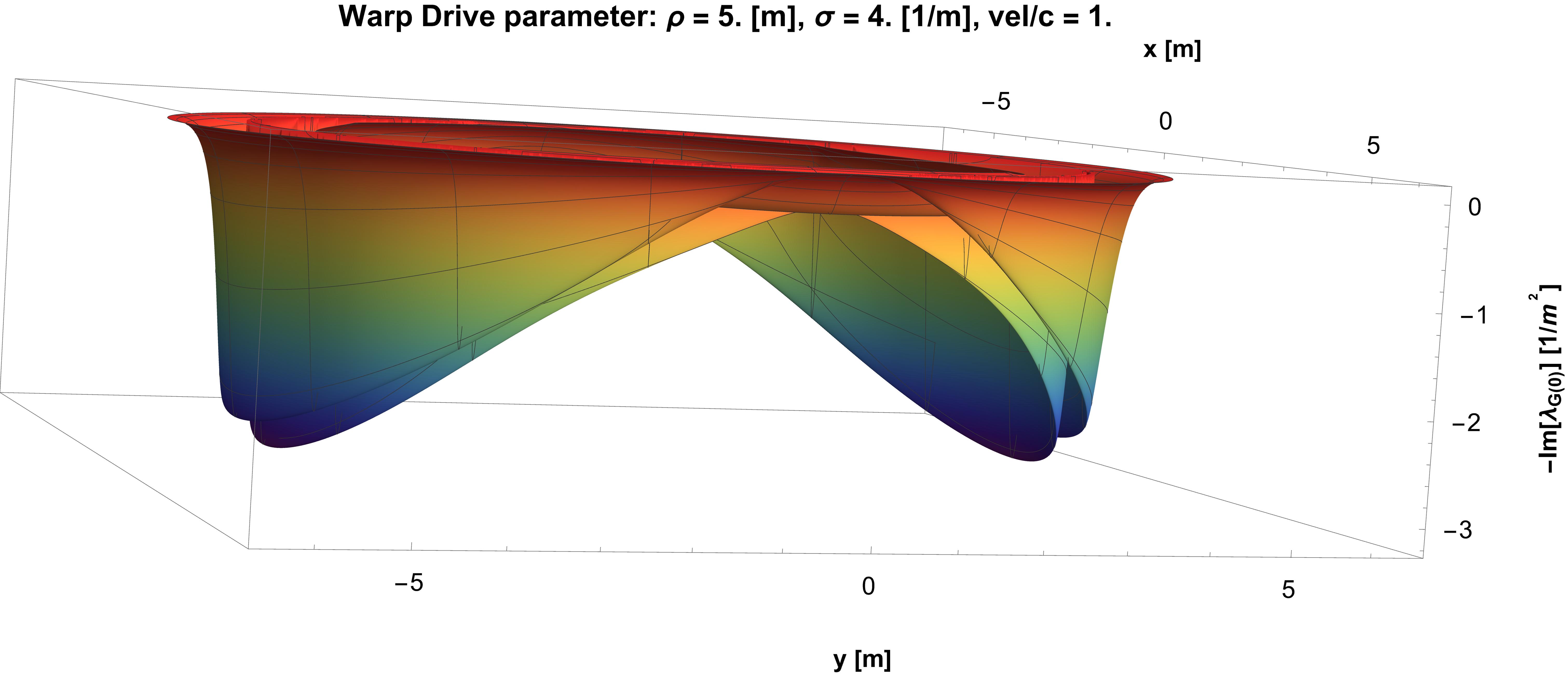} 
        \caption{Alcubierre: $-\operatorname{Im}[\lambda_{G(0)}]$} 
        \label{fig:sub_r2_c1 A} 
    \end{subfigure}
    \hfill 
    \begin{subfigure}[b]{0.32\textwidth} 
        \centering
        \includegraphics[width=\linewidth]{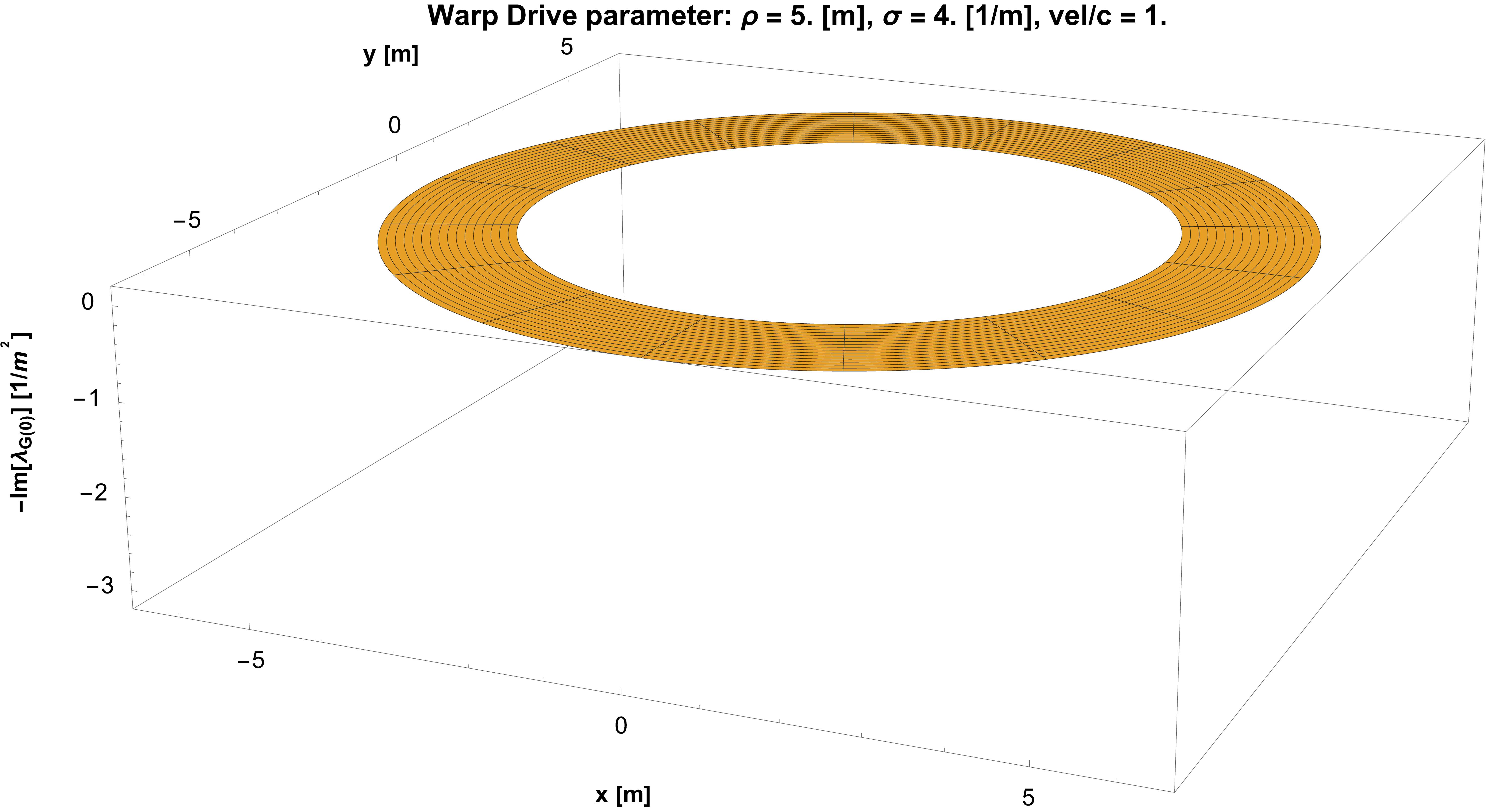}
        \caption{Irrotational: $-\operatorname{Im}[\lambda_{G(0)}]$} 
        \label{fig:sub_r2_c2 A}
    \end{subfigure}
    \hfill 
    \begin{subfigure}[b]{0.32\textwidth} 
        \centering
        \includegraphics[width=\linewidth]{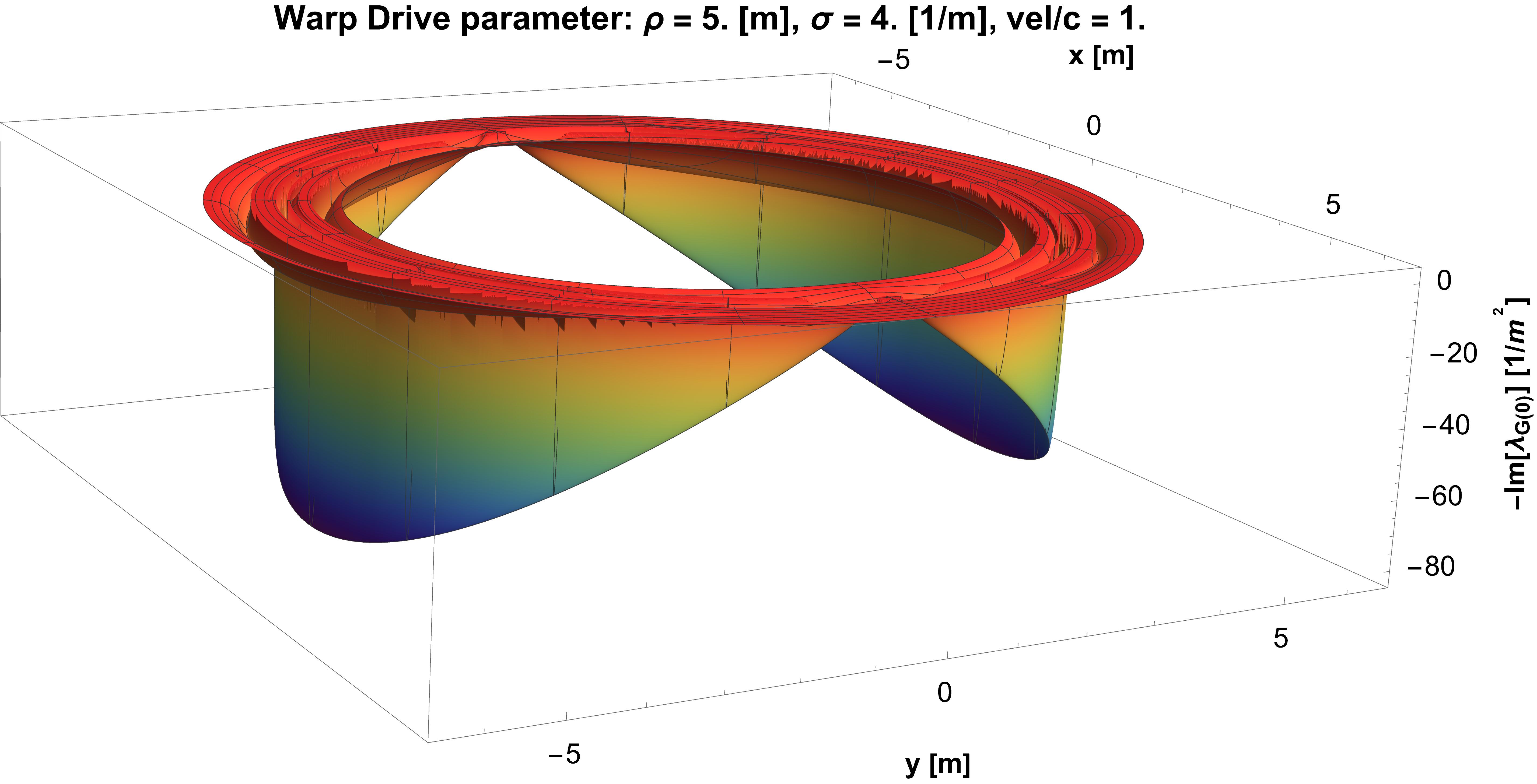}
        \caption{Natário: $-\operatorname{Im}[\lambda_{G(0)}]$}
        \label{fig:sub_r2_c3 A}
    \end{subfigure}

    \vspace{0.5cm} 

    \begin{subfigure}[b]{0.32\textwidth} 
        \centering
        \includegraphics[width=\linewidth]{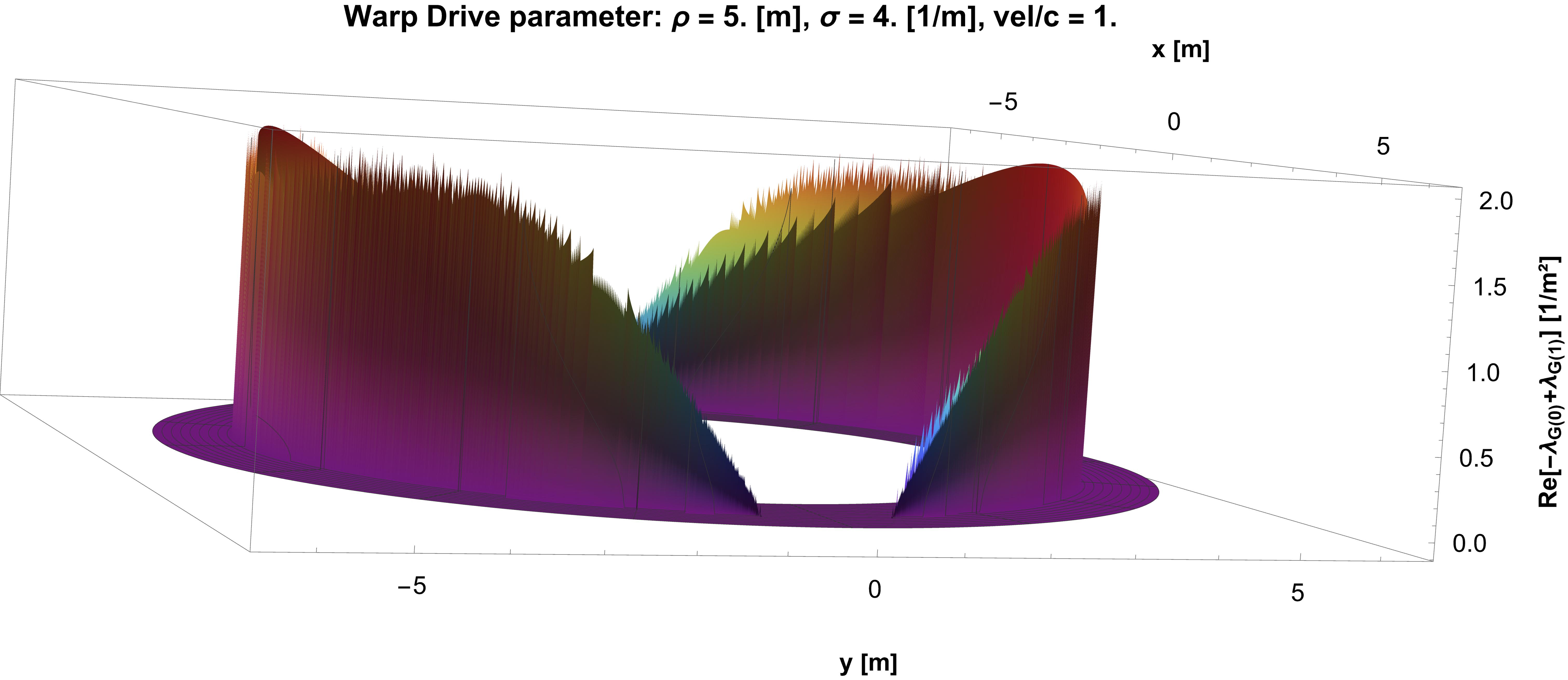}
        \caption{Alc.: $\operatorname{Re}[-\lambda_{G(0)}+\lambda_{G(1)}]$} 
        \label{fig:sub_r3_c1 A}
    \end{subfigure}
    \hfill 
    \begin{subfigure}[b]{0.32\textwidth} 
        \centering
        \includegraphics[width=\linewidth]{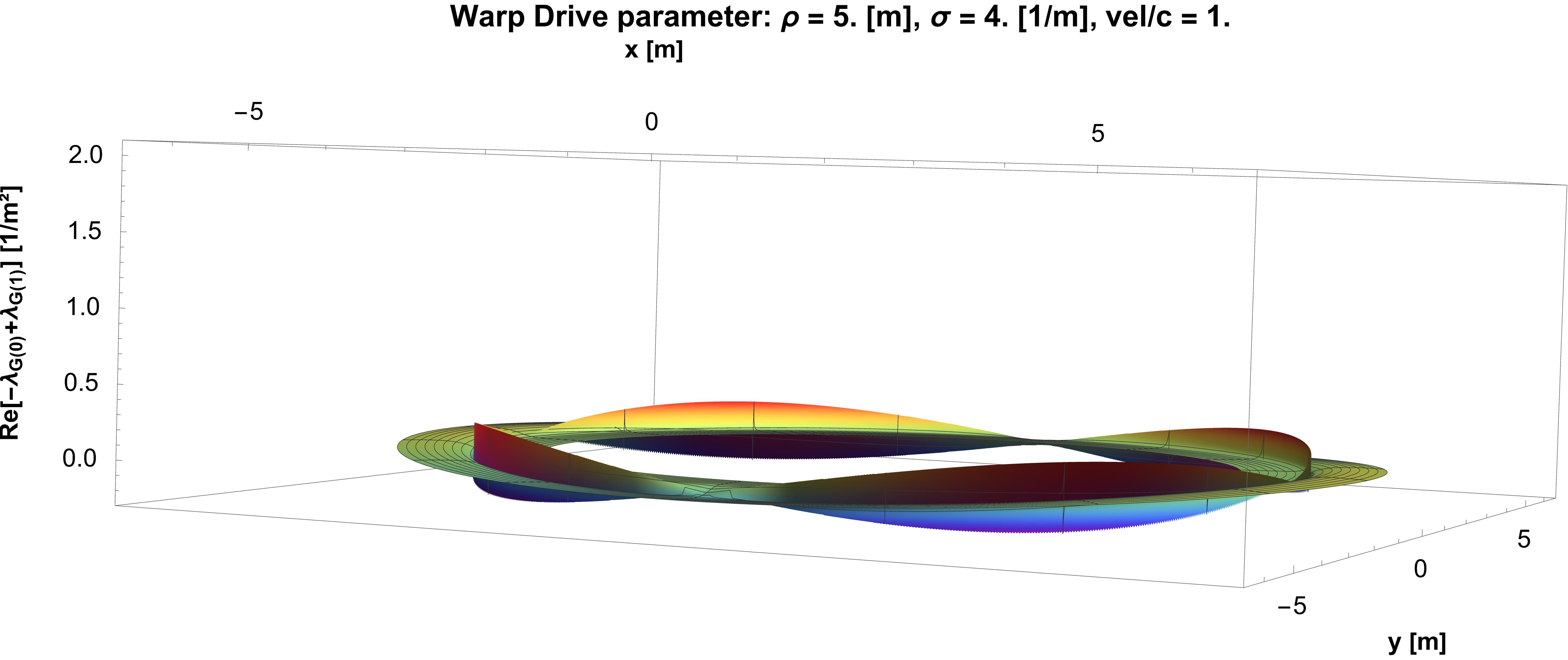}
        \caption{Irr.: $\operatorname{Re}[-\lambda_{G(0)}+\lambda_{G(1)}]$}
        \label{fig:sub_r3_c2 A}
    \end{subfigure}
    \hfill 
    \begin{subfigure}[b]{0.32\textwidth} 
        \centering
        \includegraphics[width=\linewidth]{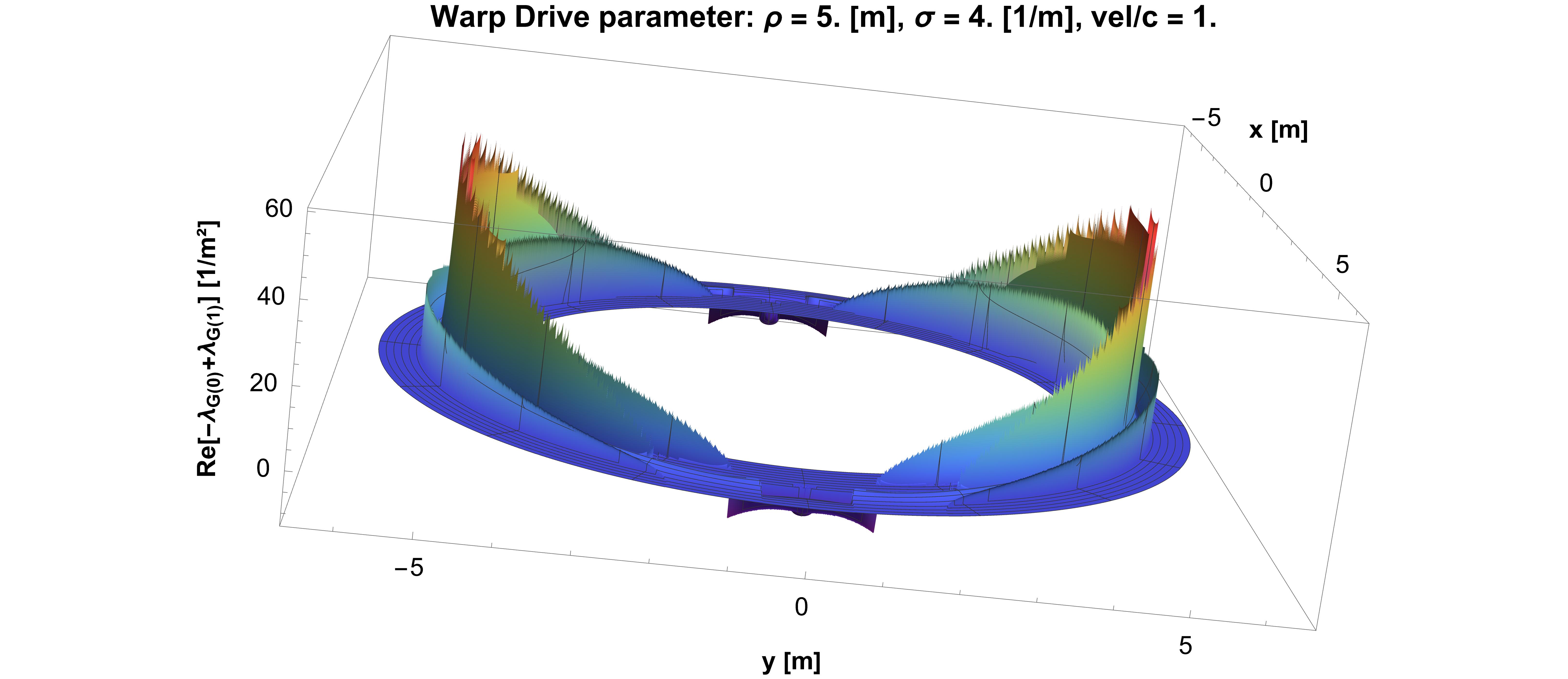}
        \caption{Nat.: $\operatorname{Re}[-\lambda_{G(0)}+\lambda_{G(1)}]$}
        \label{fig:sub_r3_c3 A}
    \end{subfigure}


\caption{Distribution of physically reordered eigenvalues $\lambda_{G(0)}$ and $\lambda_{G(1)}$ against the axis of travel $x$ and transverse $y$ (coordinates in Fig.~\ref{Fig1_SpherCoord}). Columns compare Alcubierre, irrotational, and Natário warp drives. Rows display: (1)~$-\Re[\lambda_{G(0)}]$; (2)~$-\Im[\lambda_{G(0)}]$; (3)~$\Re[-\lambda_{G(0)} + \lambda_{G(1)}]$. In Type~I regions, $\Im[\lambda_{G(0)}] = 0$, hence $-\lambda_{G(0)}$ (in row~(1)), equals $\kappa$ times the \emph{proper} energy density; conversely, in Type~IV regions, $-\Im[\lambda_{G(0)}]$(in row~(2)), is non-zero and proportional to the energy flux. See text for detailed physical interpretation and NEC analysis.}
\label{fig:main_3x3_grid A}

\end{figure}

\begin{figure}[htbp] 
    \centering 

    \begin{subfigure}[b]{0.32\textwidth} 
        \centering
        \includegraphics[width=\linewidth]{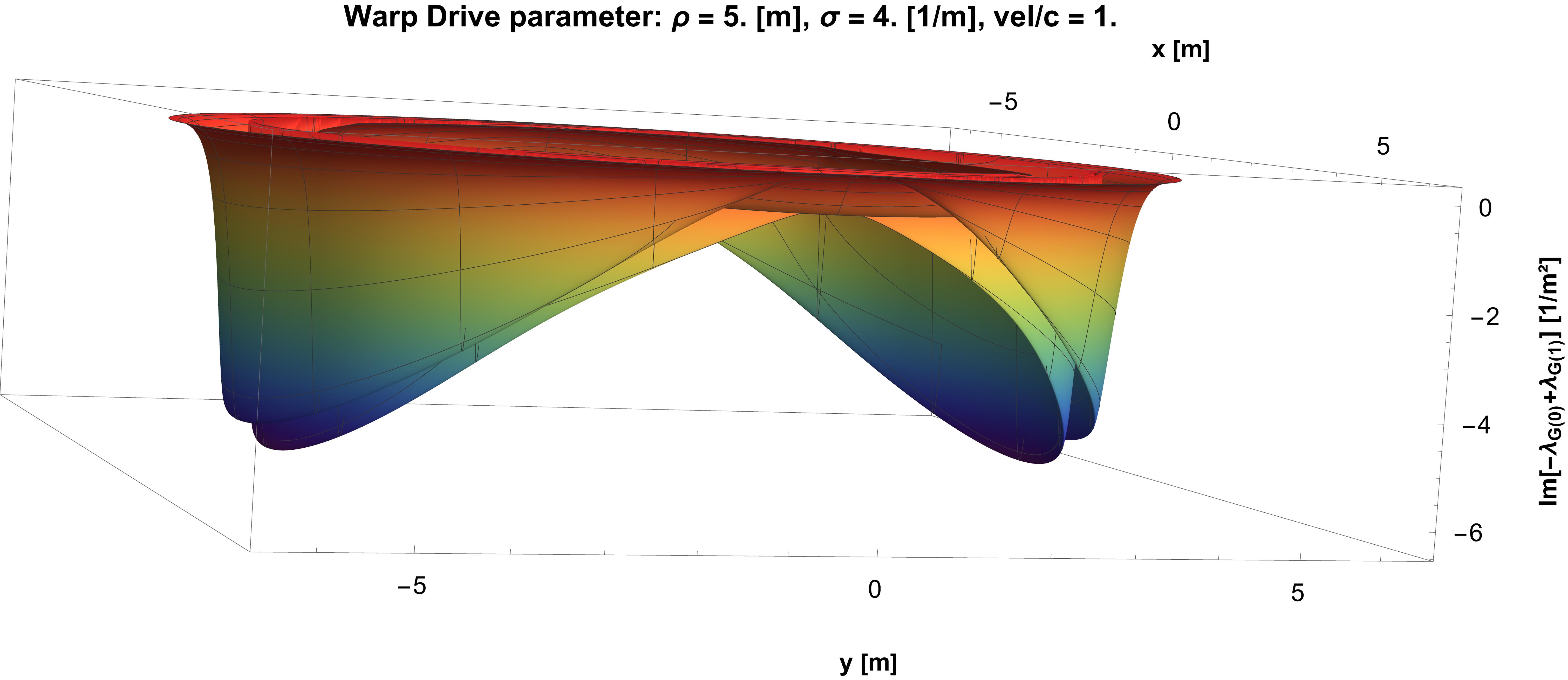}
        \caption{Alc.: $\operatorname{Im}[-\lambda_{G(0)}+\lambda_{G(1)}]$} 
        \label{fig:sub_r1_c1 B}
    \end{subfigure}
    \hfill 
    \begin{subfigure}[b]{0.32\textwidth} 
        \centering
        \includegraphics[width=\linewidth]{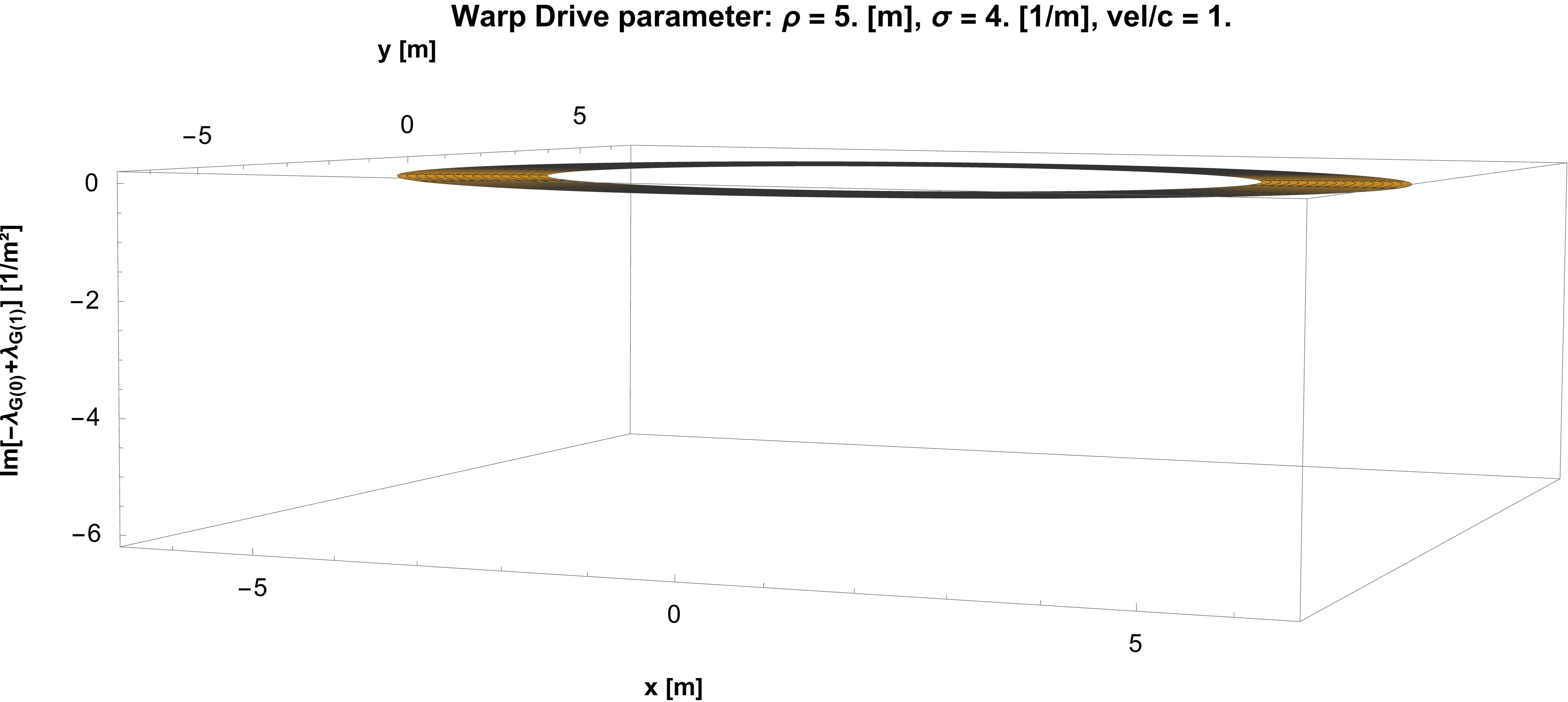}
        \caption{Irr.: $\operatorname{Im}[-\lambda_{G(0)}+\lambda_{G(1)}]$} 
        \label{fig:sub_r1_c2 B}
    \end{subfigure}
    \hfill 
    \begin{subfigure}[b]{0.32\textwidth} 
        \centering
        \includegraphics[width=\linewidth]{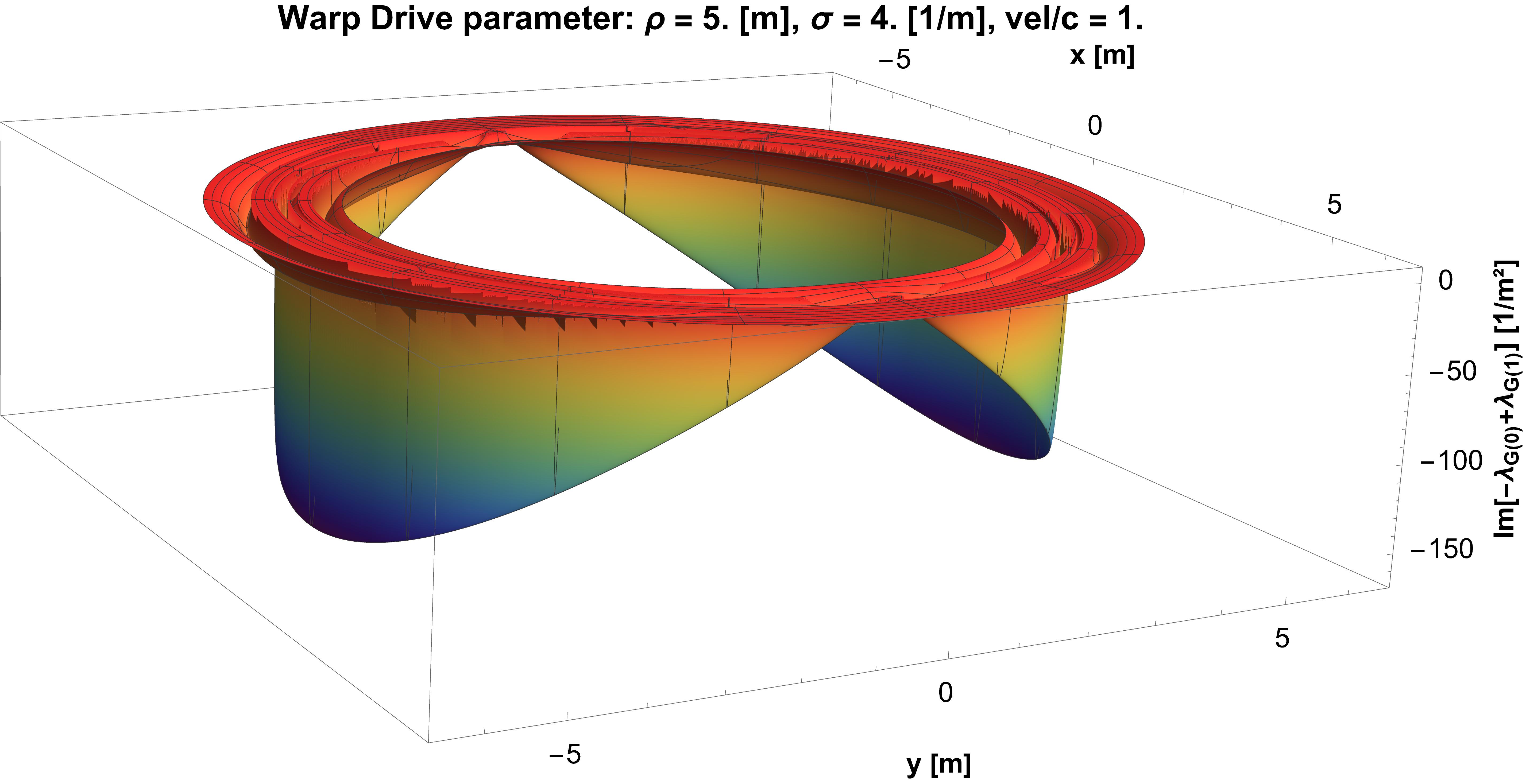}
        \caption{Nat.: $\operatorname{Im}[-\lambda_{G(0)}+\lambda_{G(1)}]$} 
        \label{fig:sub_r1_c3 B}
    \end{subfigure}

    \vspace{0.5cm} 

    \begin{subfigure}[b]{0.32\textwidth} 
        \centering
        \includegraphics[width=\linewidth]{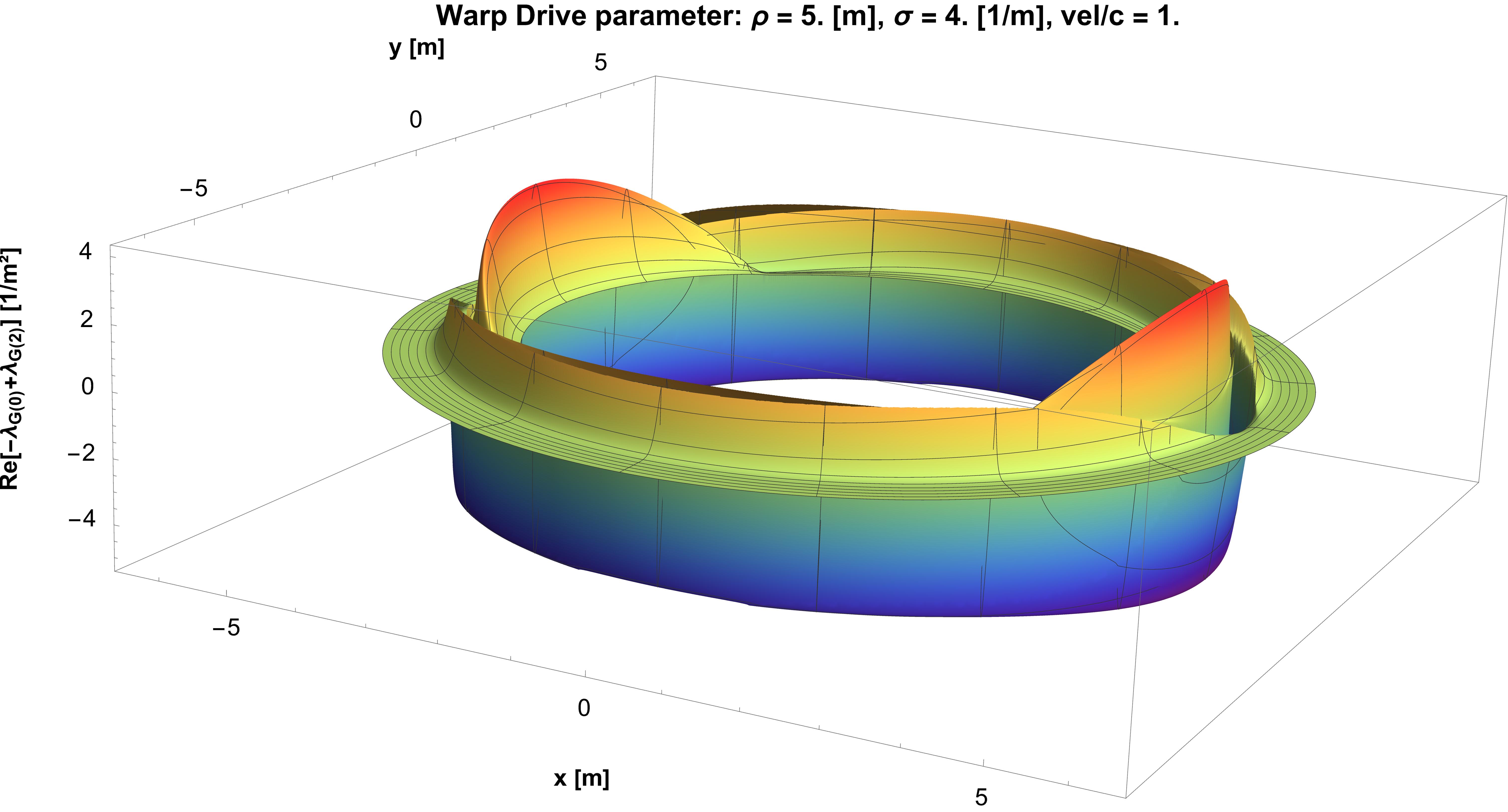}
        \caption{Alc.: $\operatorname{Re}[-\lambda_{G(0)}+\lambda_{G(2)}]$} 
        \label{fig:sub_r2_c1 B}
    \end{subfigure}
    \hfill 
    \begin{subfigure}[b]{0.32\textwidth}
        \centering
        \includegraphics[width=\linewidth]{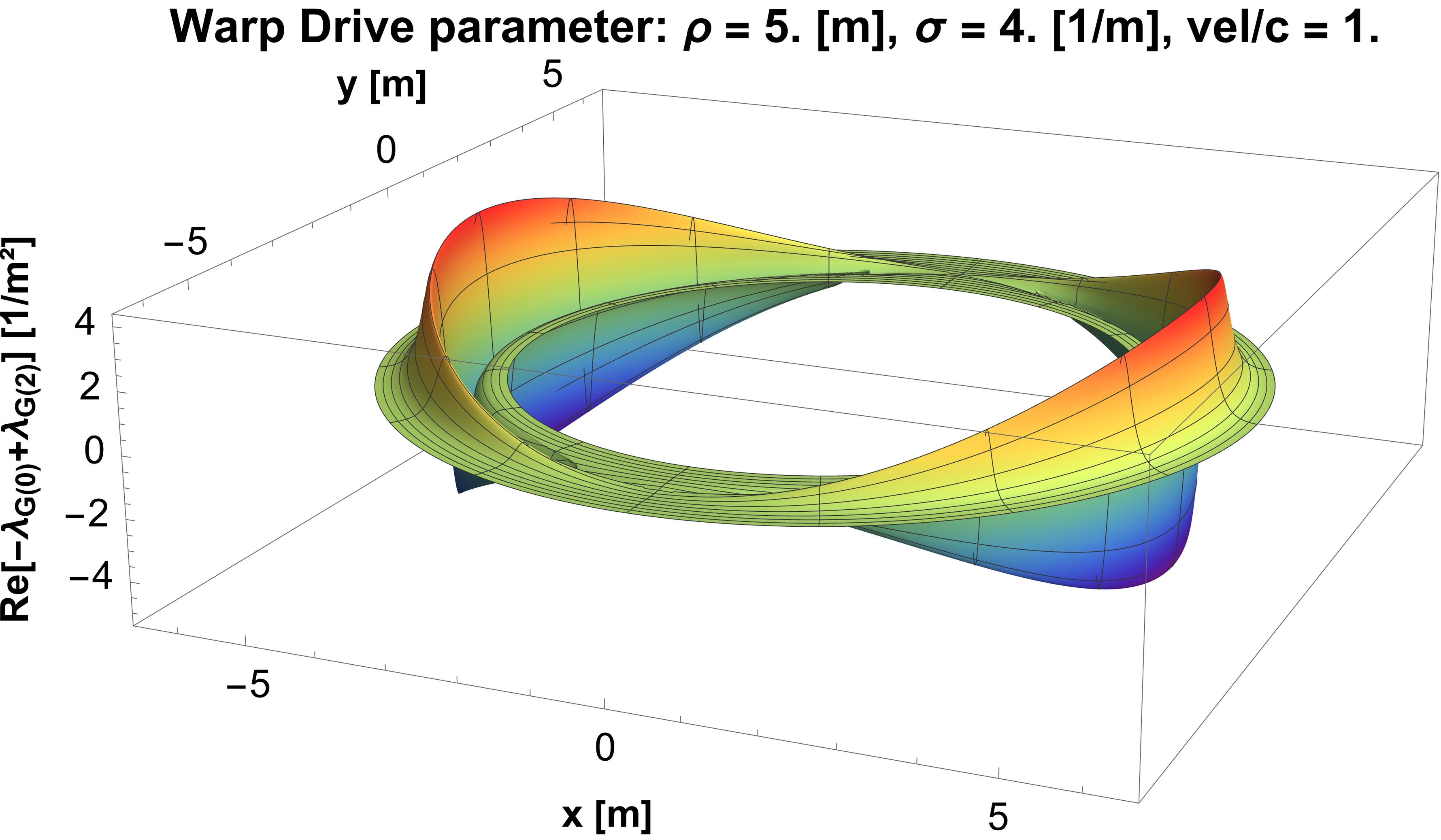}
        \caption{Irr.: $\operatorname{Re}[-\lambda_{G(0)}+\lambda_{G(2)}]$} 
        \label{fig:sub_r2_c2 B}
    \end{subfigure}
    \hfill 
    \begin{subfigure}[b]{0.32\textwidth}
        \centering
        \includegraphics[width=\linewidth]{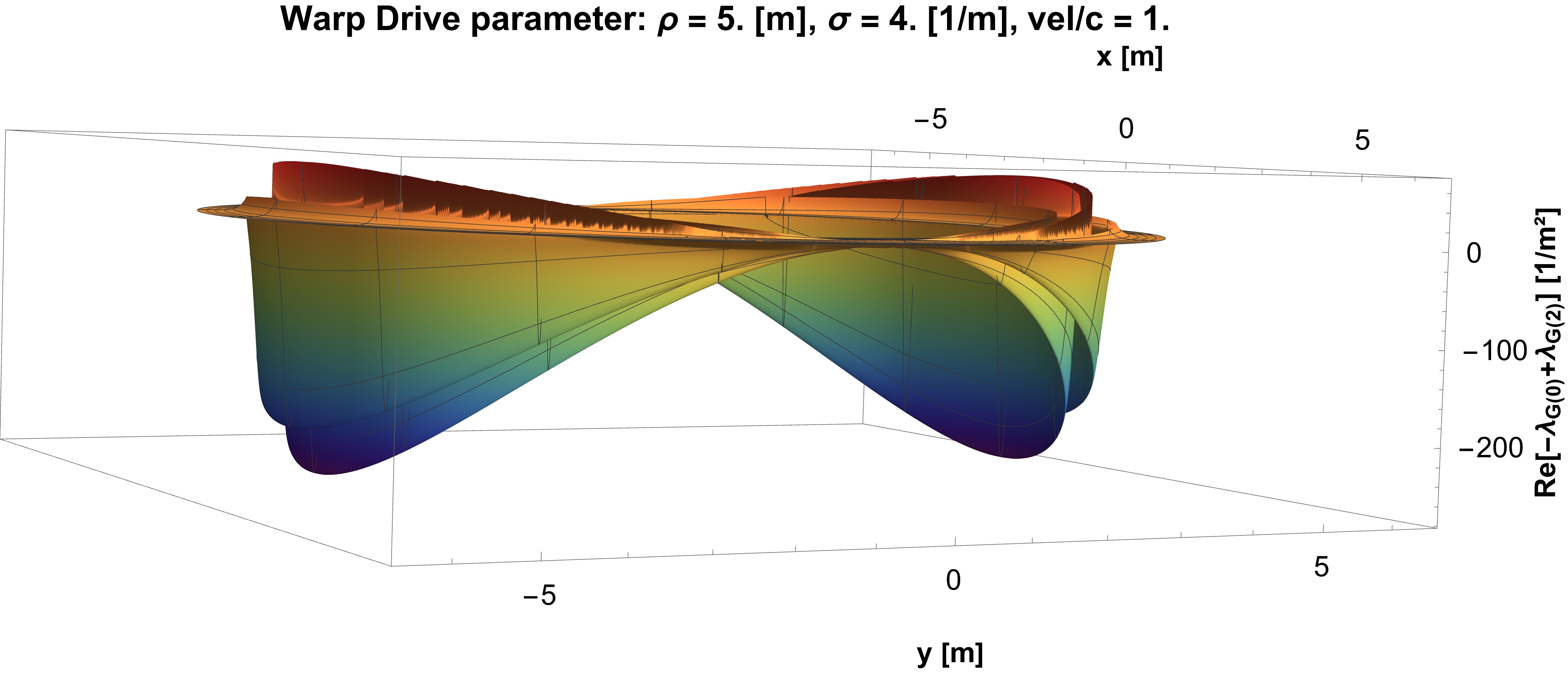}
        \caption{Nat.: $\operatorname{Re}[-\lambda_{G(0)}+\lambda_{G(2)}]$}
        \label{fig:sub_r2_c3 B}
    \end{subfigure}

    \vspace{0.5cm} 

    \begin{subfigure}[b]{0.32\textwidth}
        \centering
        \includegraphics[width=\linewidth]{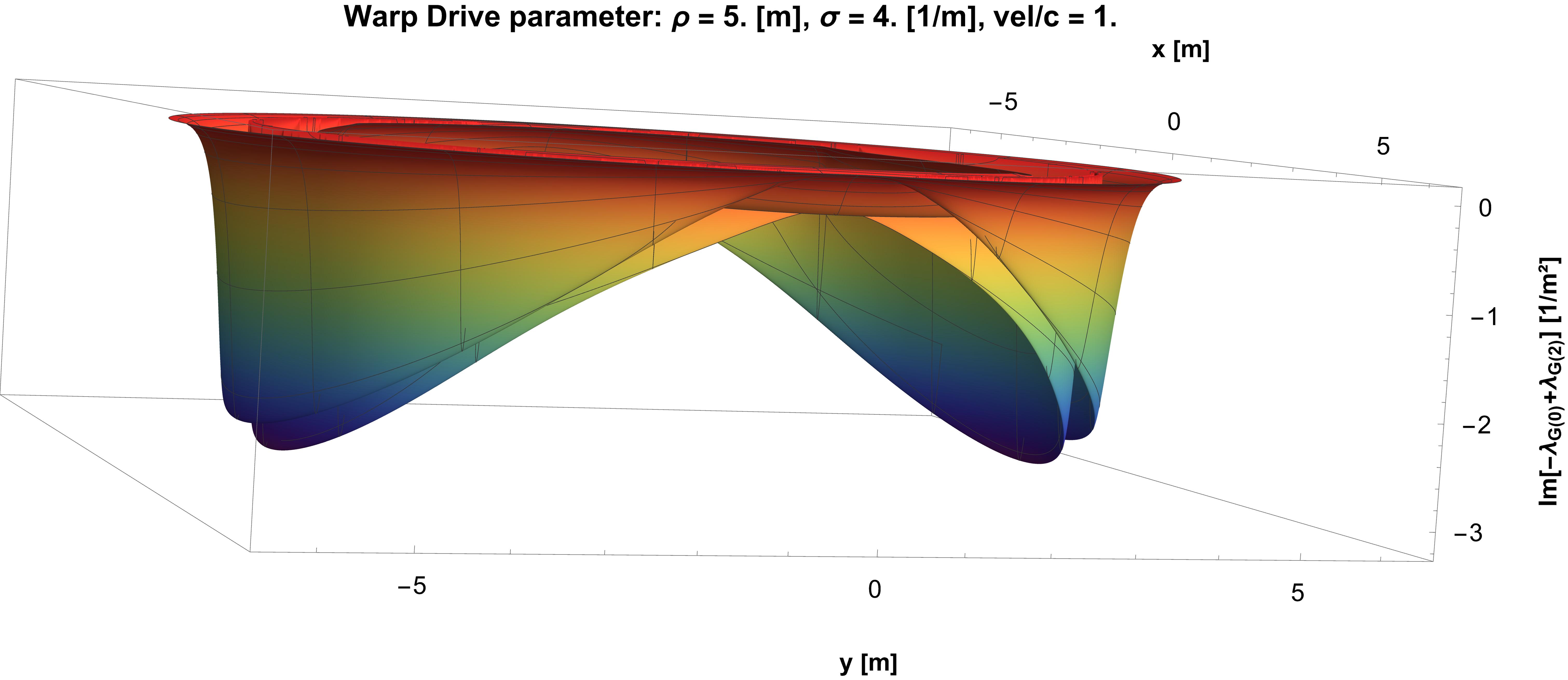}
        \caption{Alc.: $\operatorname{Im}[-\lambda_{G(0)}+\lambda_{G(2)}]$} 
        \label{fig:sub_r3_c1 B}
    \end{subfigure}
    \hfill 
    \begin{subfigure}[b]{0.32\textwidth}
        \centering
        \includegraphics[width=\linewidth]{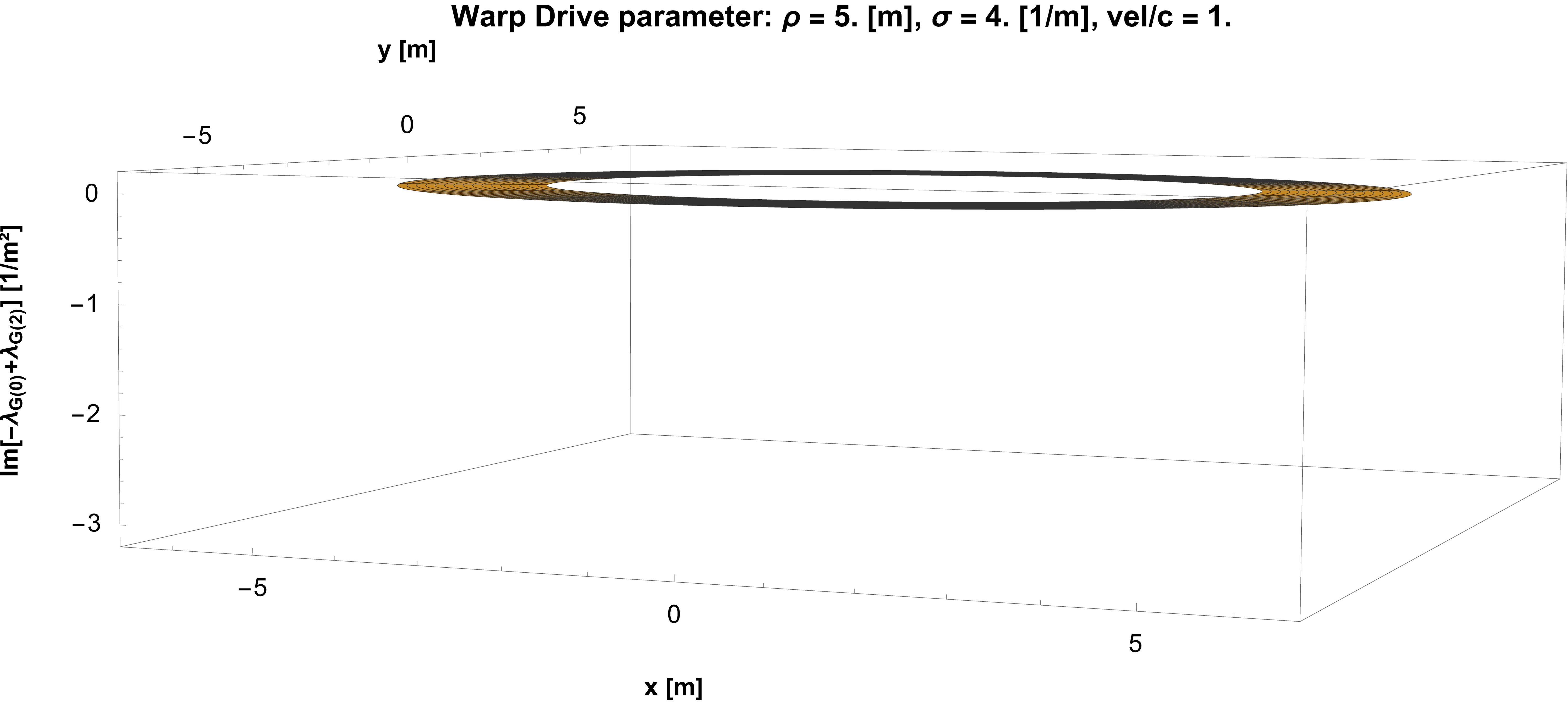}
        \caption{Irr.: $\operatorname{Im}[-\lambda_{G(0)}+\lambda_{G(2)}]$} 
        \label{fig:sub_r3_c2 B}
    \end{subfigure}
    \hfill 
    \begin{subfigure}[b]{0.32\textwidth}
        \centering
        \includegraphics[width=\linewidth]{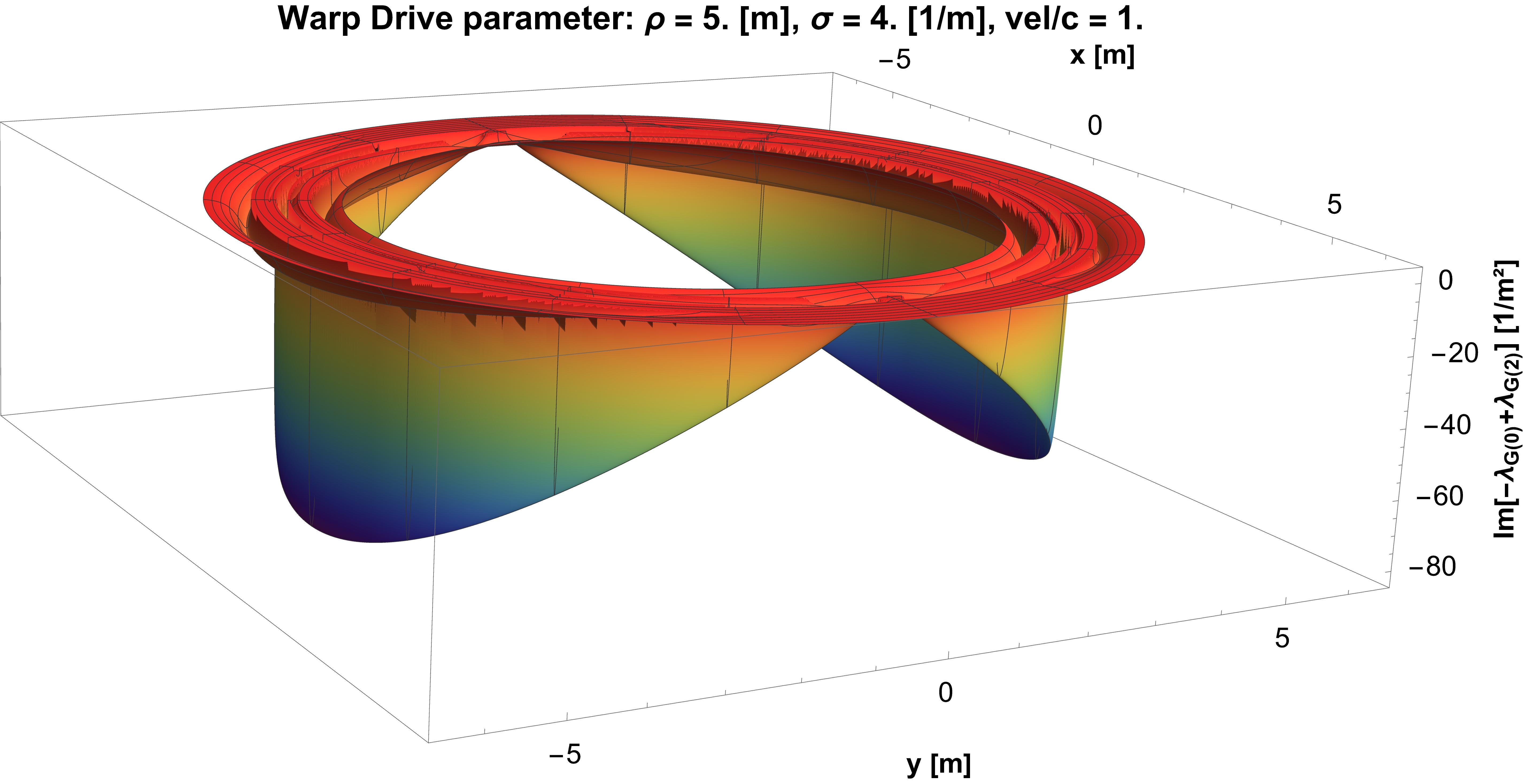}
        \caption{Nat.: $\operatorname{Im}[-\lambda_{G(0)}+\lambda_{G(2)}]$} 
        \label{fig:sub_r3_c3 B}
    \end{subfigure}

    \caption{Distribution of eigenvalue combinations relevant to energy condition analysis. Plotted against the axis of travel $x$ and transverse coordinate $y$ (coordinates in Fig.~\ref{Fig1_SpherCoord}). Columns compare Alcubierre, irrotational, and Natário warp drives. Rows display: (1)~$\Im[-\lambda_{G(0)}+\lambda_{G(1)}]$; (2)~$\Re[-\lambda_{G(0)}+\lambda_{G(2)}]$; (3)~$\Im[-\lambda_{G(0)}+\lambda_{G(2)}]$. See text for null energy condition (NEC) interpretations.}
    \label{fig:main_3x3_grid B}

\end{figure}

\begin{figure}[htbp] 
    \centering 

    \begin{subfigure}[b]{0.32\textwidth} 
        \centering
        \includegraphics[width=\linewidth]{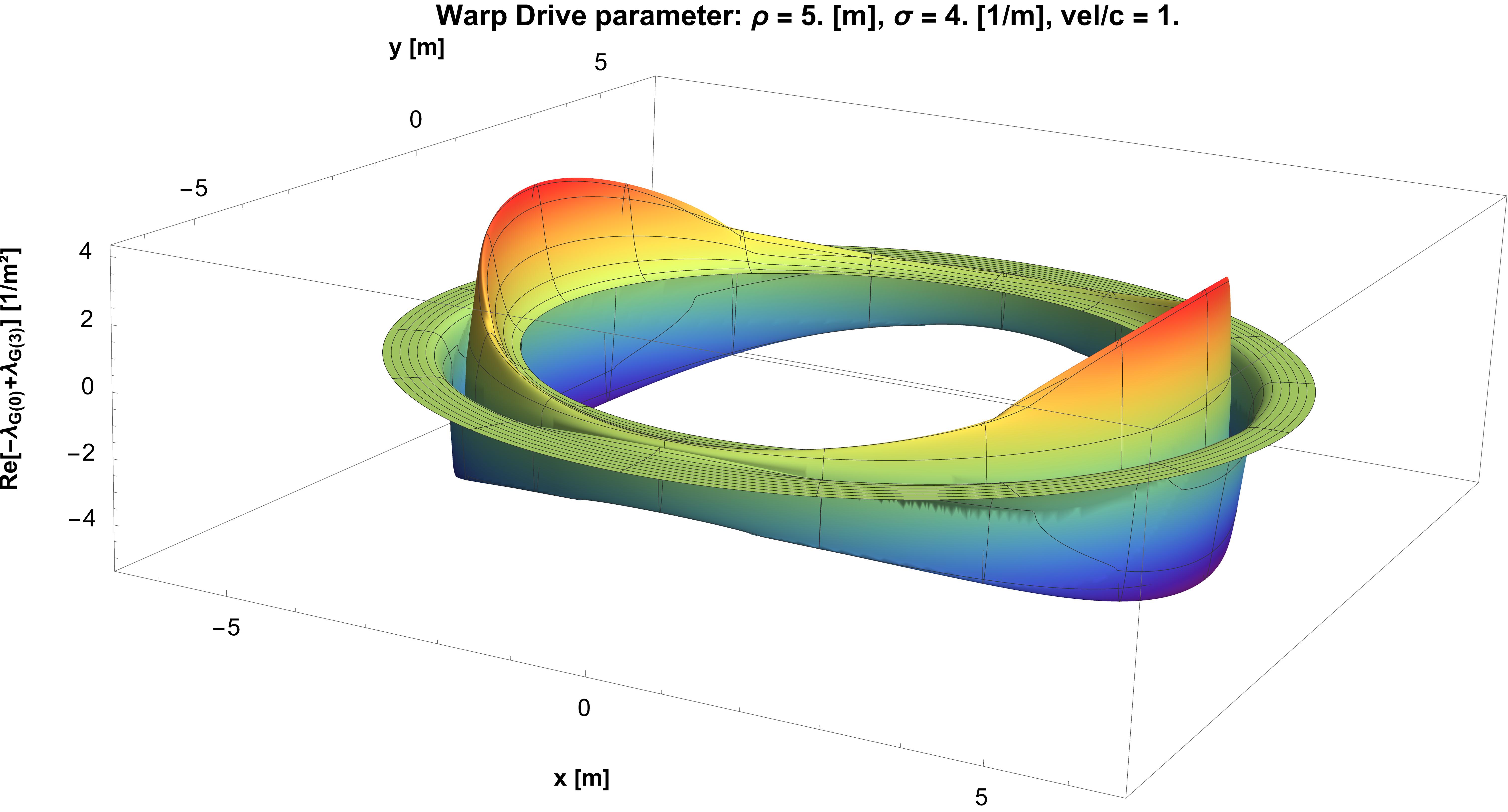}
        \caption{Alc.: $\operatorname{Re}[-\lambda_{G(0)}+\lambda_{G(3)}]$} 
        \label{fig:sub_r1_c1 C}
    \end{subfigure}
    \hfill 
    \begin{subfigure}[b]{0.32\textwidth}
        \centering
        \includegraphics[width=\linewidth]{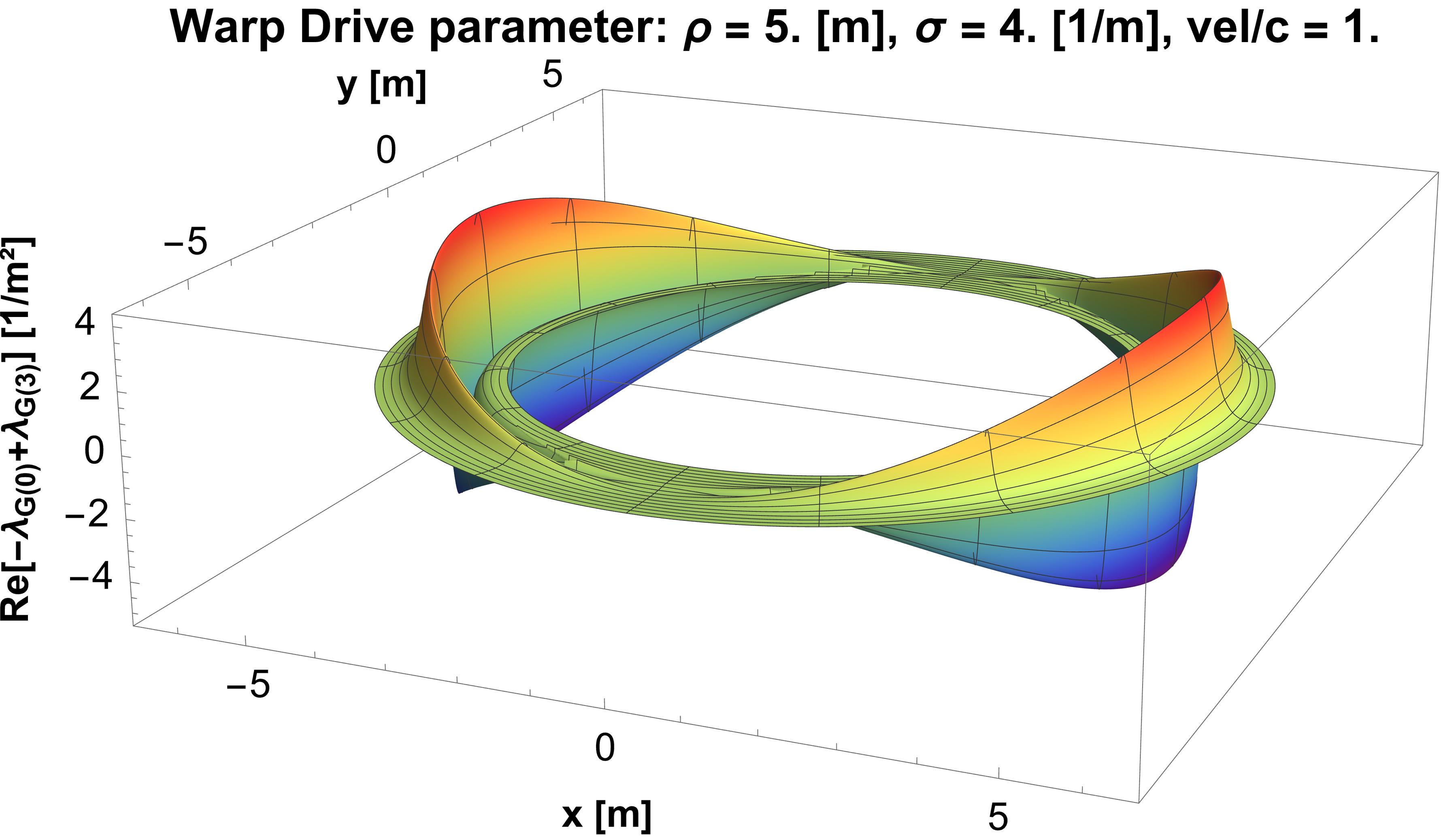}
        \caption{Irr.: $\operatorname{Re}[-\lambda_{G(0)}+\lambda_{G(3)}]$} 
        \label{fig:sub_r1_c2 C}
    \end{subfigure}
    \hfill 
    \begin{subfigure}[b]{0.32\textwidth}
        \centering
        \includegraphics[width=\linewidth]{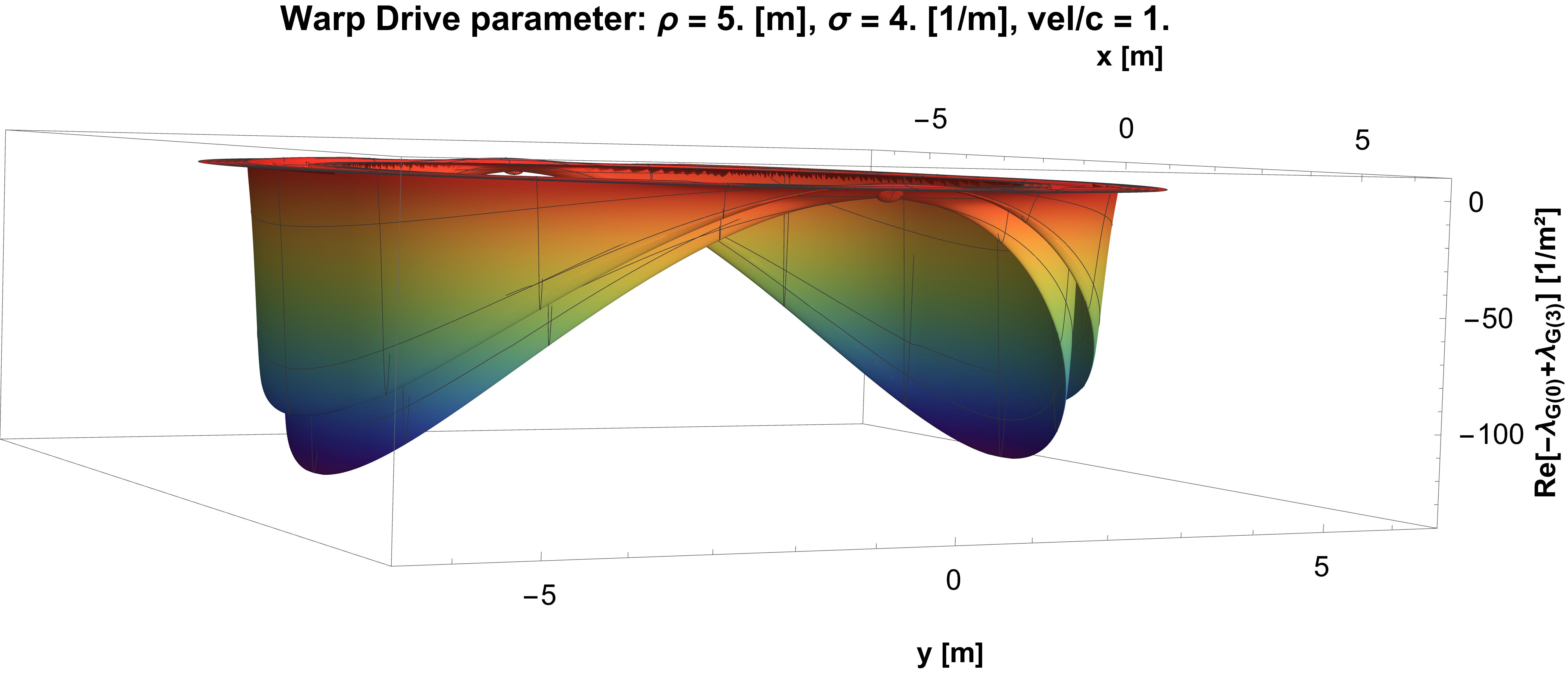}
        \caption{Nat.: $\operatorname{Re}[-\lambda_{G(0)}+\lambda_{G(3)}]$} 
        \label{fig:sub_r1_c3 C}
    \end{subfigure}

    \vspace{0.5cm} 

    \begin{subfigure}[b]{0.32\textwidth} 
        \centering
        \includegraphics[width=\linewidth]{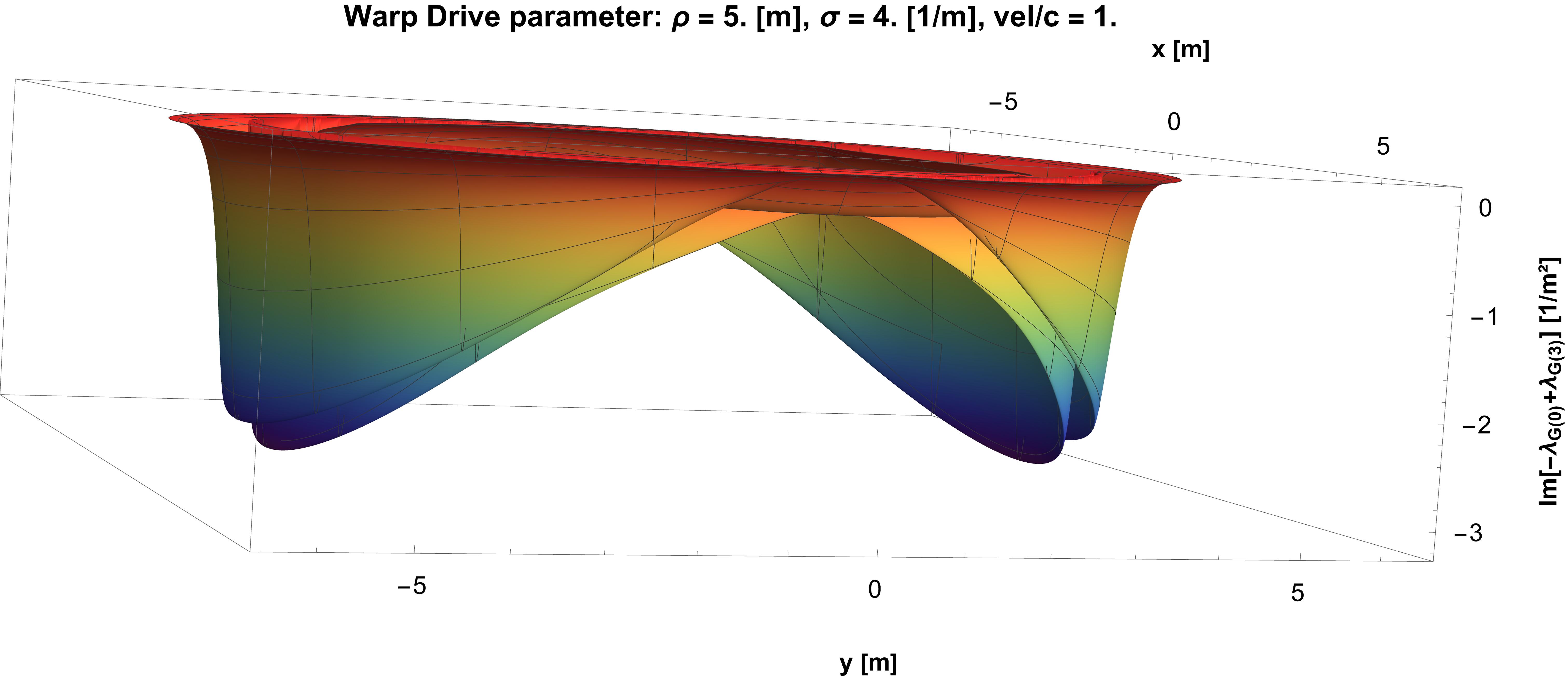}
        \caption{Alc.: $\operatorname{Im}[-\lambda_{G(0)}+\lambda_{G(3)}]$} 
        \label{fig:sub_r2_c1 C}
    \end{subfigure}
    \hfill 
    \begin{subfigure}[b]{0.32\textwidth}
        \centering
        \includegraphics[width=\linewidth]{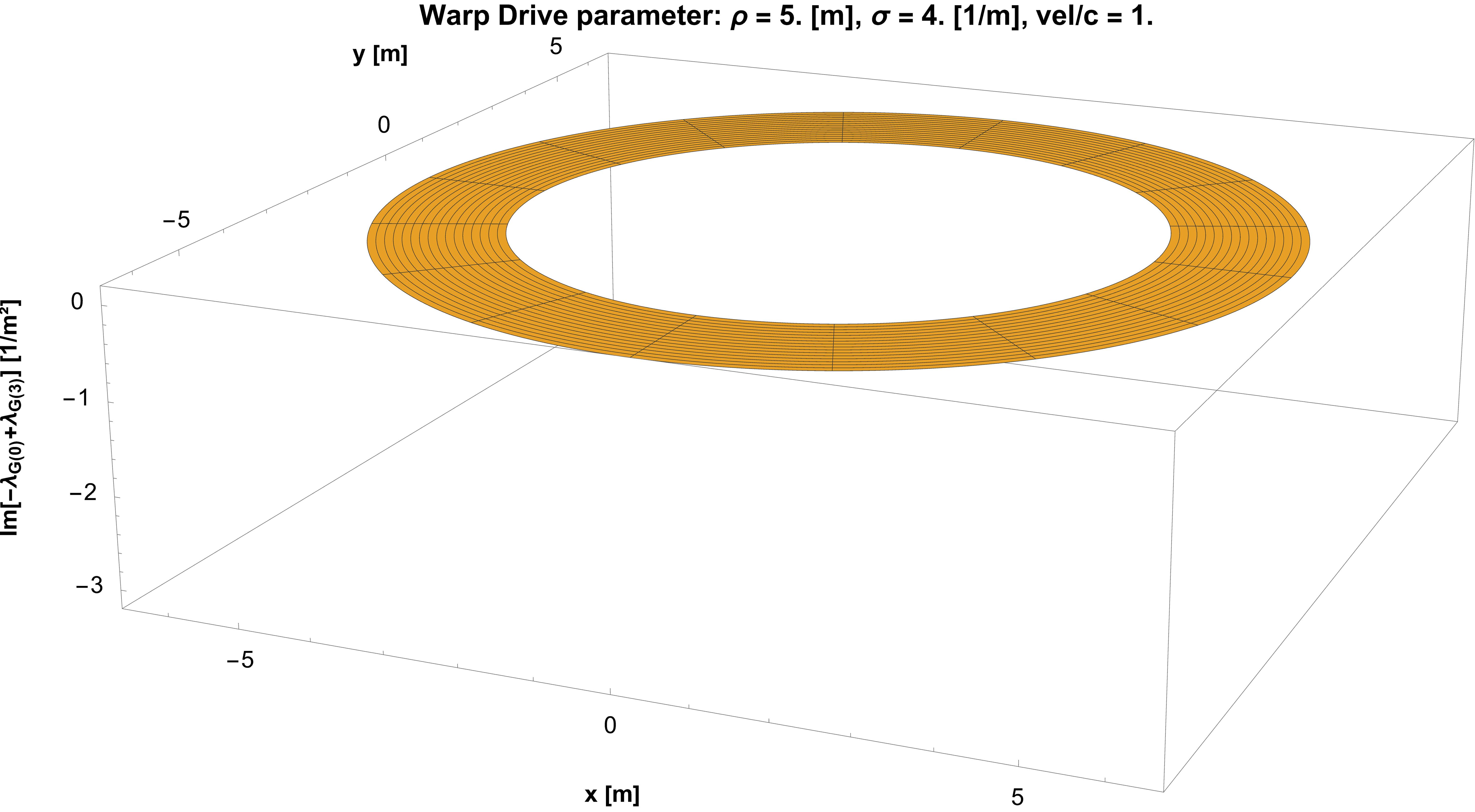}
        \caption{Irr.: $\operatorname{Im}[-\lambda_{G(0)}+\lambda_{G(3)}]$} 
        \label{fig:sub_r2_c2 C}
    \end{subfigure}
    \hfill 
    \begin{subfigure}[b]{0.32\textwidth}
        \centering
        \includegraphics[width=\linewidth]{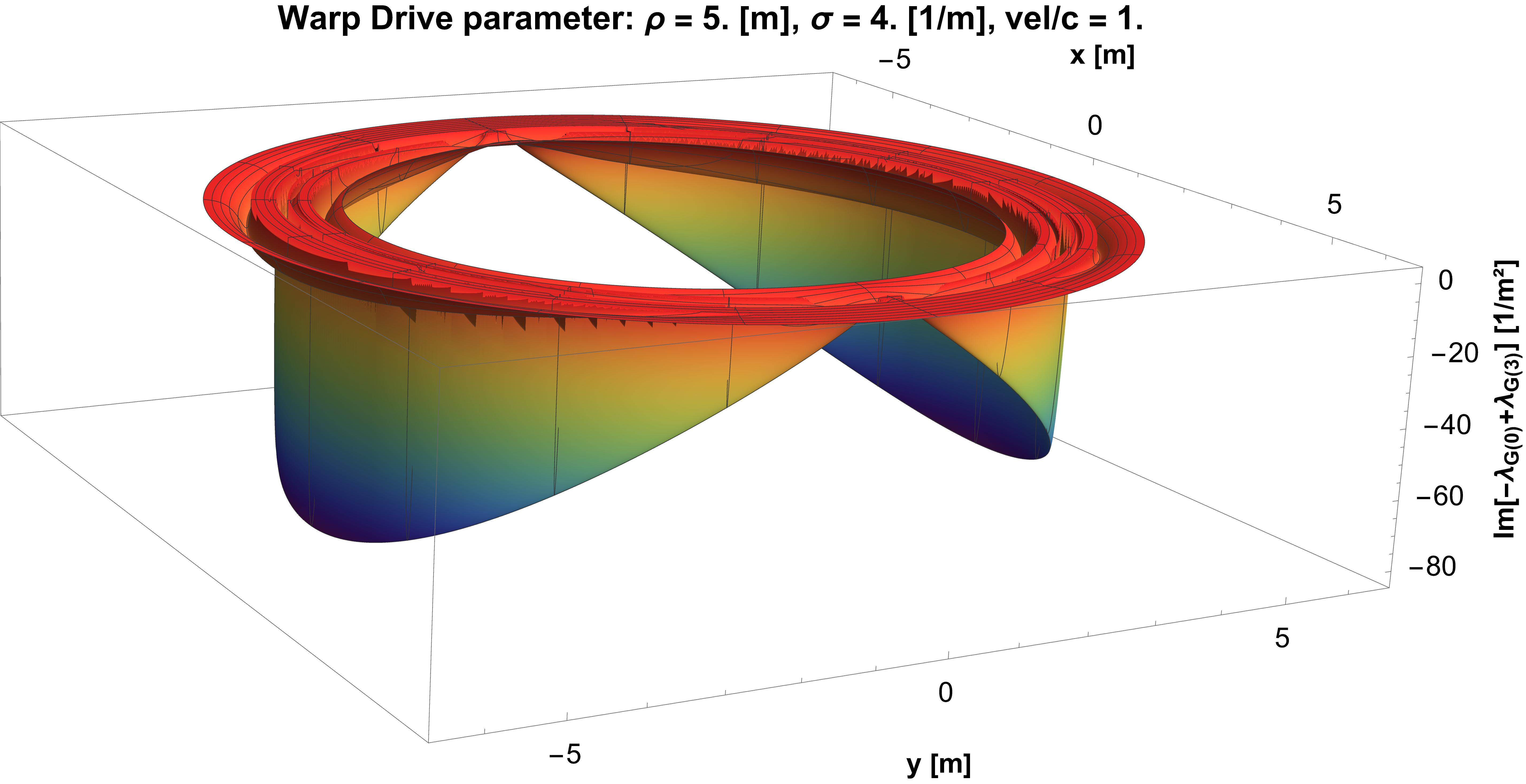}
        \caption{Nat.: $\operatorname{Im}[-\lambda_{G(0)}+\lambda_{G(3)}]$} 
        \label{fig:sub_r2_c3 C}
    \end{subfigure}

    \vspace{0.5cm} 

    \begin{subfigure}[b]{0.32\textwidth} 
        \centering
        \includegraphics[width=\linewidth]{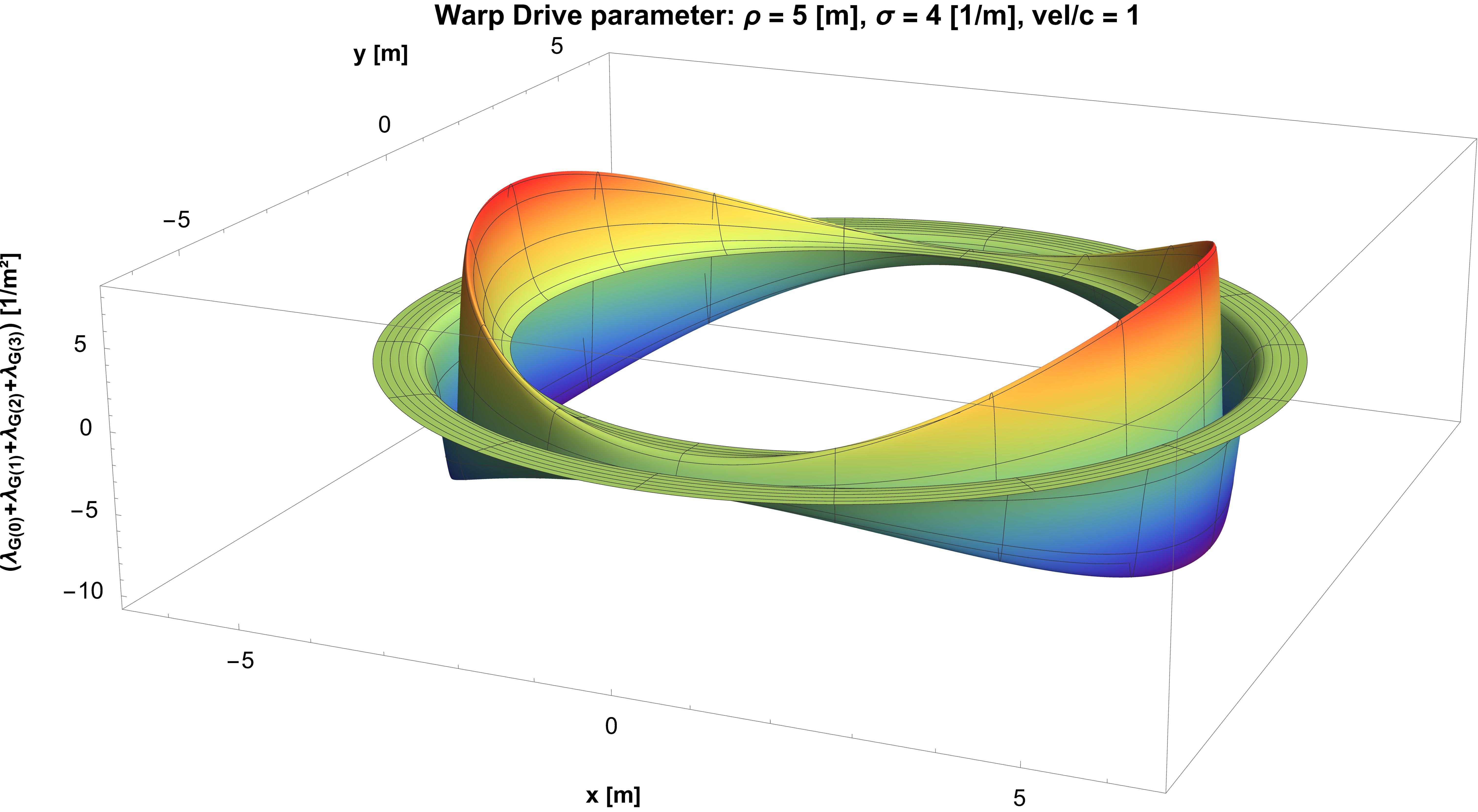}
        \caption{Alc. $G = \sum_{\sigma=0}^{3} \lambda_{G(\sigma)}$}
        \label{fig:sub_r3_c1 C}
    \end{subfigure}
    \hfill 
    \begin{subfigure}[b]{0.32\textwidth}
        \centering
        \includegraphics[width=\linewidth]{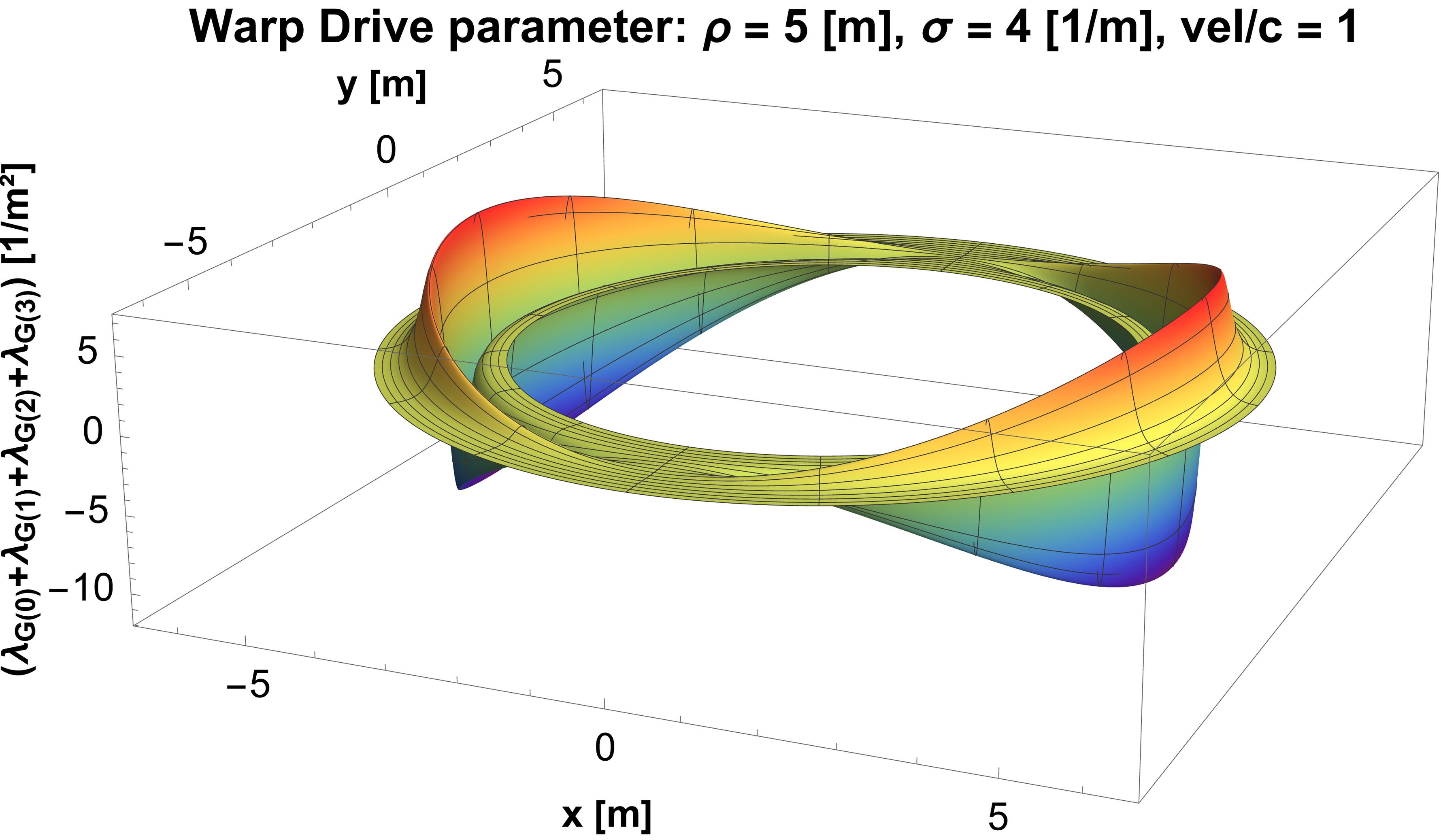}
        \caption{Irr. $G = \sum_{\sigma=0}^{3} \lambda_{G(\sigma)}$} 
        \label{fig:sub_r3_c2 C}
    \end{subfigure}
    \hfill 
    \begin{subfigure}[b]{0.32\textwidth}
        \centering
        \includegraphics[width=\linewidth]{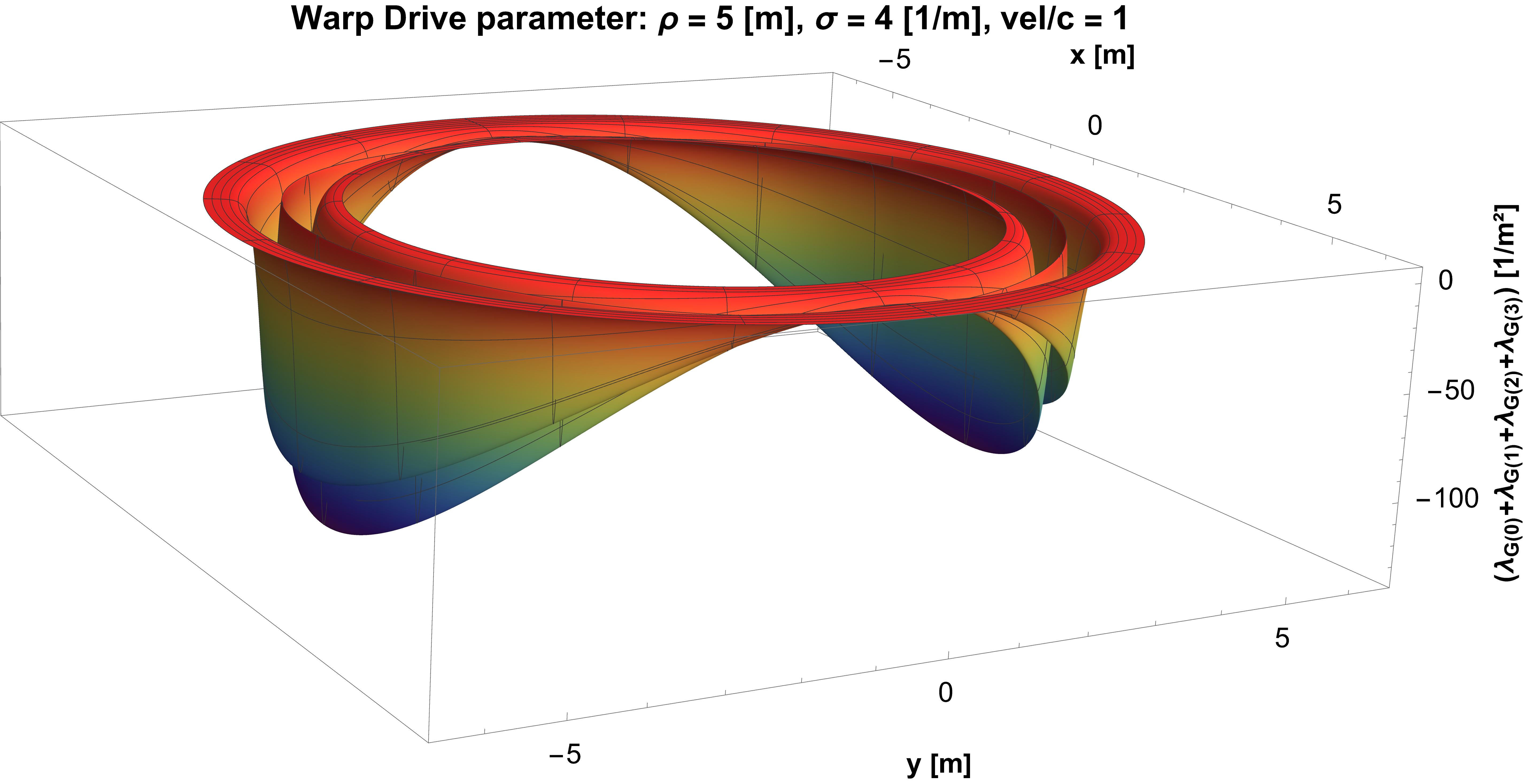}
        \caption{Nat. $G = \sum_{\sigma=0}^{3} \lambda_{G(\sigma)}$}
        \label{fig:sub_r3_c3 C}
    \end{subfigure}

    \caption{Distribution of further eigenvalue combinations. Plotted against the axis of travel $x$ and transverse coordinate $y$ (coordinates in Fig.~\ref{Fig1_SpherCoord}). Columns compare Alcubierre, irrotational, and Natário warp drives. Rows display: (1)~$\Re[-\lambda_{G(0)}+\lambda_{G(3)}]$; (2)~$\Im[-\lambda_{G(0)}+\lambda_{G(3)}]$; (3)~Trace invariant $G = \sum_{\sigma=0}^{3} \lambda_{G(\sigma)}$. See text for physical interpretation and energy condition analysis.}
    \label{fig:main_3x3_grid C} 

\end{figure}

\begin{figure}[htbp] 
    \centering 

    \begin{subfigure}[b]{0.32\textwidth} 
        \centering
        \includegraphics[width=\linewidth]{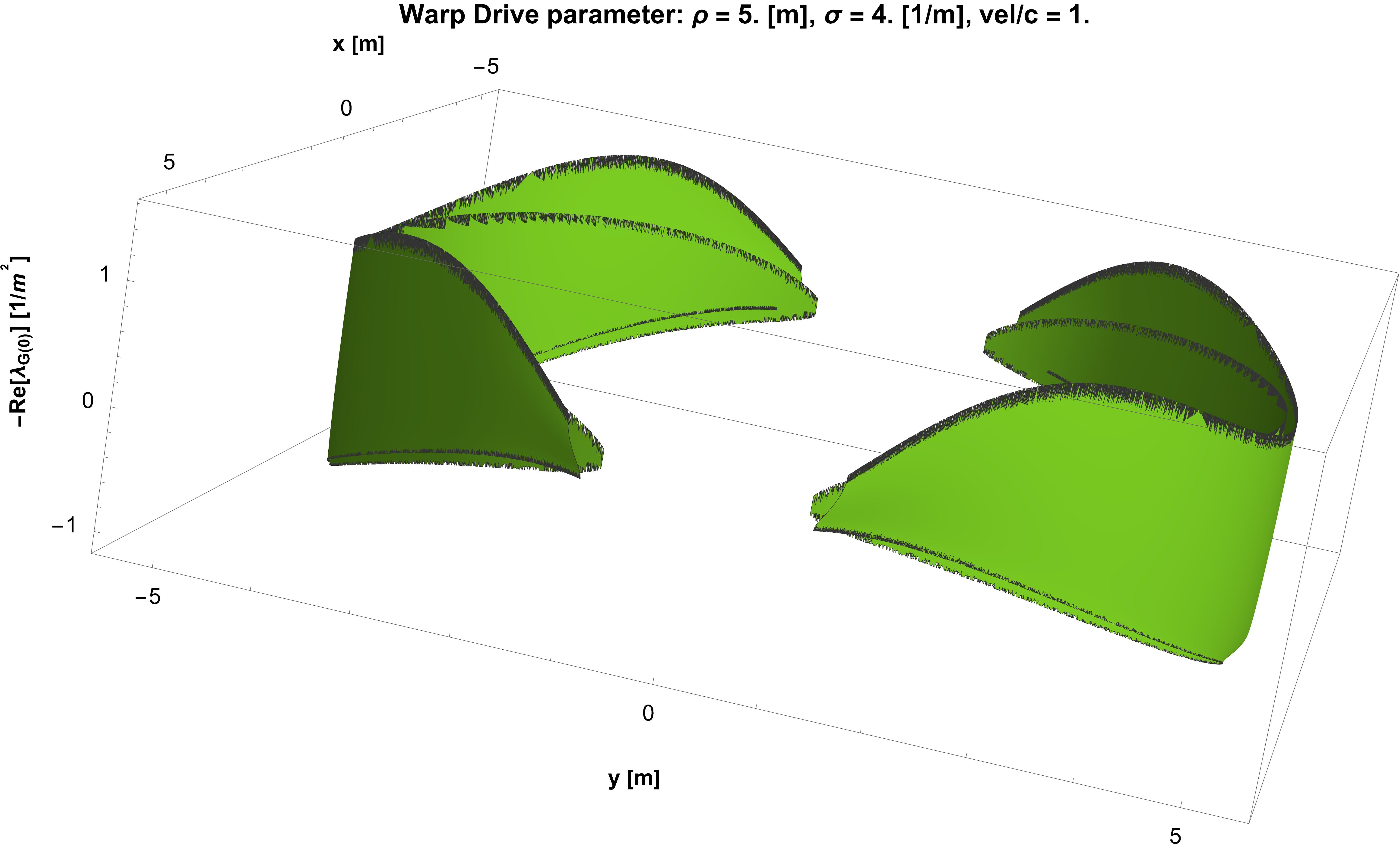}
        \caption{\centering Alc.: $-\operatorname{Re}[\lambda_{G(0)}]$ \\Type I} 
        \label{fig:HEtyp_r1_c1}
    \end{subfigure}
    \hfill 
    \begin{subfigure}[b]{0.32\textwidth}
        \centering
        \includegraphics[width=\linewidth]{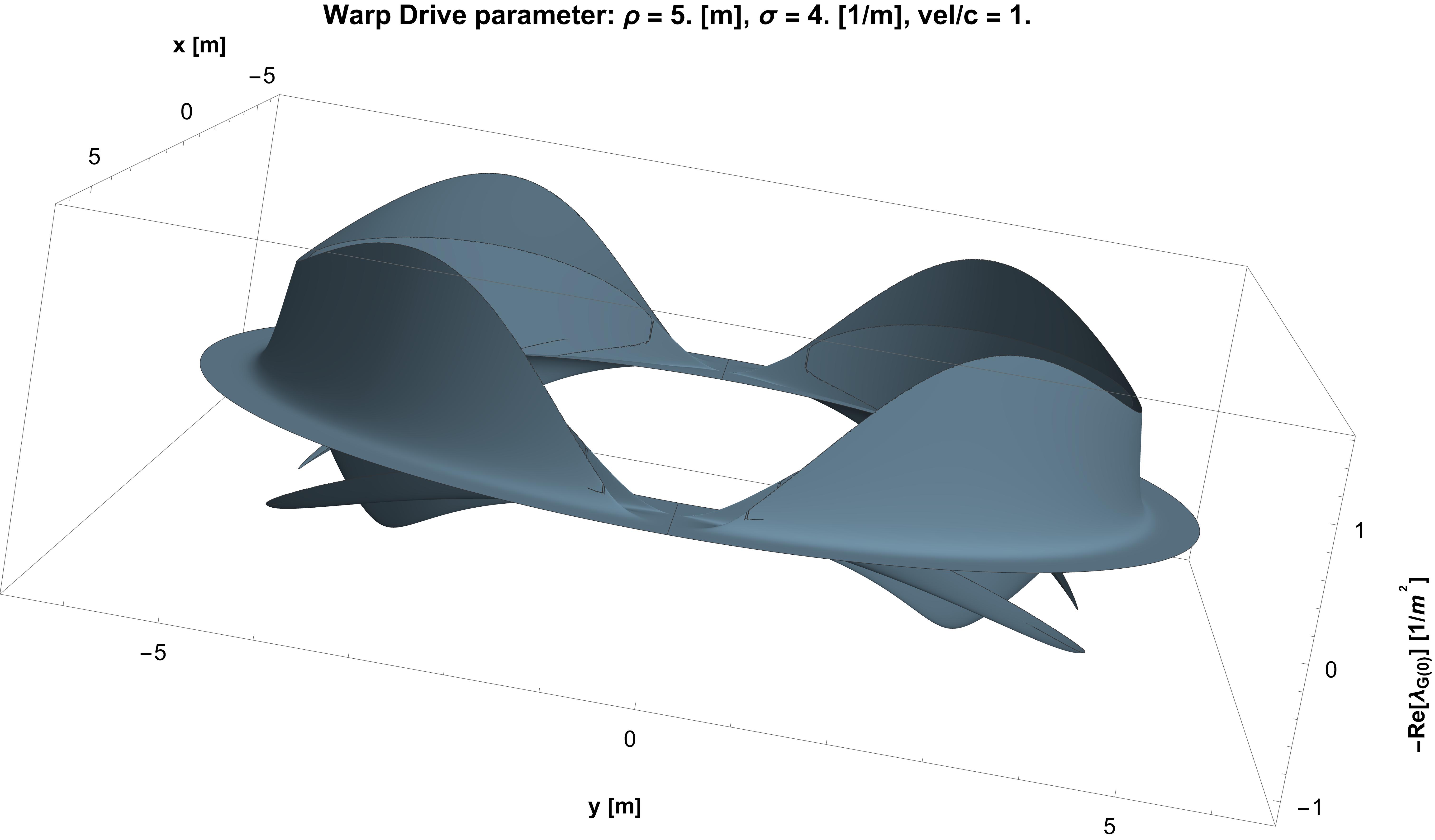}
        \caption{\centering Alc.: $-\operatorname{Re}[\lambda_{G(0)}]$ \\Type IV} 
        \label{fig:HEtyp_r1_c2}
    \end{subfigure}
    \hfill 
    \begin{subfigure}[b]{0.32\textwidth}
        \centering
        \includegraphics[width=\linewidth]{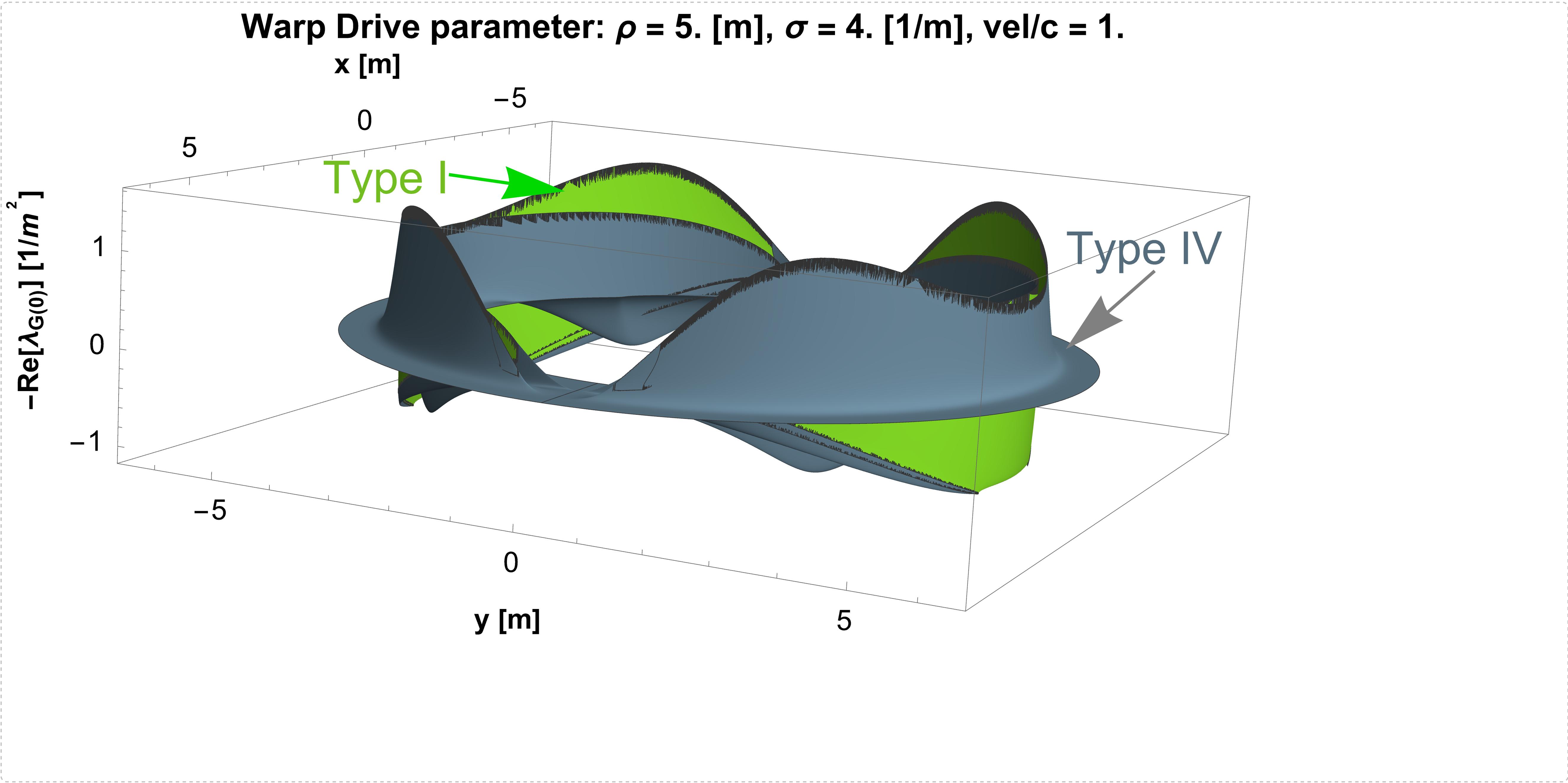}
        \caption{\centering Alc.: $-\operatorname{Re}[\lambda_{G(0)}]$ \\Type I \& IV} 
        \label{fig:HEtyp_r1_c3}
    \end{subfigure}

    \vspace{0.5cm} 

    \begin{subfigure}[b]{0.32\textwidth} 
        \centering
        \includegraphics[width=\linewidth]{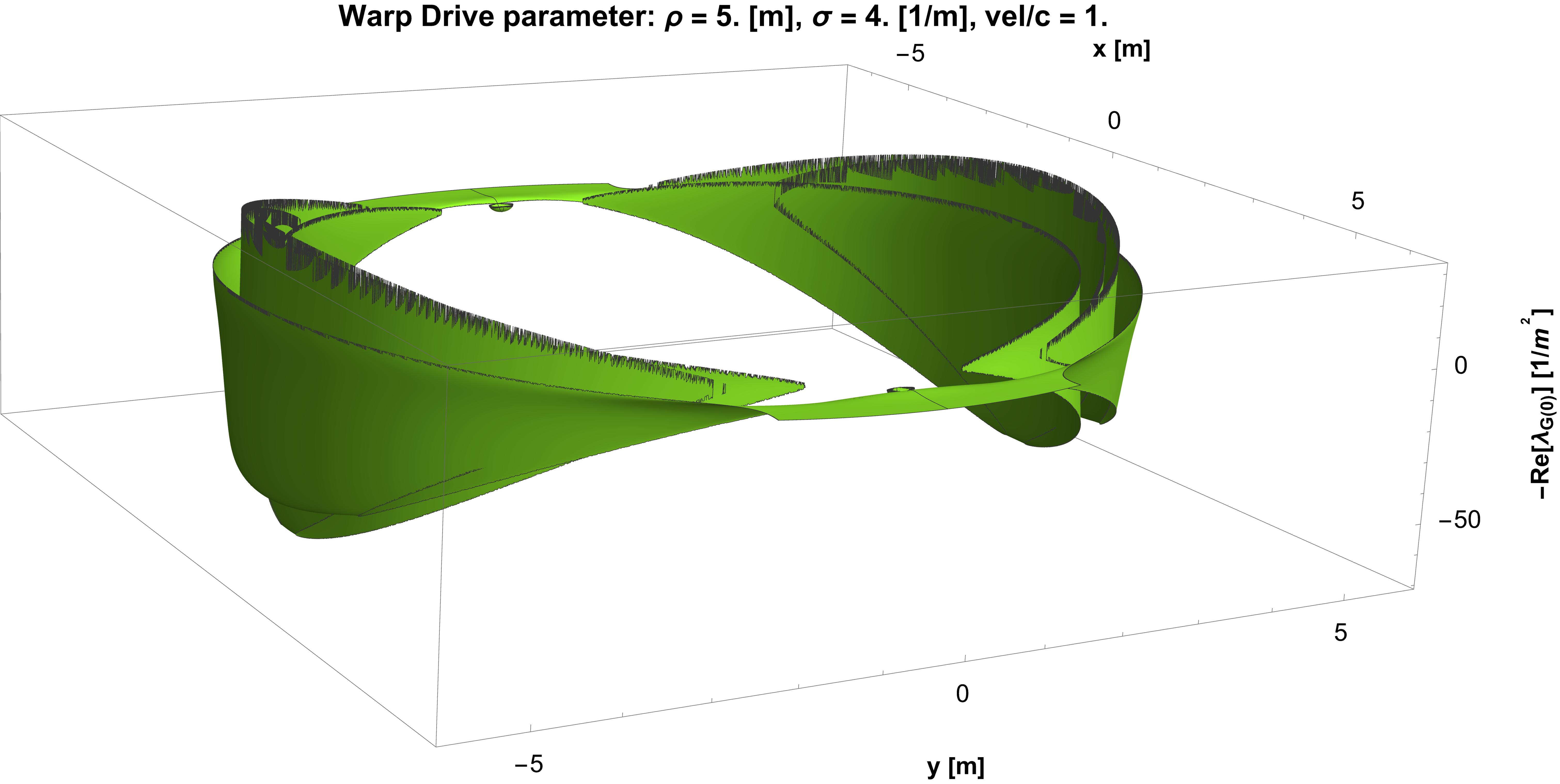}
        \caption{\centering Nat.: $-\operatorname{Re}[\lambda_{G(0)}]$ \\Type I} 
        \label{fig:HEtyp_r2_c1}
    \end{subfigure}
    \hfill 
    \begin{subfigure}[b]{0.32\textwidth}
        \centering
        \includegraphics[width=\linewidth]{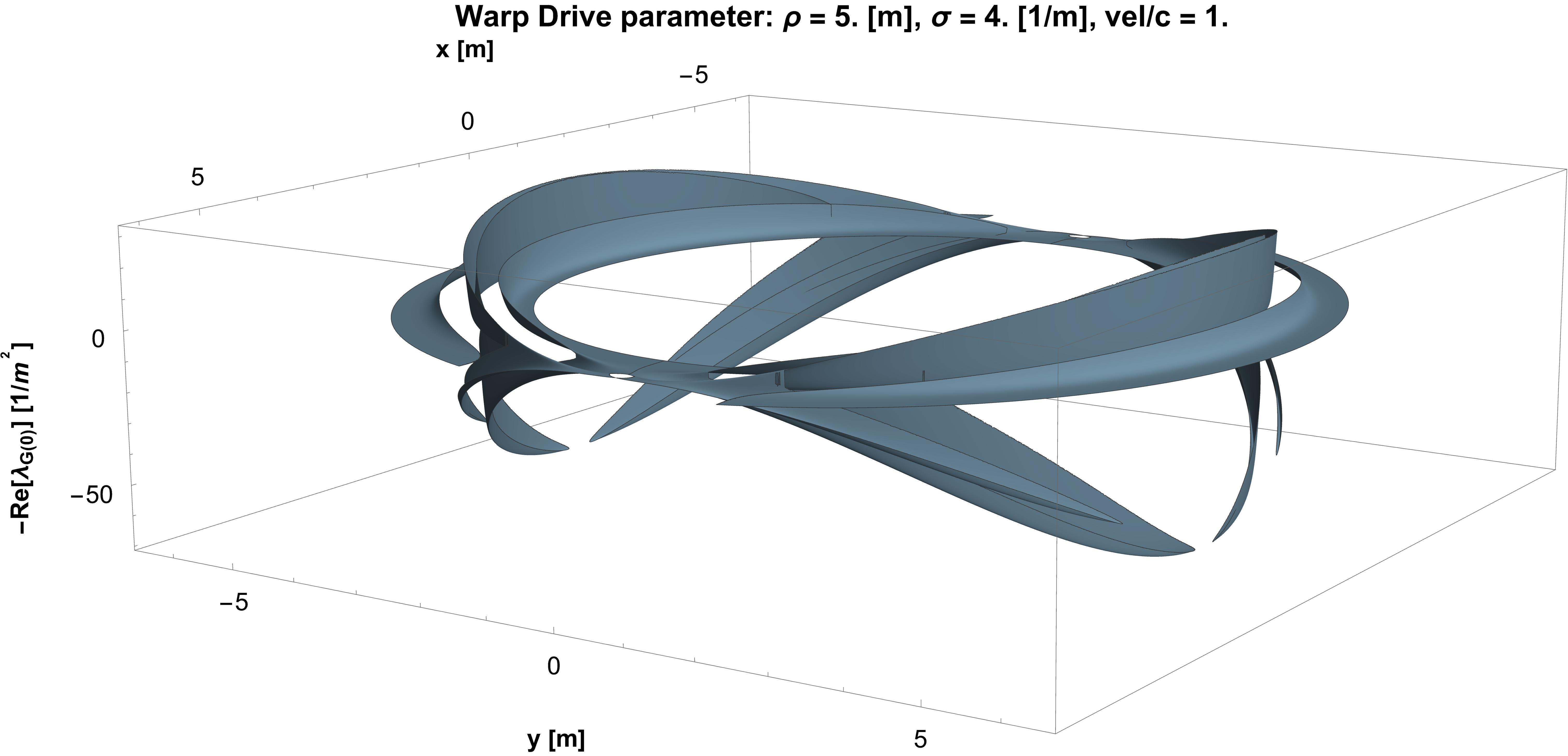}
        \caption{\centering Nat.: $-\operatorname{Re}[\lambda_{G(0)}]$ \\Type IV} 
        \label{fig:HEtyp_r2_c2}
    \end{subfigure}
    \hfill 
    \begin{subfigure}[b]{0.32\textwidth}
        \centering
        \includegraphics[width=\linewidth]{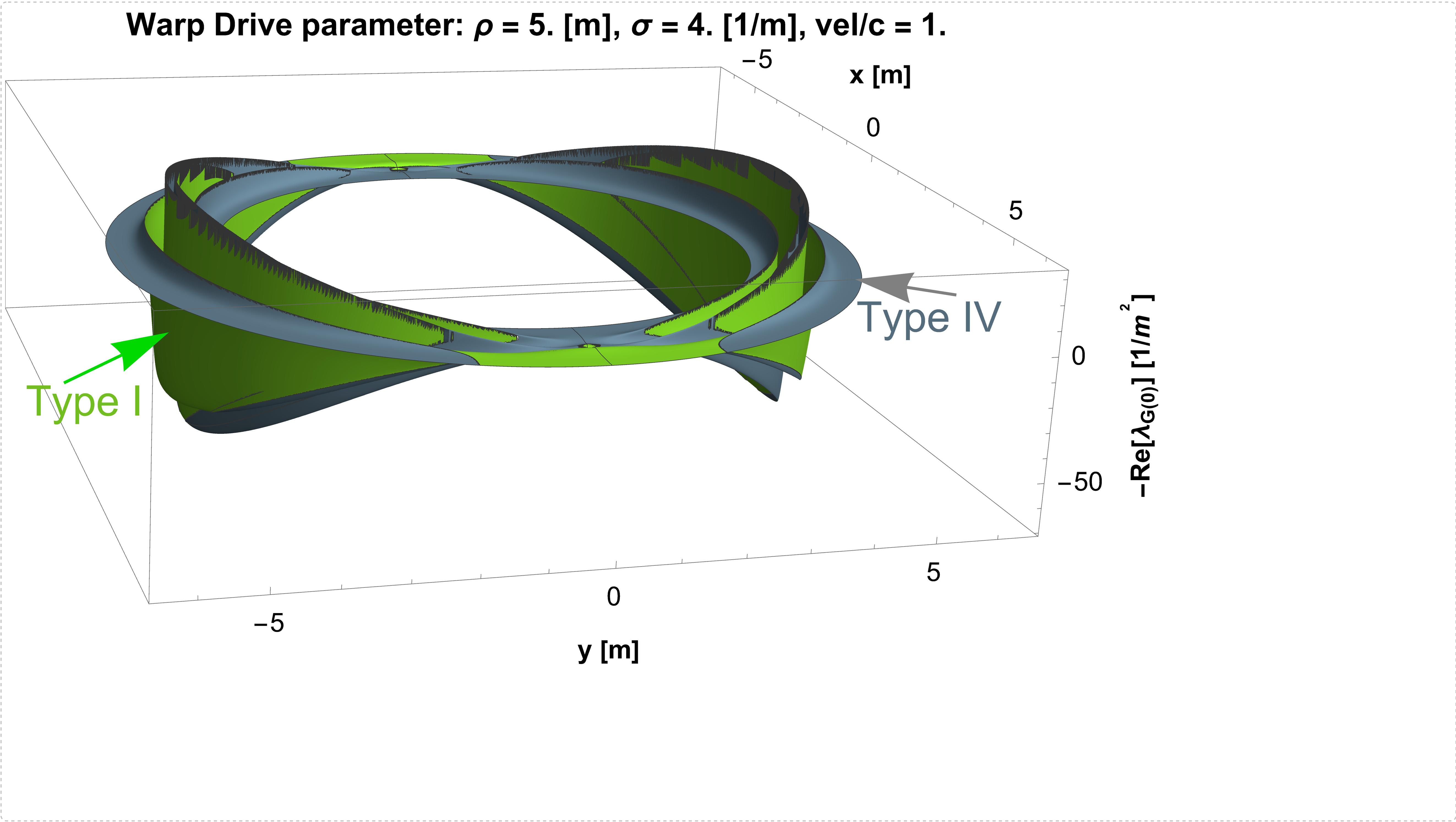}
        \caption{\centering Nat.: $-\operatorname{Re}[\lambda_{G(0)}]$ \\Type I \& IV} 
        \label{fig:HEtyp_r2_c3}
    \end{subfigure}

    \vspace{0.5cm} 

    \caption{Hawking–Ellis classification of the timelike-associated eigenvalue \(-\operatorname{Re}[\lambda_{G(0)}]\) (where \(\varrho_p = -\operatorname{Re}[\lambda_{G(0)}] /\kappa\) in Type I regions), plotted in the prime meridional $x$–$y$ plane (with $x$ the direction of motion and $y$ the transverse axis) (convention in Fig.~\ref{Fig1_SpherCoord}). Columns depict regions classified as: (Left) Type I; (Center) Type IV; (Right) overlays both types. Rows correspond to: (1) Alcubierre's warp drive; (2) Natário's warp drive.}
\label{fig:main_2x3_grid}

\end{figure}

\begin{figure}[htbp] 
    \centering 

    \begin{subfigure}[b]{0.48\textwidth} 
        \centering
        \includegraphics[width=\linewidth]{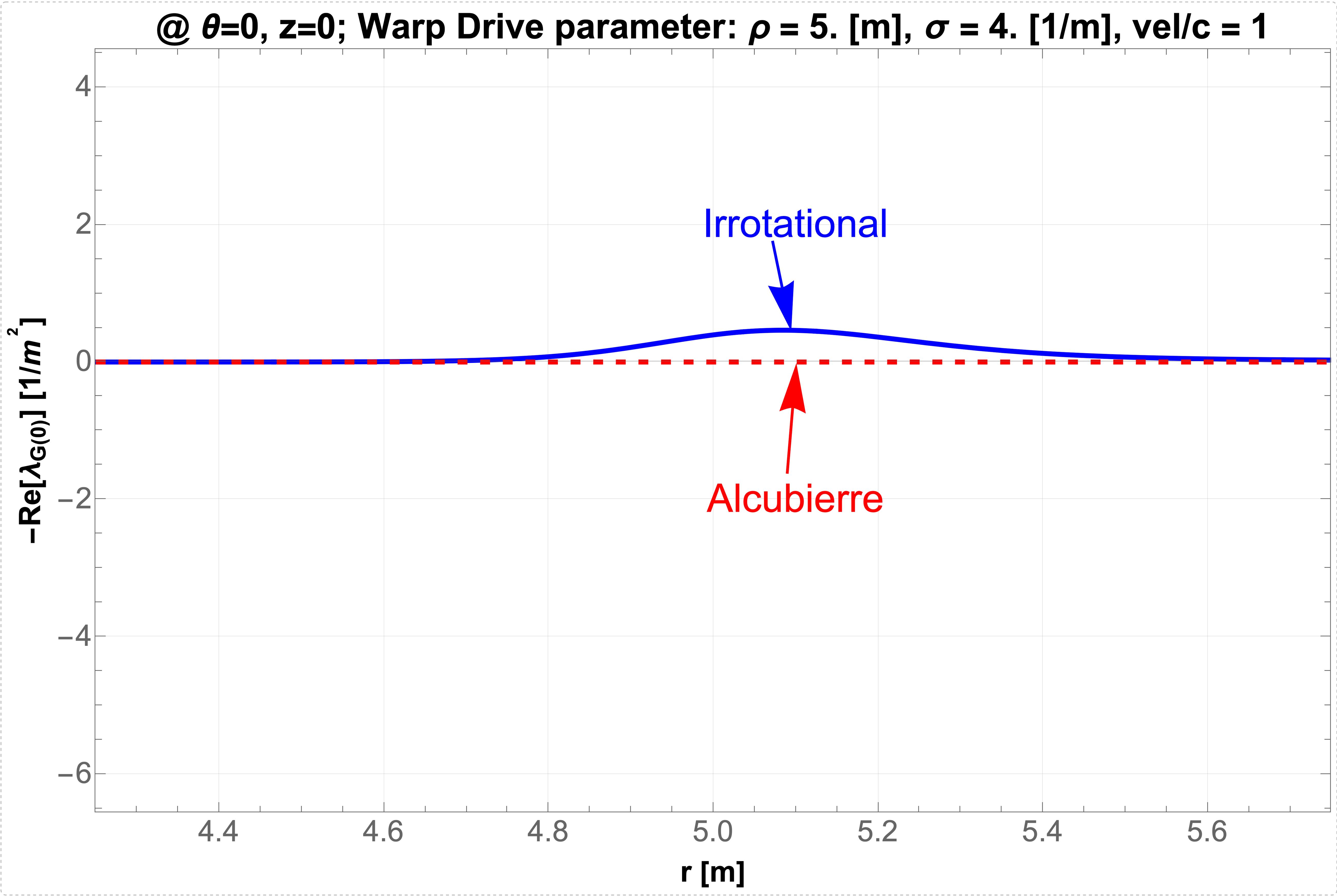}
        \caption{$- \operatorname{Re}[\lambda_{G(0)}] \ (\theta=0) \ [\text{1/m}^2] \text{ vs.\ } r \ [\text{m}]$} 
        \label{fig:sub1 lambda0}
    \end{subfigure}
    \hfill 
    \begin{subfigure}[b]{0.48\textwidth} 
        \centering
        \includegraphics[width=\linewidth]{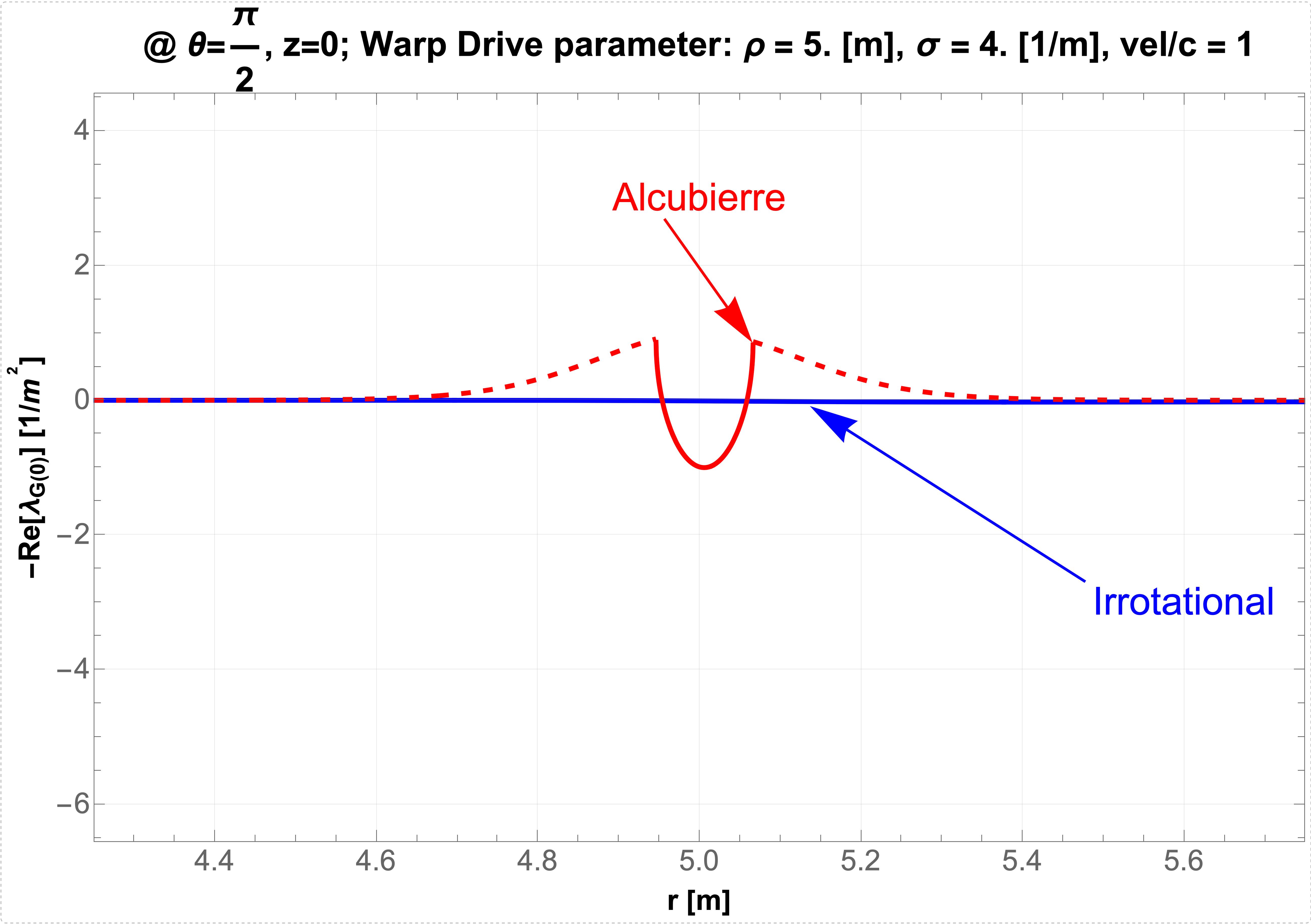}
        \caption{$- \operatorname{Re}[\lambda_{G(0)}] \ (\theta=\pi/2) \ [\text{1/m}^2] \text{ vs.\ } r \ [\text{m}]$} 
        \label{fig:sub2 lambda0}
    \end{subfigure}

    \vspace{0.5cm} 

    \begin{subfigure}[b]{0.48\textwidth} 
        \centering
        \includegraphics[width=\linewidth]{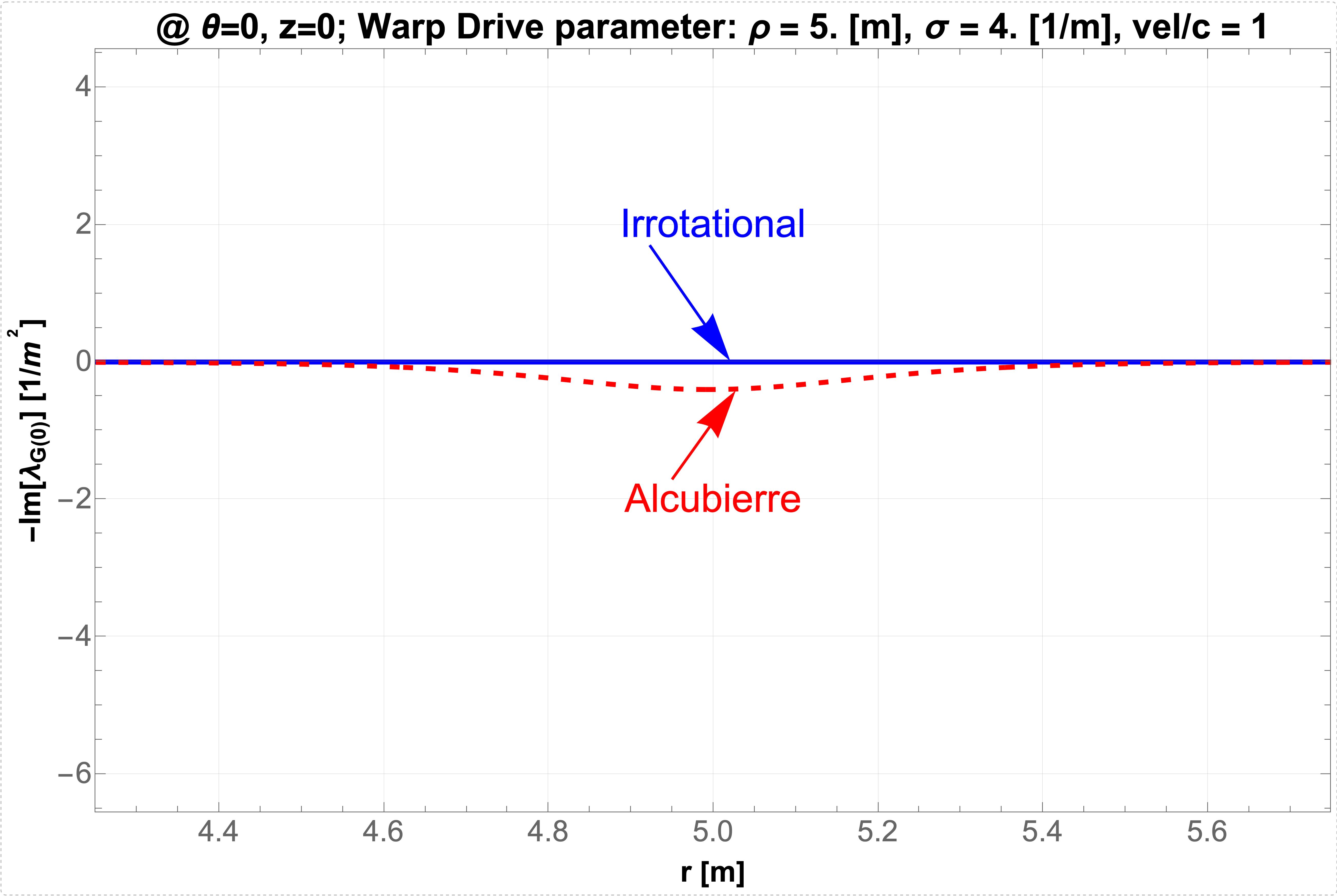} 
        \caption{$- \operatorname{Im}[\lambda_{G(0)}] \ (\theta=0) \ [\text{1/m}^2] \text{ vs.\ } r \ [\text{m}]$} 
        \label{fig:sub3 lambda0}
    \end{subfigure}
    \hfill 
    \begin{subfigure}[b]{0.48\textwidth} 
        \centering
        \includegraphics[width=\linewidth]{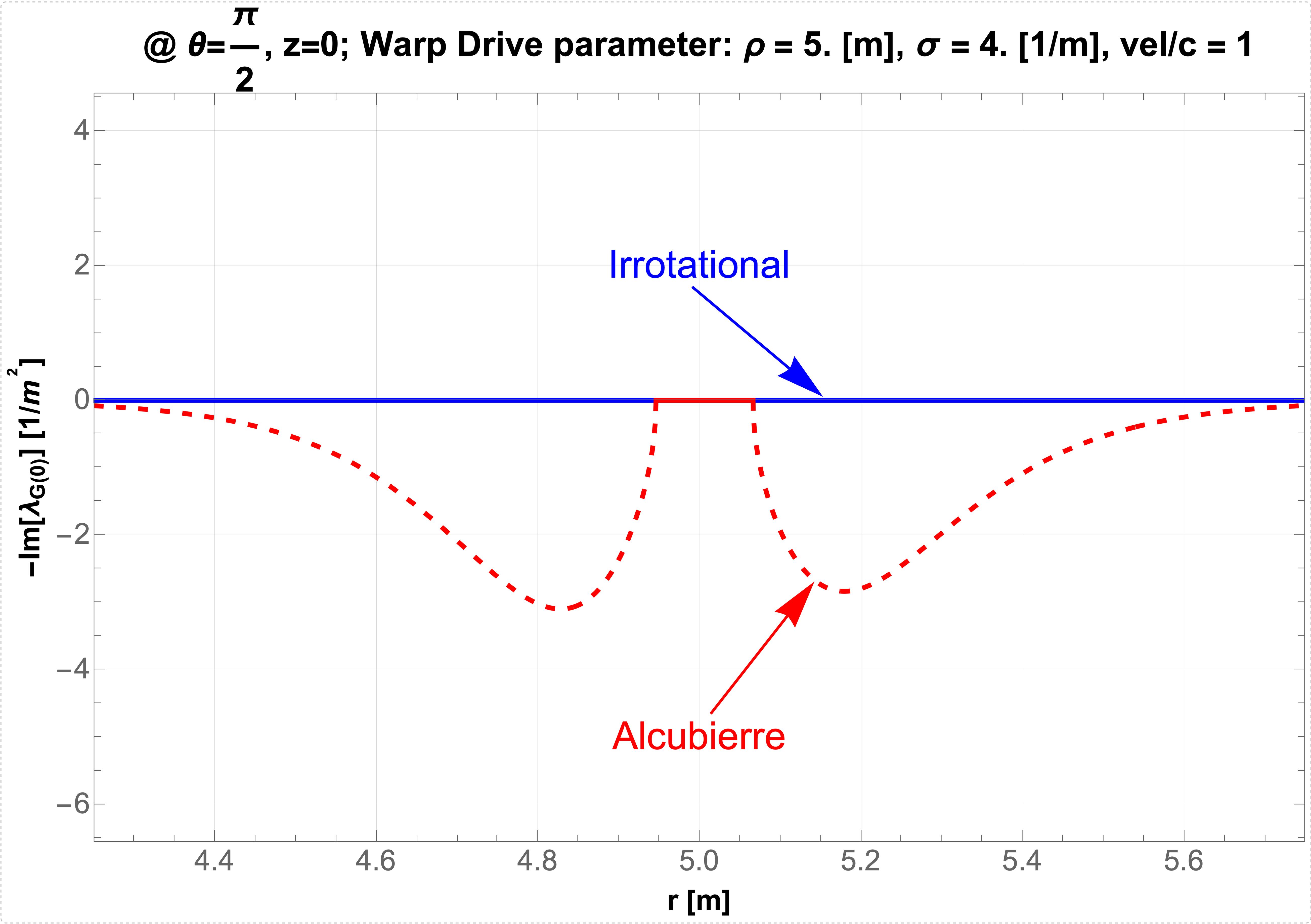}
        \caption{$- \operatorname{Im}[\lambda_{G(0)}] \ (\theta=\pi/2) \ [\text{1/m}^2] \text{ vs.\ } r \ [\text{m}]$} 
        \label{fig:sub4 lambda0}
    \end{subfigure}

\caption{Real and imaginary parts of the timelike-dominant eigenvalue $-\lambda_{G(0)}$ ($\theta=0, \pi/2$) \text{ vs.\ radial distance } $r$ for Alcubierre and irrotational warp drives. Line style indicates local Hawking–Ellis classification: \textbf{solid} for Type I, \textbf{dashed} for Type IV (null complex pair).}
\label{fig:main_figure lambda0}

\end{figure}

\begin{figure}[htbp] 
    \centering 

    \begin{subfigure}[b]{0.48\textwidth} 
        \centering
        \includegraphics[width=\linewidth]{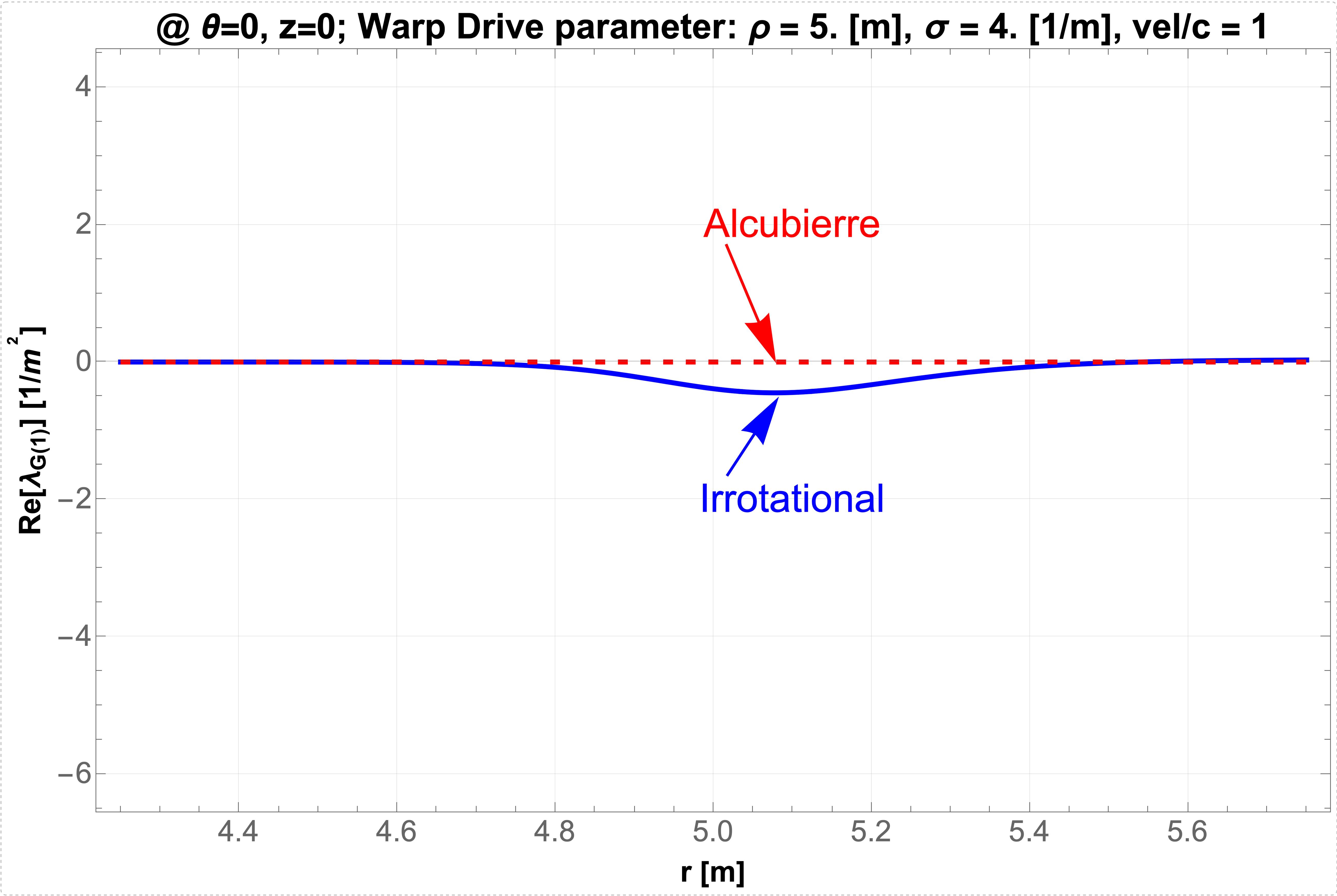}
        \caption{$\operatorname{Re}[\lambda_{G(1)}] \ (\theta=0) \ [\text{1/m}^2] \text{ vs.\ } r \ [\text{m}]$} 
        \label{fig:sub1 lambda10} 
    \end{subfigure}
    \hfill 
    \begin{subfigure}[b]{0.48\textwidth} 
        \centering
        \includegraphics[width=\linewidth]{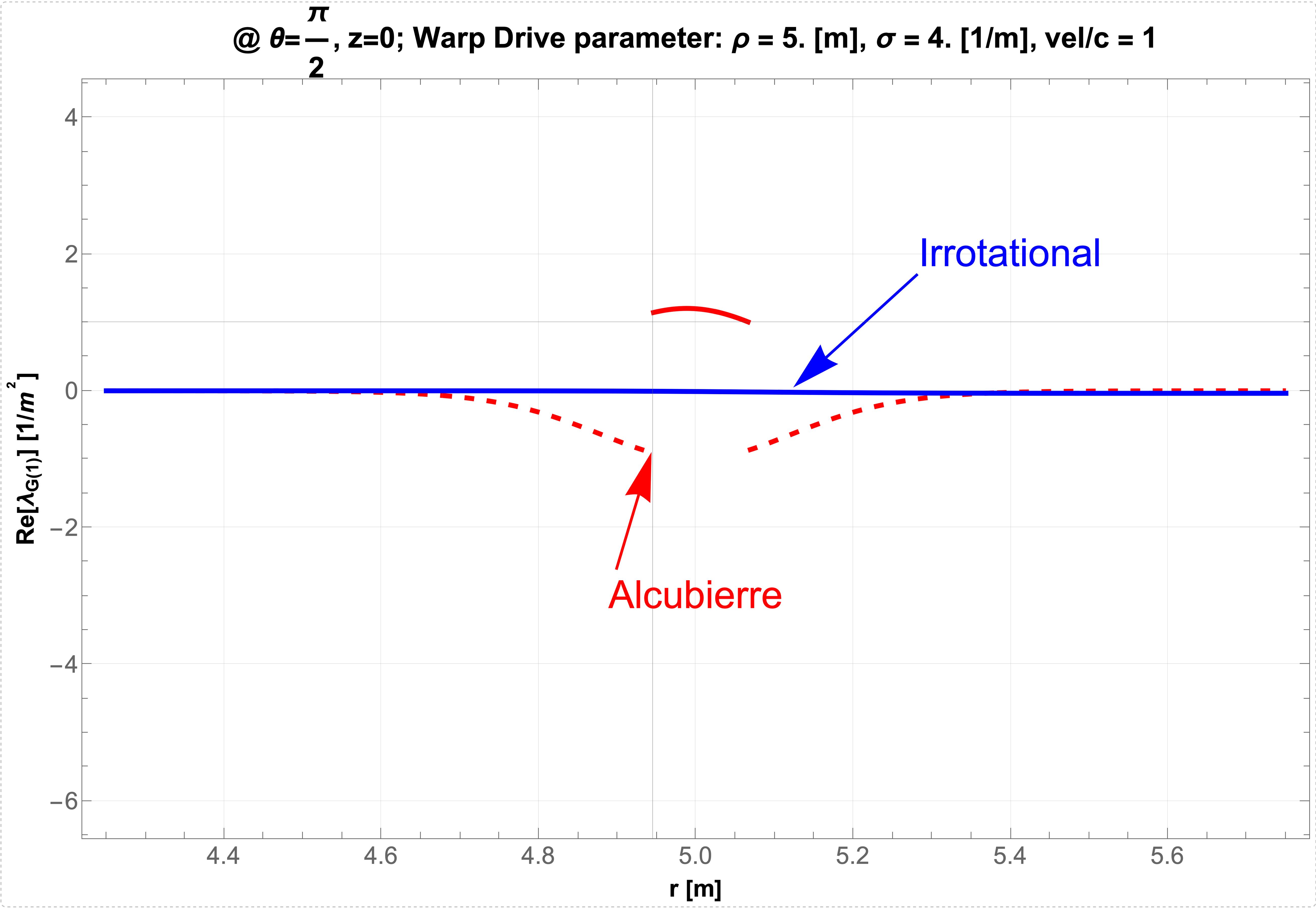}
        \caption{$\operatorname{Re}[\lambda_{G(1)}] \ (\theta=\pi/2) \ [\text{1/m}^2] \text{ vs.\ } r \ [\text{m}]$} 
        \label{fig:sub2 lambda2Pi2}
    \end{subfigure}

    \vspace{0.5cm} 

    \begin{subfigure}[b]{0.48\textwidth} 
        \centering
        \includegraphics[width=\linewidth]{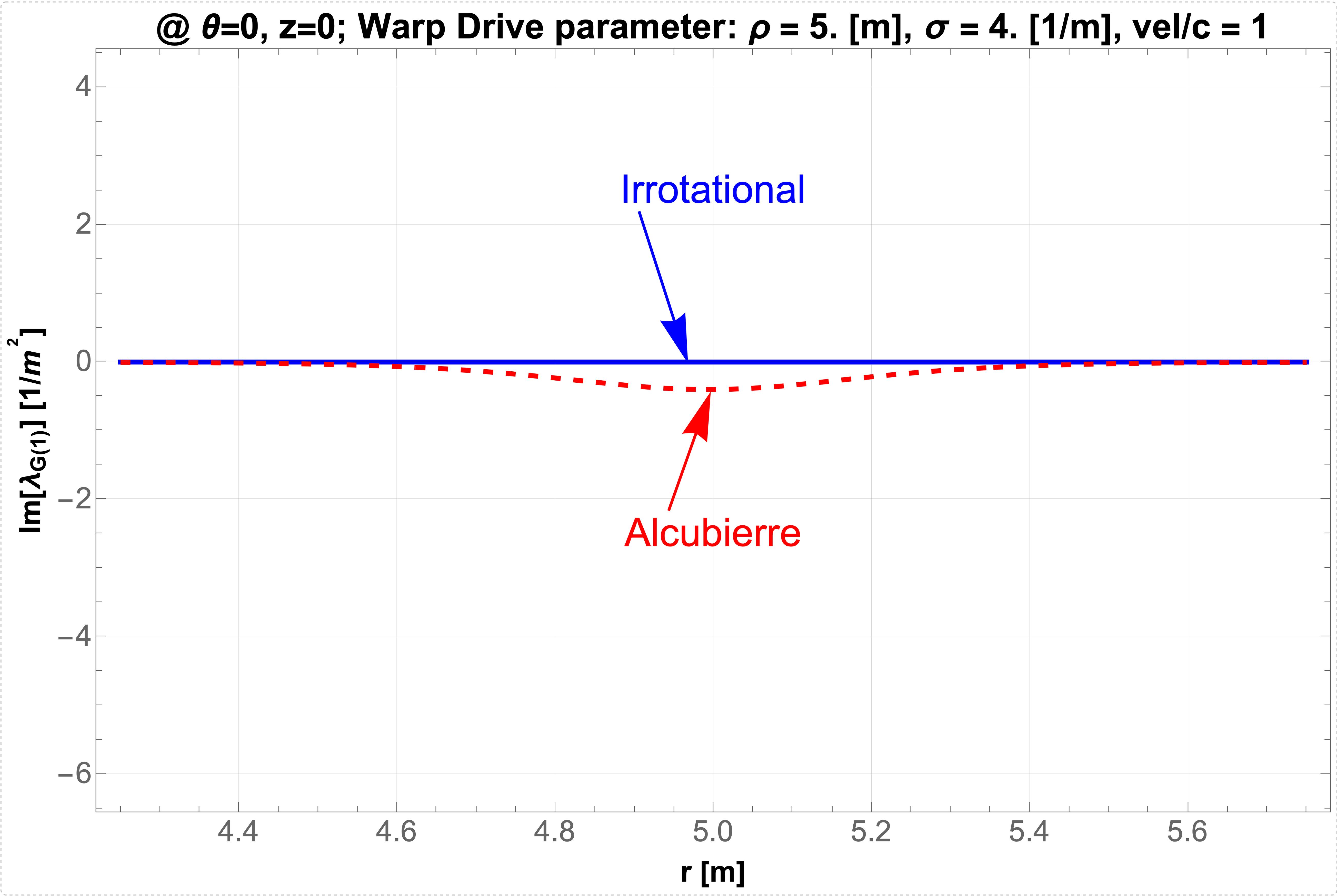} 
        \caption{$\operatorname{Im}[\lambda_{G(1)}] \ (\theta=0) \ [\text{1/m}^2] \text{ vs.\ } r \ [\text{m}]$} 
        \label{fig:sub3 lambda10} 
    \end{subfigure}
    \hfill 
    \begin{subfigure}[b]{0.48\textwidth} 
        \centering
        \includegraphics[width=\linewidth]{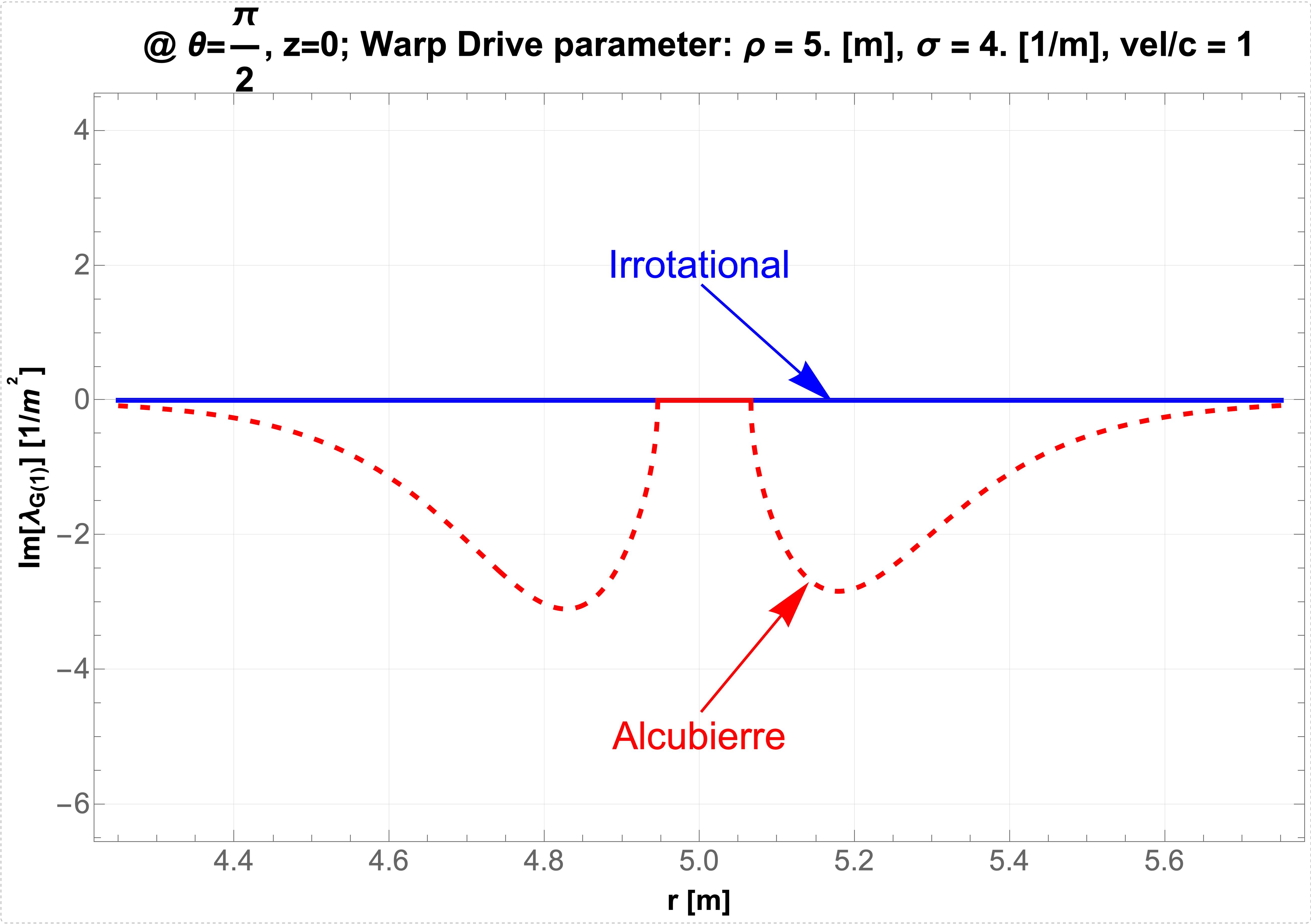} 
        \caption{$\operatorname{Im}[\lambda_{G(1)}] \ (\theta=\pi/2) \ [\text{1/m}^2] \text{ vs.\ } r \ [\text{m}]$} 
        \label{fig:sub4 lambda2Pi2}
    \end{subfigure}

\caption{Real and imaginary parts of the spacelike-dominant eigenvalues in the travel direction, $\lambda_{G(1)} \ (\theta=0,\pi/2)$, \text{ vs.\ radial distance } $r$ for Alcubierre and irrotational warp drives. Line style indicates local Hawking–Ellis classification: \textbf{solid} for Type I, \textbf{dashed} for Type IV (null complex pair).}
\label{fig:main_figure complex spacelike}

\end{figure}

\begin{figure}[htbp] 
    \centering 

    \begin{subfigure}[b]{0.48\textwidth} 
        \centering
        \includegraphics[width=\linewidth]{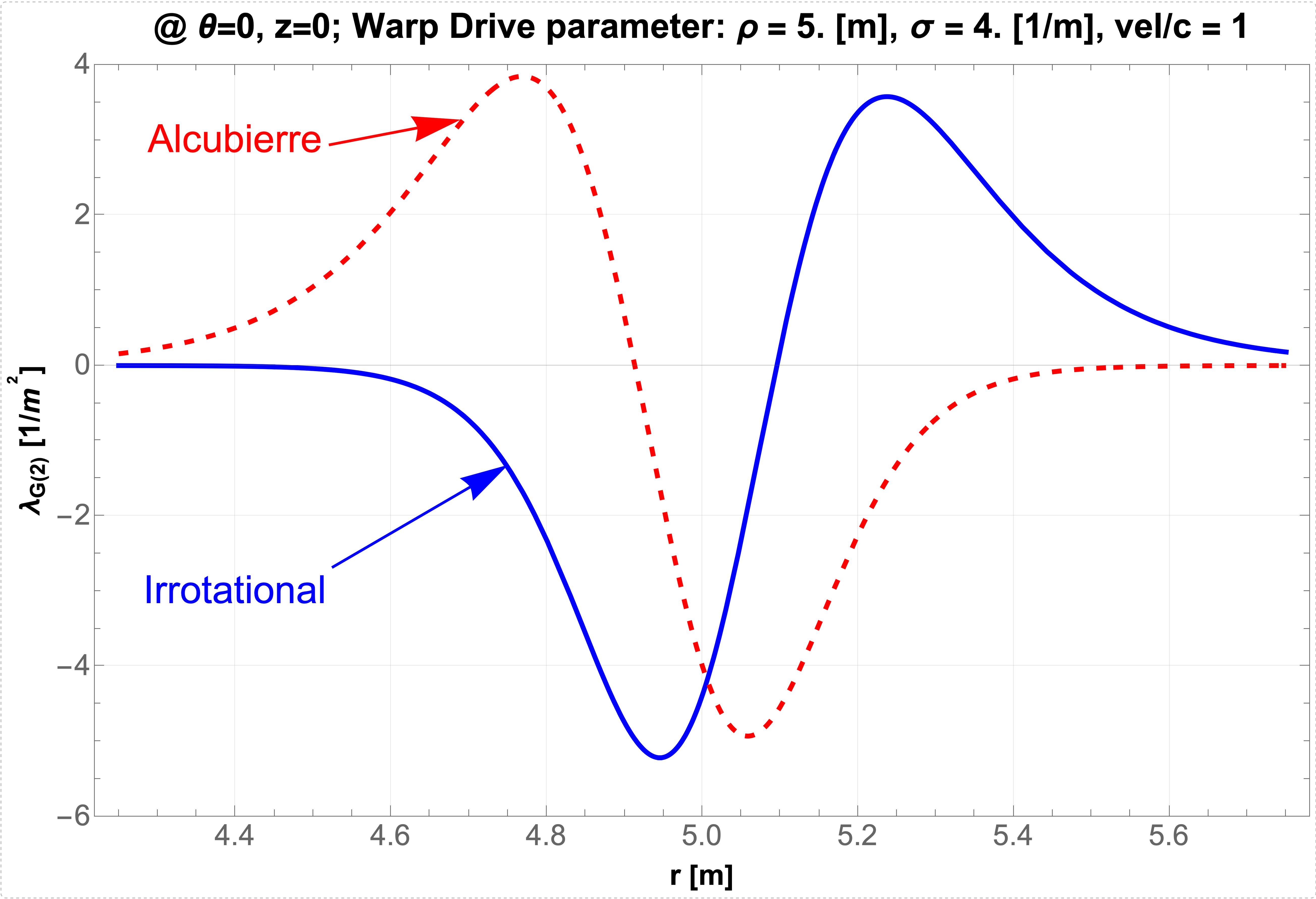} 
        \caption{$\lambda_{G(2)} \ (\theta=0) \ [\text{1/m}^2] \text{ vs.\ } r \ [\text{m}]$} 
        \label{fig:sub31 lambda20} 
    \end{subfigure}
    \hfill 
    \begin{subfigure}[b]{0.48\textwidth} 
        \centering
        \includegraphics[width=\linewidth]{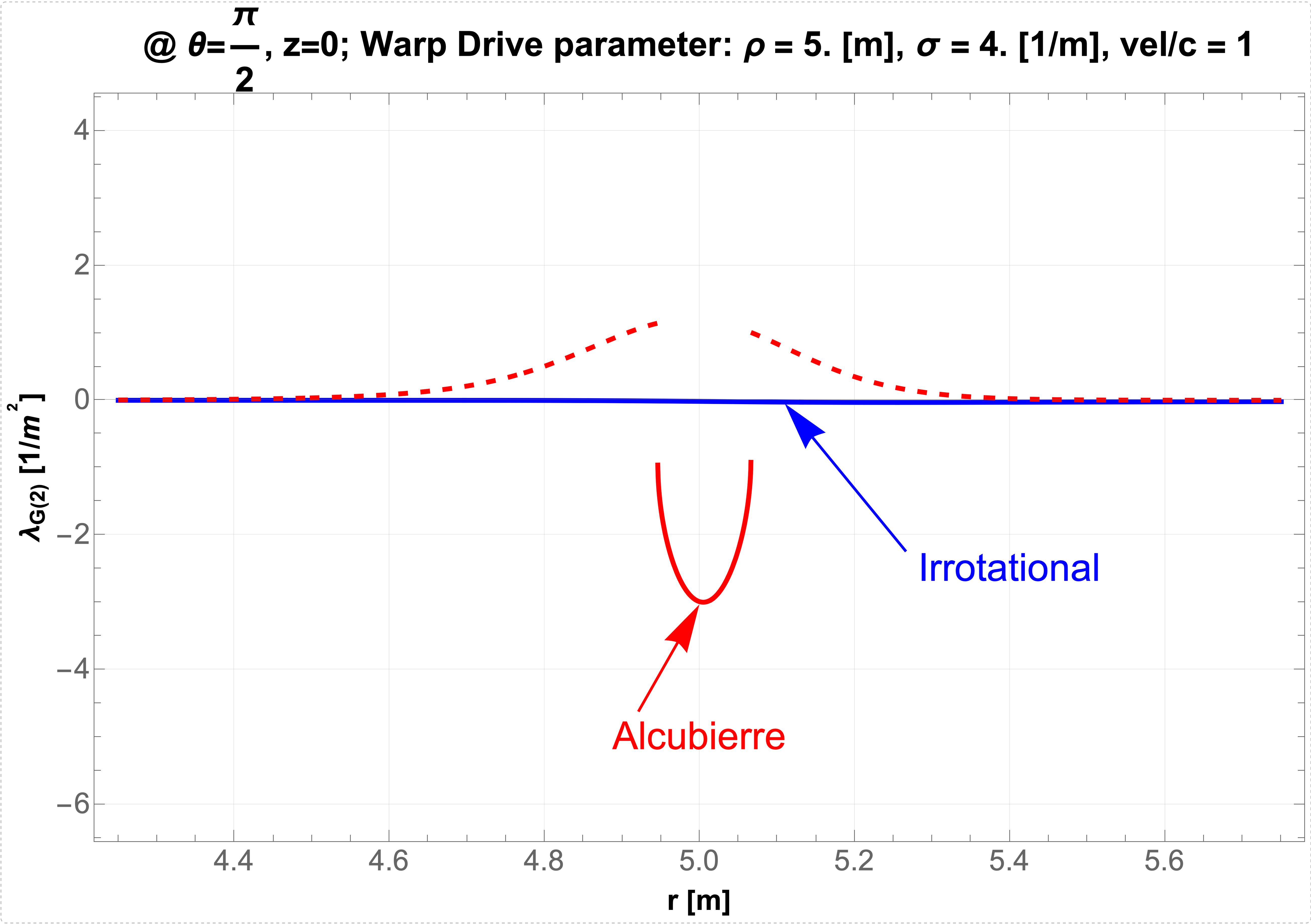}
        \caption{$\lambda_{G(2)} \ (\theta=\pi/2) \ [\text{1/m}^2] \text{ vs.\ } r \ [\text{m}]$} 
        \label{fig:sub32 lambda1pi2} 

    \end{subfigure}

    \vspace{0.5cm} 

    \begin{subfigure}[b]{0.48\textwidth} 
        \centering
        \includegraphics[width=\linewidth]{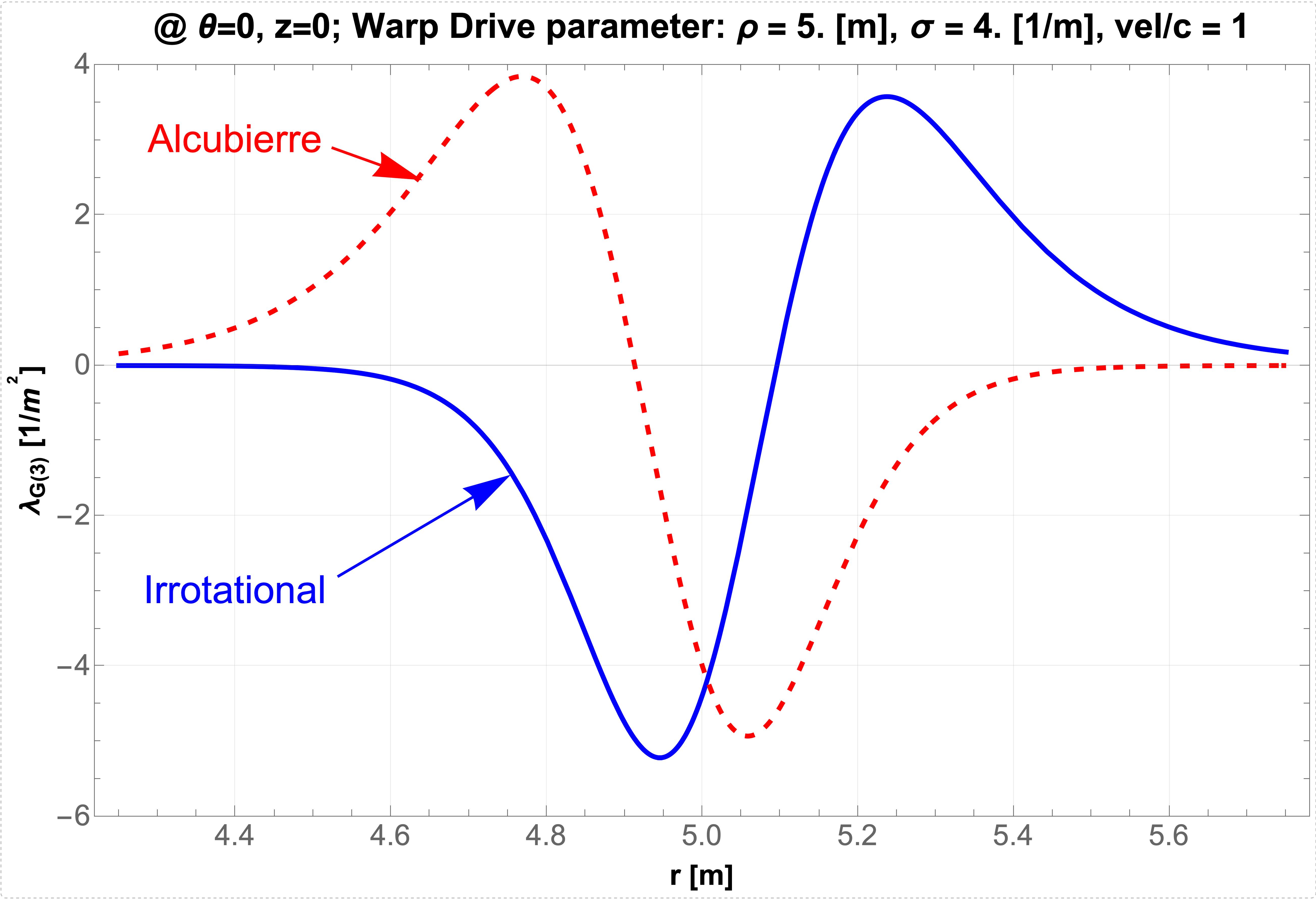}
        \caption{$\lambda_{G(3)} \ (\theta=0) \ [\text{1/m}^2] \text{ vs.\ } r \ [\text{m}]$} 
        \label{fig:sub33 lambda30} 

    \end{subfigure}
    \hfill 
    \begin{subfigure}[b]{0.48\textwidth} 
        \centering
        \includegraphics[width=\linewidth]{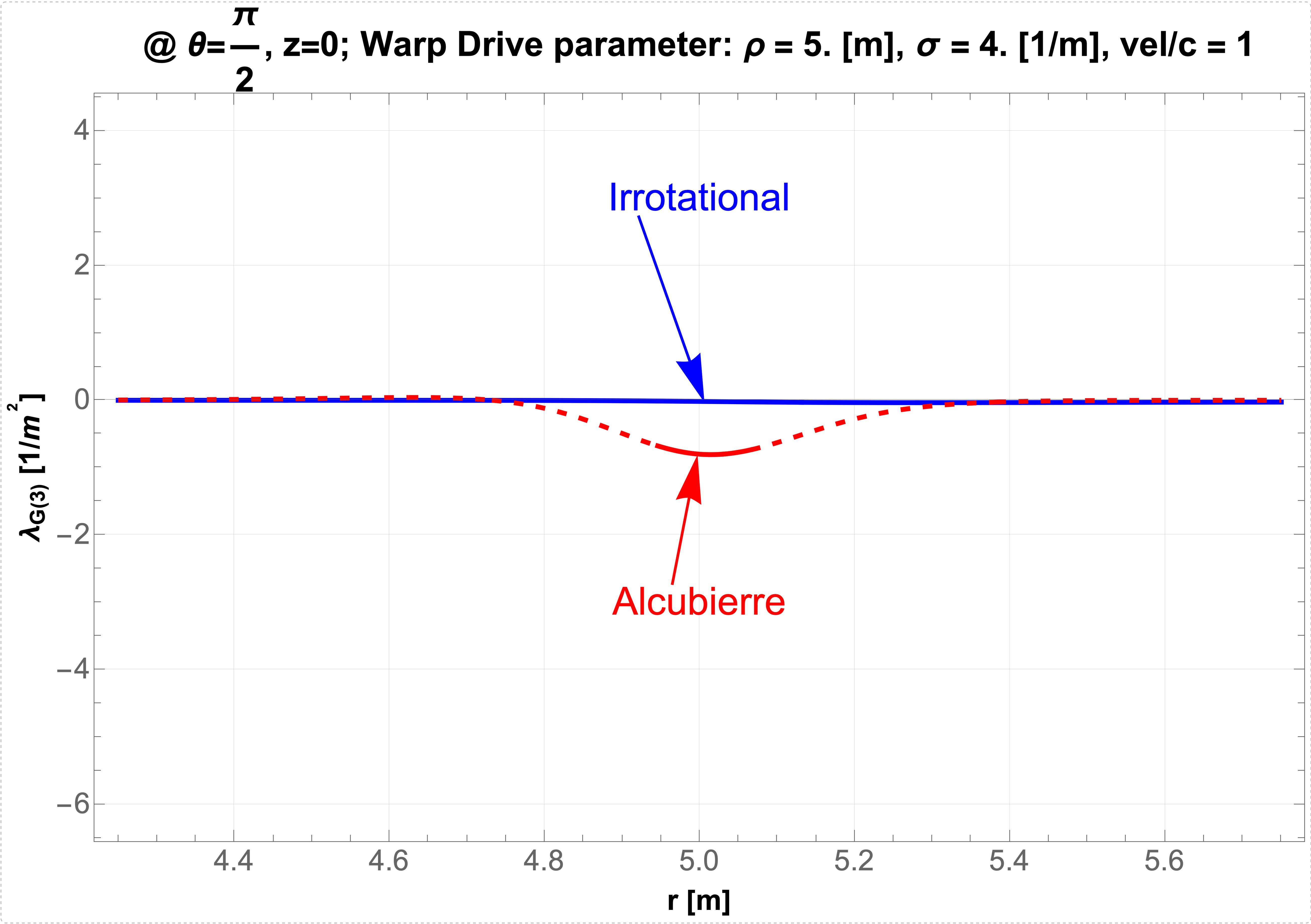}
        \caption{$\lambda_{G(3)} \ (\theta=\pi/2) \ [\text{1/m}^2] \text{ vs.\ } r \ [\text{m}]$} 
        \label{fig:sub34 lambda3pi2} 

    \end{subfigure}

\caption{Spacelike-dominant purely real eigenvalues perpendicular to the travel direction: $\lambda_{G(2)} \ (\theta=0, \pi/2)$ and $\lambda_{G(3)} \ (\theta=0, \pi/2$) \text{ vs.\ radial distance } $r$ for Alcubierre and irrotational warp drives. Line style indicates local Hawking–Ellis classification: \textbf{solid} for Type I, \textbf{dashed} for Type IV (null complex pair).}
\label{fig:main_figure real spacelike}

\end{figure}

\begin{figure}[htbp] 
    \centering 

    \begin{subfigure}[b]{0.48\textwidth} 
        \centering
        \includegraphics[width=\linewidth]{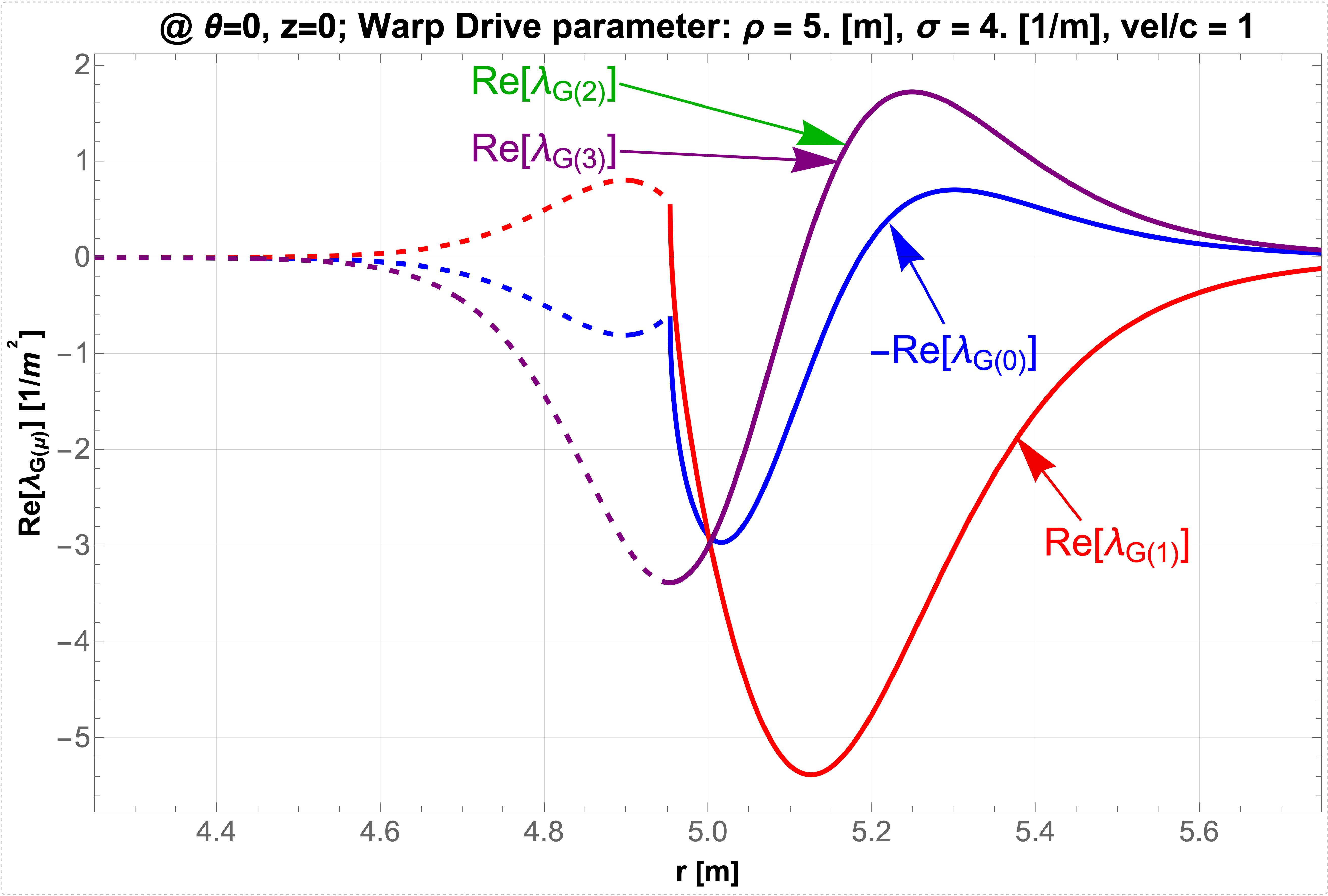}
        \caption{Real parts $\mathrm{Re}(\lambda_{G(\mu)})$ vs. $r$ at $\theta=0$.}
        \label{fig:natario_re_theta0} 
    \end{subfigure}
    \hfill 
    \begin{subfigure}[b]{0.48\textwidth} 
        \centering
        \includegraphics[width=\linewidth]{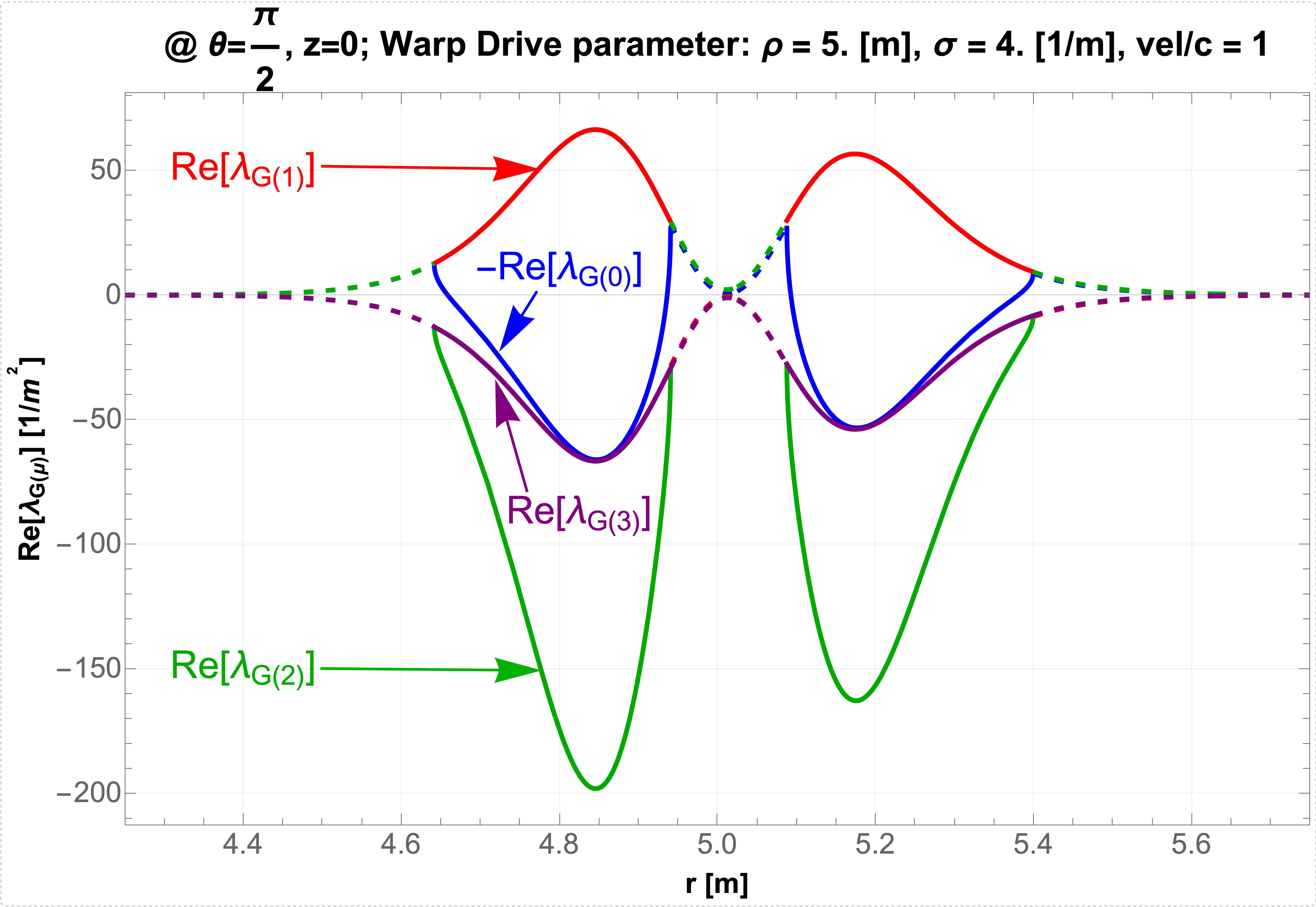}
        \caption{Real parts $\mathrm{Re}(\lambda_{G(\mu)})$ vs. $r$ at $\theta=\pi/2$.}
        \label{fig:natario_re_thetaPi2} 
    \end{subfigure}

    \vspace{0.5cm} 

    \begin{subfigure}[b]{0.48\textwidth} 
        \centering
        \includegraphics[width=\linewidth]{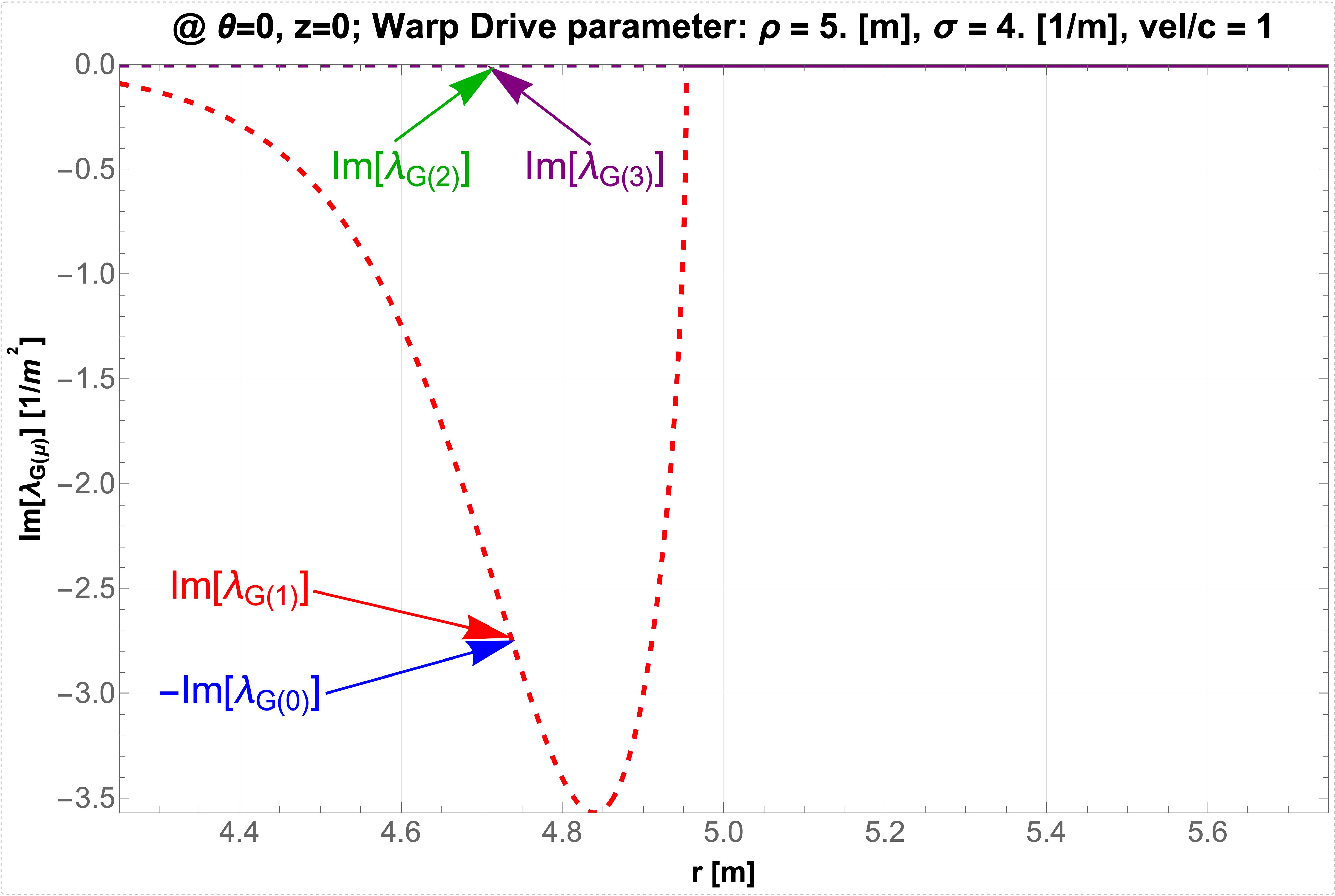}
        \caption{Imaginary parts $\mathrm{Im}(\lambda_{G(\mu)})$ vs. $r$ at $\theta=0$.}
        \label{fig:natario_im_theta0} 
    \end{subfigure}
    \hfill 
    \begin{subfigure}[b]{0.48\textwidth} 
        \centering
        \includegraphics[width=\linewidth]{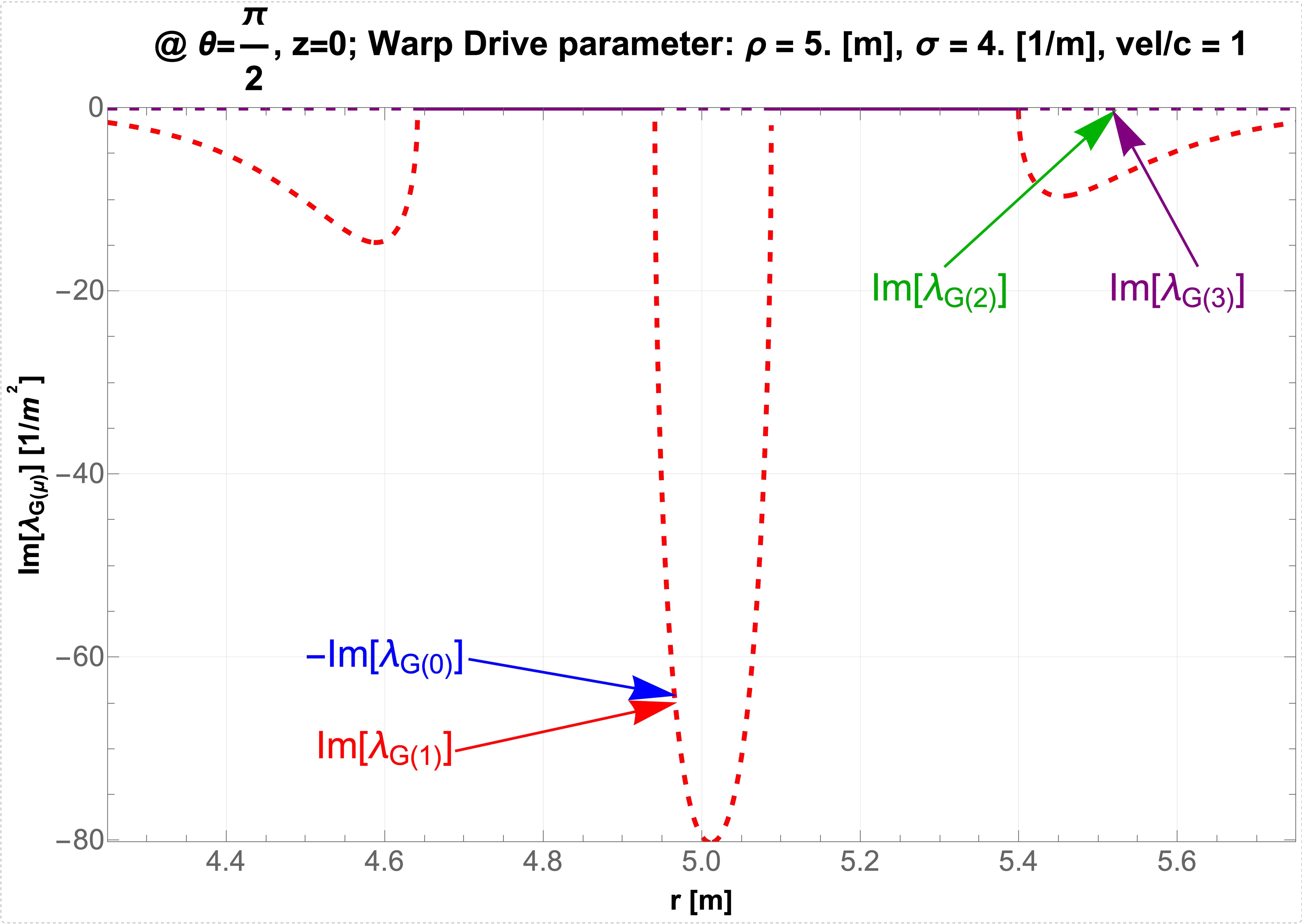}
        \caption{Imaginary parts $\mathrm{Im}(\lambda_{G(\mu)})$ vs. $r$ at $\theta=\pi/2$.}
        \label{fig:natario_im_thetaPi2} 
    \end{subfigure}

    \caption{Physically sorted eigenvalues $\{\lambda_{G(\mu)}\}_{\mu=0}^{3}$ of $G^{\hat{\alpha}}{}_{\hat{\beta}}$ versus radial distance $r$ for Natário's warp drive. Panels display real parts (top row) and imaginary parts (bottom row) of the eigenvalues along the axis of motion ($\theta=0$, left column) and in the equatorial plane  ($\theta=\pi/2$, right column). Line style indicates local Hawking–Ellis classification: \textbf{solid} for Type I (all eigenvalues real), \textbf{dashed} for Type IV (complex eigenvalues present).}
    \label{fig:Natario_eigenvalues_radial} 
\end{figure}

\subsection{Interpretation of eigenvalues and energy conditions from visualizations}
\label{sec:eigenvalue_interpretation}

Three‑dimensional \textit{Mathematica\textsuperscript{\textregistered}}~\cite{Mathematica} \texttt{RevolutionPlot3D} surfaces (Figs.~\ref{fig:main_3x3_grid A}–\ref{fig:main_3x3_grid C}) display the meridional distribution of selected combinations of the \emph{physically sorted eigenvalues} $\{\lambda_{G(\alpha)}\}_{\alpha=0}^{3}$ that enter the weak (WEC) and null (NEC) energy‑condition diagnostics, together with the scalar \emph{invariant} $G=-R=\sum_{\alpha=0}^{3}\lambda_{G(\alpha)}$.  
Each surface sits in the prime meridional $(x,y)$‑plane (the $x$‑axis points along the direction of motion; cf.\ Fig.~\ref{Fig1_SpherCoord}); with its vertical coordinate $z$ the amplitude of the plotted quantity, encoded by a rainbow color scale keyed to the $z$-values.

Alcubierre and irrotational results are plotted on a common vertical scale. Natário requires its own scale because the zero-expansion shift field forces amplitudes larger (by an order of magnitude), which underscores how suppressing local spatial volume expansion forces much larger curvature—and hence much larger stress–energy densities—to sustain the warp bubble.

To visualize the global Hawking–Ellis type distribution of the Alcubierre and Natário bubbles, Fig.~\ref{fig:main_2x3_grid} —also generated with \texttt{RevolutionPlot3D}— plots the timelike‑associated eigenvalue $-\Re[\lambda_{G(0)}]$ with color encoding algebraic type (light for Type I, dark for Type IV). Regions are classified as Type I \((\Im[\lambda_{G(0)}]=0\), hence \(-\Re[\lambda_{G(0)}]=\kappa\,\varrho_p\)) versus Type IV (\(\Im[\lambda_{G(0)}]\equiv\mu\neq0\)). Columns display (left) Type I only, (center) Type IV only, and (right) their overlay; rows distinguish (top) Alcubierre from (bottom) Natário.  

Complementary one‑dimensional radial cuts—generated with \texttt{Plot}—along the axis of motion ($\theta=0$) and in the equatorial plane ($\theta=\pi/2$) (Figs.~\ref{fig:main_figure lambda0}–\ref{fig:Natario_eigenvalues_radial}) plot each physically-sorted eigenvalue  $\{\lambda_{G(\alpha)}\}_{\alpha=0}^{3}$ across the warp-bubble walls. Line style indicates Hawking–Ellis classification (solid for Type I, dashed for Type IV), for direct comparison with Fig.~\ref{fig:main_2x3_grid}.

The 3-D type-structure surfaces and 1-D cuts collectively reveal the persistent Type IV regions in the Alcubierre and Natário drives versus the irrotational solution’s global Type I character.   As shown in \cref{prop:irrot_class}, the irrotational warp drive therefore admits a local rest frame, a unique timelike eigenvector, and an \emph{invariant} proper energy density.

The first two eigenvalues in the physically-sorted list, \(\lambda_{G(0)}\) and \(\lambda_{G(1)}\), correspond, in the case of Type IV regions, to the complex conjugate pair \(\Lambda_G = \xi \pm i\mu\).
This implies that their Type IV regions, as discussed within \cref{sec:IVgeometry_and_physical_meaning}, possess no real causal eigenvectors, admit no preferred local rest frame, and lack a uniquely defined invariant proper energy density (cf.~Table~\ref{tab:type_comparison}). These characteristics are consequences of the non-zero parameter \(\mu\), which corresponds to an irreducible energy flux component \(q_s = \mu/\kappa\) within the invariant 2-plane associated with the complex eigenvalues. 
The null energy condition (NEC), expressed in terms of the mixed Einstein tensor tetrad components as \(G^{\hat{\sigma}}{}_{\hat{\beta}} \eta_{\hat{\alpha}\hat{\sigma}} k^{\hat{\alpha}} k^{\hat{\beta}} \ge 0\) for all null vectors \(k^{\hat{\alpha}}\) (satisfying \( \eta_{\hat{\gamma}\hat{\delta}} k^{\hat{\gamma}} k^{\hat{\delta}} = 0 \)), is inherently violated in regions where \(\mu \neq 0\). 
This violation occurs because the Type IV algebraic structure guarantees the condition fails for some null directions \(k^{\hat{\alpha}}\). Since the weak energy condition (WEC) implies the NEC, the WEC is also necessarily violated in these Type IV regions.

\subsubsection{Timelike eigenvalue and energy density interpretation}

We now detail the interpretation of the eigenvalue \(-\lambda_{G(0)}\) visualizations (Figs.~\ref{fig:main_3x3_grid A}–\ref{fig:Natario_eigenvalues_radial}), starting with the irrotational drive. 

\paragraph*{Irrotational warp drive} The irrotational warp drive's global Type I classification ensures its eigenvalues are real (e.g., $\operatorname{Im}[\lambda_{G(0)}]=0$, Fig.~\ref{fig:sub_r2_c2 A}). The quantity \(-\lambda_{G(0)}\), related to the invariant proper energy density by \(-\lambda_{G(0)} = \kappa \varrho_p\) and depicted in Fig.~\ref{fig:sub_r1_c2 A}, must accordingly be non-negative (\(-\lambda_{G(0)} \ge 0\)) for the weak energy condition (WEC, requiring \(\varrho_p \ge 0\)) to be satisfied. Calculations show this quantity \(-\lambda_{G(0)}\) is predominantly positive (reaching a maximum of \(0.46\,[\text{m}^{-2}]\) on the axis of motion, $\theta=0$), but becomes slightly negative (reaching a minimum of \(-0.026\,[\text{m}^{-2}]\)) in a localized region near \(\theta=\pi/2\) (perpendicular to the direction of travel). This indicates a small, localized WEC violation (\(\varrho_p < 0\)) at \(\theta=\pi/2\), undiscernible at the scale shown in the figure.

\paragraph*{Alcubierre warp drive}\label{subpar:Alcubierre}

\textit{Global algebraic features.}  
Both the Alcubierre spacetime (Figs.~\ref{fig:HEtyp_r1_c2},
\ref{fig:HEtyp_r1_c3}) and the Natário spacetime
(Figs.~\ref{fig:HEtyp_r2_c2}, \ref{fig:HEtyp_r2_c3}) contain extended
Type IV regions, where the mixed Einstein tensor
\(G^{\hat\mu}{}_{\hat\nu}\) possesses complex eigenvalues
\(\Lambda_{G}=\xi\pm i\mu\) with \(\mu\neq0\)
(Figs.~\ref{fig:sub_r2_c1 A}, \ref{fig:sub_r2_c3 A}).  This algebraic structure \emph{necessarily violates} the null and weak
energy conditions (see Table~\ref{tab:type_comparison}): no real
timelike eigenvector exists, the observer‑dependent energy density in
the invariant two‑plane is
\(\varrho_{\text{obs}}=-\xi/\kappa<0\), and the irreducible energy flux
is \(q_{s}=\mu/\kappa\).

\smallskip
\textit{Axis of motion cut (\(\theta=0\)).}\quad
Along the axis of motion the tensor is Type IV for all radii (Figs.~\ref{fig:sub1 lambda0}, \ref{fig:sub3 lambda0}).  The
imaginary part dominates, \(|\mu|>|\xi|\), but its absolute value is
\emph{not} maximal on this cut (Fig.~\ref{fig:sub3 lambda0}).

\smallskip
\textit{Equatorial plane cut (\(\theta=\pi/2\)).}\quad
Along this cut the algebraic type varies with radius.  
In the bubble interior ($r\lesssim\rho-\Delta$) and the far field
($r\gg\rho+\Delta$), where $\Delta \approx 0.014 \rho$ the tensor remains Type IV; the energy flux
parameter \( q_{s} = \mu/\kappa\) attains its \emph{largest} magnitude,
\(\mu\simeq -3.0\;[\text{m}^{-2}]\) at $r\approx0.96\rho$ (Fig.~\ref{fig:sub4 lambda0}).
Within an annulus of width \(\sim2\Delta\) about the wall
($r\approx\rho$) we have (\(\mu=0\)) and the tensor
switches to Type I.  A unique rest frame exists, and the
timelike eigenvalue $\kappa\,\varrho_{p}(r)=-\lambda_{G(0)}(r)$ acquires a pronounced \emph{U}‑shape: it peaks at
\(\kappa\,\varrho_{p}\simeq0.9\;[\text{m}^{-2}]\) near
$r\approx\rho\pm\Delta$ and dips to
\(\kappa\,\varrho_{p}\simeq-1.0\;[\text{m}^{-2}]\) at $r\approx\rho$ (Fig.~\ref{fig:sub2 lambda0}), producing
a local WEC violation.

\smallskip

\textit{Comparative energy‐density profiles of Alcubierre and irrotational drives.}\quad
The irrotational solution is \emph{everywhere}\ Type\,I, with its peak positive proper energy density 
\(\varrho_p\simeq0.46\,[\text{m}^{-2}]/\kappa\) on the axis of motion  (\(\theta=0,\;r\approx\rho\)).  
By contrast, the Alcubierre drive’s Type\,I region with peak negative proper density $\varrho_p\simeq-1.0\,[\text{m}^{-2}]/\kappa$ occurs at the equatorial cut (at \(\theta=\pi/2,\;r\approx\rho\)).  Therefore the irrotational peak positive proper energy density is 46\,\% of that Alcubierre magnitude.  
Furthermore, the irrotational peak \(\varrho_p\) is only about \(15\%\) of the Alcubierre drive’s peak irreducible flux magnitude \(\lvert q_s\rvert\simeq3.0\,[\text{m}^{-2}]/\kappa\) 
in its equatorial Type IV region; this flux magnitude itself is roughly 3 times larger than the magnitude of the Alcubierre drive’s peak negative \(\varrho_p\) in its Type I zone. Comparing the proper energy density \(|\varrho_p|\) of the Type I irrotational drive with the irreducible flux magnitude \(|q_s|\) from Alcubierre's Type IV regions provides a useful gauge of relative \emph{exotic matter cost}, despite their distinct physical roles. Both quantities are scalar invariants with the same physical dimensions (\([\text{m}^{-2}]\)). As established in \cref{prop:mu_invariance}, for Type IV regions, \(|\mu|\) (and thus \(|q_s|\)) is invariant, while \(\xi = \operatorname{Re}[\lambda_{G(0)}]\) is an invariant scalar that relates to an observer-dependent energy density \(\varrho_{\mathrm{obs}}=-\xi/\kappa\) for observers within the null two-plane \(\mathcal{N}\); such regions inherently lack a unique rest frame and proper energy density. In contrast, the irrotational drive's Type I structure yields a true frame-invariant proper energy density \(\varrho_p=-\lambda_{G(0)}/\kappa\).

Thus, eliminating vorticity (\(\omega_{ij}=0\)) while allowing non-zero expansion leads to a dramatically different energy-density profile for the irrotational drive. The requirements are substantially smaller in absolute magnitude (e.g., its peak \(\varrho_p\) is \(\approx 15\%\) of Alcubierre's peak \(|q_s|\), a reduction factor of nearly 7), and the required stress-energy is predominantly positive and algebraically Type I  (classical energy-density) everywhere, unlike the large negative densities and Type IV (complex-null) zones characteristic of the Alcubierre drive.

\paragraph*{Natário warp drive}\label{subpar:Natario} Figure~\ref{fig:Natario_eigenvalues_radial} displays the eigenvalue behavior for the Natário warp drive along radial cuts at \(\theta=0\) (axis of motion) and \(\theta=\pi/2\) (equatorial plane). The Natário drive exhibits radially varying Hawking–Ellis classifications along both cuts shown.

Along the axis ($\theta=0$) (Figs.~\ref{fig:natario_re_theta0} and Fig.~\ref{fig:natario_im_theta0}), the geometry transitions from Type IV for radii approximately \(r \lesssim \rho\) (indicated by a non-zero imaginary component \(\mu = \operatorname{Im}[\lambda_{G(0)}]\)) to Type I for \(r \gtrsim \rho\) (where \(\mu = 0\) and eigenvalues become purely real).

The equatorial cut (\(\theta=\pi/2\)), reveals a more complex sequence of alternating classifications:
\begin{itemize}
  \item \textbf{Type IV}: (\(\mu\neq0\)), from the origin \(r=0\) to the inner wall, 
  \item \textbf{Type I}:  (\(\mu=0\)), inner wall, with the proper energy density peaking at $\varrho_p \;=\;-\,\xi/\kappa \simeq -67\,[\text{m}^{-2}]/\kappa$(Fig.~\ref{fig:natario_re_thetaPi2}), i.e.\  \(146\) times the irrotational drive’s peak magnitude.
  \item \textbf{Type IV}: (\(\mu\neq0\)) the central region extending over \(r \approx \rho\), where $\mu=\operatorname{Im}[\lambda_{G(0)}]\simeq -80\,[\text{m}^{-2}]$ (Fig.~\ref{fig:natario_im_thetaPi2}), i.e.\  \(174\) times the irrotational drive’s peak.
  \item \textbf{Type I}: (\(\mu=0\)) outer wall.
  \item \textbf{Type IV}: (\(\mu\neq0\)) in the far field (\(r\to\infty\)).
\end{itemize}

The peak eigenvalue magnitudes  (both the real part \(\xi\) and imaginary part \(\mu\)) in the equatorial cut ($\theta=\pi/2$) exceed those along the axis of travel ($\theta=0$), and are significantly larger than those required for the Alcubierre and irrotational drives, reflecting the comparatively higher exotic-matter cost of the Natário solution, due to its zero volume expansion constraint.

\subsubsection{Null energy condition interpretation}

We now detail the interpretation of the null energy condition (NEC) based on the eigenvalue visualizations (Figs.~\ref{fig:main_3x3_grid A}–\ref{fig:main_3x3_grid C}). The NEC requires that \(T_{\mu\nu} k^\mu k^\nu \ge 0\), or equivalently using the mixed Einstein tensor tetrad components central to our analysis, \(G^{\hat{\sigma}}{}_{\hat{\beta}} k_{\hat{\sigma}} k^{\hat{\beta}} \ge 0\) for all null vectors \(k^{\hat{\beta}}\) (where \(k_{\hat{\sigma}} = \eta_{\hat{\sigma}\hat{\alpha}} k^{\hat{\alpha}}\) and the null condition is \(\eta_{\hat{\gamma}\hat{\delta}} k^{\hat{\gamma}} k^{\hat{\delta}} = 0 \)). For Type I regions, the NEC implies \(\varrho_p + p_k \ge 0\) for all principal pressures \(p_k\). Given the relations \(\kappa \, \varrho_p = -\lambda_{G(0)}\) and \(\kappa \, p_k = \lambda_{G(k)}\) adopted here (for \(k=1,2,3\)), the NEC condition for Type I regions becomes \(\lambda_{G(k)} - \lambda_{G(0)} \ge 0\).

The plots focusing on relevant eigenvalue combinations (Figs.~\ref{fig:sub_r2_c1 B}-\ref{fig:sub_r2_c3 B}, \ref{fig:sub_r1_c1 C}-\ref{fig:sub_r1_c3 C}) display the quantities \(\lambda_{G(k)} - \operatorname{Re}[\lambda_{G(0)}]\) (where \(\xi = \operatorname{Re}[\lambda_{G(0)}]\) and \(k=2, 3\)). For all three drives, the most significant NEC violations correspond to regions where these quantities become most negative. This typically occurs where the real eigenvalues \(\lambda_{G(2)}, \lambda_{G(3)}\), associated with principal pressures (transverse based on physical sorting), reach their most negative values relative to \(\xi\).

\paragraph*{Irrotational warp drive} Applying this to the globally Type I irrotational drive, Figs.~\ref{fig:main_3x3_grid B} and \ref{fig:main_3x3_grid C} indicate that the most significant NEC violations (the most negative values of \(\lambda_{G(k)} - \lambda_{G(0)}\)), approximately $-4. [\text{m}^{-2}]$, occur along the axis of motion ($\theta=0$) associated with \(\lambda_{G(2)} - \lambda_{G(0)}\) (Fig.~\ref{fig:sub_r2_c2 B}) and with \(\lambda_{G(3)} - \lambda_{G(0)}\) (Fig.~\ref{fig:sub_r1_c2 C}). The origin of this violation is clarified by Figs.~\ref{fig:sub31 lambda20} and \ref{fig:sub33 lambda30}, which show that the transverse pressure eigenvalues \(\lambda_{G(2)}\) and \(\lambda_{G(3)}\) themselves become significantly negative along this axis. These large transverse negative pressures (\(p_k = \lambda_{G(k)}/\kappa < 0\)) are responsible for \(\varrho_p + p_k < 0\), thus violating the NEC, even though \(\varrho_p\) is positive at \(\theta=0\) (consistent with \(-\lambda_{G(0)}\) peaking positively there).

\paragraph*{Alcubierre warp drive} In the Type IV regions characteristic of the Alcubierre and Natário drives, null energy condition (NEC) violation is inherently guaranteed by the algebraic structure associated with \(\mu \neq 0\) (cf. \cref{sec:classification_protocol_final_v5}). The observation from Figs.~\ref{fig:sub_r2_c1 B}-\ref{fig:sub_r2_c3 B} and \ref{fig:sub_r1_c1 C}-\ref{fig:sub_r1_c3 C} that the quantities \(\lambda_{G(k)} - \xi\) (where \(\xi = \operatorname{Re}[\lambda_{G(0)}]\) and \(k=2,3\)) become strongly negative indicates directions of particularly strong NEC violation. This arises from the interplay between terms related to the real eigenvalues \(\lambda_{G(k)}\) (associated with \(k=2,3\) transverse negative principal pressures) and \(\xi\).

For the Alcubierre drive, specifically, Fig.~\ref{fig:sub_r2_c1 B} indicates that significant NEC violations occur practically uniformly in the $xy$ prime meridian plane, both in the cut along the axis of motion ($\theta=0$) and in the equatorial cut  ($\theta=\pi/2$). At the \(\theta=0\) location, the geometry is Type IV, as confirmed by the eigenvalue behavior shown in the radial cuts in the direction of motion, Figs.~\ref{fig:sub31 lambda20} and \ref{fig:sub33 lambda30}. Here, the peak NEC violation of approximately $-4.3 [\text{m}^{-2}]$ is driven by the transverse pressure-related eigenvalues \(\lambda_{G(2)}\) and \(\lambda_{G(3)}\) becoming significantly negative relative to \(\xi\). The Type IV classification of the Alcubierre drive at this location contrasts with the irrotational and Natário drives, which are classified as Type I at the regions of largest NEC violation.

\paragraph*{Natário warp drive} Focusing on the Natário drive's equatorial cut (\(\theta=\pi/2\)), Fig.~\ref{fig:sub_r2_c3 B} reveals a maximum null energy condition (NEC) violation, where the relevant eigenvalue combination reaches a minimum \(\lambda_{G(2)} - \lambda_{G(0)} \approx -260\,[\text{m}^{-2}]\). This large negative value signifies a strong breach of the physical NEC requirement \(\varrho_p + p_k \ge 0\), specifically \(\varrho_p + p_2 \ll 0\) here. Interestingly, at this radial location of maximum violation, the geometry is locally Type I ($\mu=0$, Fig.~\ref{fig:natario_im_thetaPi2}). The violation \(\varrho_p + p_2 \ll 0\) occurs because a strongly negative principal pressure (\(p_2 = \lambda_{G(2)}/\kappa \approx -200 [\text{m}^{-2}]/\kappa\))  (Fig.~\ref{fig:natario_re_thetaPi2}) adds to the proper energy density, which is also negative here (\(\varrho_p = -\lambda_{G(0)}/\kappa \approx -60 [\text{m}^{-2}]/\kappa\), corresponding to \(\lambda_{G(0)} \approx +60\,[\text{m}^{-2}]\)) (Fig.~\ref{fig:natario_re_thetaPi2}). The peak NEC violation magnitude of $\approx -260\,[\text{m}^{-2}]$ for the Natário drive significantly exceeds (e.g., by a factor exceeding 60 times) the maximum violations found for the irrotational and Alcubierre drives.

\subsubsection{Weak, dominant, and strong energy conditions} For all three warp–drive geometries we found a violation of the null energy condition (NEC).  Since
\begin{equation}
\text{DEC}\;\Rightarrow\;\text{WEC}\;\Rightarrow\;\text{NEC},
\qquad
\text{SEC}\;\Rightarrow\;\text{NEC},
\end{equation}
their contrapositives give
\begin{equation}
\neg\text{NEC}\;\Longrightarrow\;
\neg\text{WEC}\;\wedge\;\neg\text{DEC}\;\wedge\;\neg\text{SEC}.
\label{eq:hierarchy}
\end{equation}
Therefore all three warp–drive spacetimes simultaneously violate the weak (WEC), dominant (DEC), and strong (SEC) energy conditions.

\subsubsection{Trace energy condition interpretation}

The trace energy condition (TEC) \cite{Curiel_Primer,Kontou_Energy} is imposed on the trace of the stress-energy tensor. This invariant scalar can be expressed as \(T \equiv T^{\hat{\alpha}}{}_{\hat{\alpha}}\). For our \((-,\,+,\,+,\,+)\) metric signature, the TEC requires this trace to be non-positive, i.e., \(T \le 0\). This is equivalent to the physical requirement that the proper energy density \(\varrho_p\) must be greater than or equal to the sum of the principal pressures \(p_k\), \(\varrho_p \ge \sum_k p_k\). Through Einstein's field equations, \(G^{\hat{\alpha}}{}_{\hat{\beta}} = \kappa \, T^{\hat{\alpha}}{}_{\hat{\beta}}\), this condition on \(T\) directly translates to \(G \equiv G^{\hat{\alpha}}{}_{\hat{\alpha}}\), via \(G = \kappa \, T\). Consequently, the TEC ($T \le 0$) implies a non-positive Einstein tensor scalar (\(G \le 0\)) and thus a non-negative Ricci scalar (\(R = -G \ge 0\)). The scalar invariant \(G = \sum_{\alpha=0}^{3}\lambda_{G(\alpha)}\) (equal to \(\mathrm{Tr}\,M\)) provided a valuable check on our numerical eigensystem calculations to confirm the precise cancellation of imaginary parts from any complex conjugate eigenvalues. 
Historically, the notion that matter might be constrained in a way consistent with the TEC was foreshadowed by proposals regarding its ultimate equation of state. J. von Neumann suggested (as noted in 1935~\cite{Chandrasekhar1935}) a limiting equation of state \(p = (1/3) \varrho\), which precisely saturates the TEC requirement that \(p \le \rho/3\).  The TEC was largely abandoned after 1961 when Zel’dovich \cite{Zeldovich:1961sbr} showed that plausible ultra-dense matter, characterized by stiff equations of state such as \(p \approx \varrho_p\), can violate the TEC. Bekenstein~\cite{Bekenstein} introduced in 2013 a subdominant trace energy condition (STEC) for ordinary (baryonic) matter, requiring its stress-energy to satisfy a TEC-like constraint (\(\rho \gtrsim \sum_k p_k\)) that dominates exotic contributions from quantum vacuum energy. Consequently, the STEC ensures that the positive mass-energy of any physical apparatus would outweigh localized negative vacuum energy it contains, precluding net negative total mass and demonstrating a role for modified TEC conditions in arguments for global mass positivity in semi-classical gravity.
We visualize the behavior of the Einstein tensor scalar invariant \(G\) (Figs.~\ref{fig:sub_r3_c1 C}–\ref{fig:sub_r3_c3 C}) in this TEC context and now proceed to discuss its implications to further distinguish the drives.

\paragraph*{Alcubierre and irrotational warp drives} Figures~\ref{fig:sub_r3_c1 C} and \ref{fig:sub_r3_c2 C} indicate that the Alcubierre and irrotational drives exhibit similar spatial distributions and peak magnitudes for \(G\). Both drives violate the TEC, with \(G\) attaining positive values exceeding \(\approx 5\,[\text{m}^{-2}]\), these maximum violations occurring along the axis of motion ($\theta=0$).

\paragraph*{Natário warp drive} 
Our analytic result (\cref{prop:TEC_Natario}) proves that the zero‑expansion Natário solution satisfies the trace–energy condition everywhere: with the lapse fixed to $\alpha=1$, a flat spatial metric ($R^{(3)}=0$), and vanishing expansion $K=0$, the Gauss–Codazzi scalar identity reduces to $R^{(4)}=K_{ij}K^{ij}\!\ge 0$ and hence $G=-R^{(4)}\le 0\quad\text{point‑wise}$.

The numerical profile (Fig.~\ref{fig:sub_r3_c3 C}) agrees,
$G$ ranging from $0$ at the bubble centre to about
\(-120\,[\mathrm{m}^{-2}]\); the most negative values 
(strongest TEC satisfaction) occur perpendicular to the direction
of motion (\(\theta=\pi/2\)).
The non‑positivity of \(G=\sum_{\sigma=0}^{3}\lambda_{G(\sigma)}\)
is governed by \(\lambda_{G(2)}\): as Fig.~\ref{fig:natario_re_thetaPi2}
shows, this eigenvalue becomes strongly negative, giving a substantial
negative pressure \(p_{(2)}=\lambda_{G(2)}/\kappa\) and thereby enforcing  
\(G\le 0\) throughout the spacetime.

\section{Limitations \& Outlook}
\label{sec:limits}

\textbf{Scope assumptions.}
We work with unit lapse $\alpha=1$, a flat and time–independent spatial metric, and an irrotational shift $\beta_i=\partial_i\Phi$ with constant $v$ in the maps and global measures. These idealizations yield a clean Type~I eigenstructure and enable direct model-to-model comparisons. Convergence as $r\to 0$ is analyzed separately in Appendix \ref{app:gr0}.

\textbf{Global budgets are mask–defined.}
All integrated energies and volumes use the disjoint regions $\mathcal R_-$, $\mathcal R_0$, $\mathcal R_+$ built from the Hessian scalar $\lambda_H=\tfrac12(\mathrm{tr}\,H^2-(\mathrm{tr}\,H)^2)$ with adaptive absolute thresholds (Appendix~\ref{app:global_energy}). With our sign convention
$T^{(0)}{}_{(0)}=-\varrho_p$ and $\kappa\,\varrho_p=-\lambda_H$, the reported magnitudes
$E_\pm=\int_{\mathcal R_\pm}|\varrho_p|\,dV=\kappa^{-1}\!\int_{\mathcal R_\pm}|\lambda_H|\,dV$
and volumes $V_\pm, V_0$ are numerically robust; the zero band $V_0$ quantifies the tolerance deadband rather than physics. We integrate over an extended window $r\in[0,12\rho]$ and report both finite–window ($R_{\rm integ}=12\rho$) totals and \emph{tail–corrected} ($R\!\to\!\infty$) estimates obtained via a two–point $1/R$ extrapolation using $R\in\{8\rho,12\rho\}$ (App.~\ref{app:global_energy}, Eq.~\eqref{eq:tail_two_point}). Jacobian and mask–partition checks pass at roundoff.

\textbf{Interpretation of “exotic matter”.}
In the canonical run ($\rho=5~[\mathrm{m}]$, $\sigma=4~[\mathrm{m}^{-1}]$, $v/c=1$) on the $R_{\rm integ}=12\rho$ window we find $E_{+}/E_{-}=1.07$ (3 s.f.) and, on that finite window, a negative–energy volume fraction near $70\%$ (Tables~\ref{tab:globals_SI}–\ref{tab:ratios_checks}). After applying the $1/R$ tail extrapolation (App.~\ref{app:global_energy}, Table~\ref{tab:baseline_vs_tail}), the \emph{tail–corrected} magnitudes satisfy
\[
E_-(\infty)=1.33\times10^{44}\ \mathrm{J},\qquad
E_+(\infty)=1.33\times10^{44}\ \mathrm{J},
\]
so that the normalized net is $|E_{\rm net}|/E_{\rm abs}(\infty)=0.04\%$ (four decimals in fractional units). Within our stated reporting precision (central values to 3\,s.f., percentages to two decimals), the net proper energy is therefore \emph{consistent with zero}; accordingly, we adopt \(E_{\rm net}(\infty)=0\) as the best estimate. Equatorial localization metrics show that the negative part is diffuse and concentrated near $\theta\simeq\pi/2$, whereas the positive part forms high–amplitude, crescent–like thin collars near $\theta\simeq 0,\pi$.

\textbf{What changes if assumptions relax.}
Letting $\alpha$ or $v$ vary in time introduces explicit time dependence in $K_{ij}$ and can modify averages and alignments, potentially reintroducing off–diagonal fluxes in mixed components and creating Type~IV pockets. Departing from irrotational flow (adding vorticity) would break the simple Hessian reduction and alter $\lambda_H$–based diagnostics.

\textbf{Outlook.}
(i) Extend to gently time–dependent $v(t)$ and $\alpha(t,\mathbf{x})$;
(ii) couple to matter models consistent with Type~I eigenstructure;
(iii) assess averaged and quantum energy–condition constraints in semiclassical regimes.

\section{Conclusions}

This work has provided a detailed analytical and numerical study of a novel \emph{kinematically irrotational} warp-drive spacetime within General Relativity. Building on the work of Santiago, Schuster, and Visser \cite{SantiagoVisser}---who showed that kinematically irrotational warp drives with unit lapse ($\alpha=1$) and flat spatial slices have vanishing momentum density and therefore fall into the Hawking–Ellis Type I class---this paper advances the field by explicitly constructing, analyzing, and cross‑comparing concrete warp‑drive spacetimes with significantly improved properties.

The key new contributions presented in this paper, ordered by importance and novelty, are as follows:
\begin{enumerate}

\item We present an explicit, closed-form analytical construction of a kinematically irrotational warp drive—specified by the scalar potential $\Phi(r,\theta,t)=v(t)\,r\,g(r)\cos\theta$ \eqref{eq:irrot_potential} and the associated \emph{continuous} shift components $\beta^i$ \eqref{eq:BetaIrr}, \eqref{g(r)}—satisfying $\boldsymbol{\omega}=* (d\boldsymbol{\beta}^{\flat})=0$ with \emph{smooth} boundary behavior, thereby avoiding the discontinuities of some prior proposals; to our knowledge, no previous work has provided a continuous, analytically derived irrotational shift of this form.

        \item Causal evidence from a fixed–smoothing ablation.
    Holding the baseline smoothing profile \(g(r)\) and mask tolerances fixed, we introduce a tunable vortical component and recompute the global budgets (Appendix~\ref{app:ablation}, Table~\ref{tab:abl_sweep}). Even modest vorticity collapses the near-cancellation of proper energy: on the \(R_{\rm integ}=12\rho\) window the ratio \(E_{+}/E_{-}\) falls below \(10^{-2}\) by \(\eta=0.25\), and at \(\eta=1\) the negative-energy magnitude \(E_{-}\) increases by a factor \(>500\).
    This confirms that the favorable energy profile of our irrotational model is \emph{caused} by its curl-free kinematics rather than by profile smoothing.

\item We develop a systematic, high-precision (50-digit) Cartan–tetrad numerical pipeline for pointwise eigenvalue–eigenvector analysis of the full Einstein tensor, incorporating Hawking–Ellis classification and energy-condition diagnostics. Running it on our closed-form irrotational solution provides an explicit, full-spacetime confirmation of a warp-drive geometry that is globally Type I—corroborating the demonstration of \cite{SantiagoVisser} and extending the zero-vorticity study of \cite{SchusterSantiagoVisser2023}. As an internal validation, the timelike eigenvalue obtained by direct diagonalization, \(\lambda_{G(0)}\), agrees to roundoff with the independent Hessian-invariant evaluation \(\lambda_H\) (Appendix~\ref{app:global_energy}; Eqs.~\eqref{eq:lambdaH_def}, \eqref{eq:rho_from_lambdaH}).

    \item We formally demonstrate (\cref{prop:mu_invariance}) that for Hawking–Ellis Type IV (complex null) regions the real part \(\xi\) and the magnitude \(\lvert\mu\rvert\) of the complex eigenvalues are Lorentz scalar invariants, thereby justifying invariant‐based comparisons of the irreducible flux \(\lvert q_s\rvert=\lvert\mu\rvert/\kappa\) and observer‐dependent density \(\lvert\varrho_{\mathrm{obs}}\rvert=\lvert\xi\rvert/\kappa\) with the Type I proper energy density \(\varrho_p\).

    \item We extend this numerical Hawking–Ellis classification analysis to the Alcubierre and Natário (zero-expansion) warp drives. While these spacetimes can exhibit localized Type IV regions, in our analysis none of the models investigated—including the newly constructed irrotational drive—contain any regions of Type II or Type III stress-energy classification.

    \item We demonstrate that the irrotational warp drive achieves a predominantly \emph{positive} proper energy density \(\varrho_p\) (peaking at \(0.46\,[\mathrm{m}^{-2}]/\kappa\) on its axis of motion, \(\theta=0\)), with only a small, localized dip below zero (\(\varrho_p\approx -0.026\,[\mathrm{m}^{-2}]/\kappa\)) at the equatorial cut (\(\theta=\pi/2\)).  Relative to the \emph{peak proper-energy deficit}, this minimum is reduced by a factor of \(\approx 38\) compared with Alcubierre (\(\min\varrho_p\approx -1/\kappa\)) and by \(\approx 2.6\times 10^{3}\) compared with Natário (\(\min\varrho_p\approx -67/\kappa\)). While these relative reductions are significant, the absolute stress–energy scales, \(T^{\hat{\mu}}{}_{\hat{\nu}}= G^{\hat{\mu}}{}_{\hat{\nu}}/\kappa\), remain astronomically large within standard General Relativity. Crucially, the irrotational drive requires classical Type I energy rather than the complex-null Type IV exotic energy of the other models.

     \item Whole–spacetime energy/volume summary. On the baseline window \(R_{\rm integ}=12\rho\) we find
\(E_{+}/E_{-}=1.07\) (3 s.f.) and \(V_{-}/(V_{-}{+}V_{+})\approx 0.732\).
After applying a two-point \(1/R\) tail extrapolation with \(R\in\{8\rho,12\rho\}\)
(App.~\ref{app:global_energy}, Eq.~\eqref{eq:tail_two_point}, Table~\ref{tab:baseline_vs_tail}),
the tail-corrected magnitudes are
\(E_-(\infty)=1.33\times10^{44}\ \mathrm{J}\) and \(E_+(\infty)=1.33\times10^{44}\ \mathrm{J}\),
with \(E_{\rm net}(\infty)=-1.12\times10^{41}\ \mathrm{J}\) and
\(|E_{\rm net}|/E_{\rm abs}(\infty)=0.04\%\) (four decimals in fractional units).
 Within our stated reporting precision, the net proper energy is \emph{consistent with zero}; our best estimate is \(E_{\rm net}(\infty)\approx 0\).
As visualized in Appendix~\ref{app:global_energy}, Fig.~C1 (prime meridional slice), the positive energy density support is compact and concentrated near the bubble wall (\(r\simeq \rho\)), whereas the negative energy density is diffuse and significantly lower in amplitude.

    \item All three warp–drive spacetimes violate the NEC—and hence the WEC, DEC, and SEC—due to large \emph{negative transverse principal pressures}.  Although superficially akin to the isotropic negative pressure of dark energy, these pressures are strongly anisotropic (and confined to the bubble wall).  Quantitatively, the Natário drive’s peak NEC violation is $65\times$ and $60\times$ larger than those in the irrotational and Alcubierre drives, respectively (nearly two orders of magnitude).  Moreover, the location of maximal violation differs: for the irrotational drive it lies along the axis of motion ($\theta=0$), whereas for both the Alcubierre and Natário drives it lies on the equatorial cut ($\theta=\pi/2$).

    \item We present a comparative analysis of the trace energy condition (TEC), proving (\cref{prop:TEC_Natario}) that, under the stated assumptions, the Natário zero-expansion warp drive satisfies \(G\le 0\) throughout the domain considered, while both the Alcubierre and the new irrotational drives exhibit regions with \(G>0\).

    \item We emphasize that, although one can mathematically impose both zero divergence (\(\nabla\!\cdot\!\boldsymbol{\beta}=0\)) and zero vorticity (\(\boldsymbol{\omega}=\nabla\times\boldsymbol{\beta}=0\))—which forces \(\boldsymbol{\beta}=-\nabla\Phi\) with \(\Phi\) harmonic (\(\nabla^2\Phi=0\))—this severely limits the potentials available for a localized warp bubble. In contrast, our irrotational drive employs a deliberately non-harmonic \(\Phi\) (so \(\nabla\!\cdot\!\boldsymbol{\beta}=-\nabla^2\Phi\neq0\)), thereby opening a broader solution space and yielding markedly improved energy-condition profiles.

\end{enumerate}

\smallskip

These results collectively advance the theoretical and numerical understanding of warp-drive spacetimes by delivering what is, to our knowledge, the first explicit, closed-form irrotational solution in an orthonormal-tetrad (Cartan) framework.  Built on a smooth \(C^\infty\) form-function \(f(r)\), our model yields globally continuous frame fields and spin coefficients, and exhibits markedly improved energy-condition profiles compared to both Alcubierre and Natário drives.  Moreover, we provide the first comprehensive, pointwise Hawking–Ellis classification across all three geometries, and explain the origin of the discrete jumps in classification seen in Alcubierre and Natário spacetimes:  although the Einstein tensor and its eigenvalues vary smoothly, the algebraic type flips abruptly whenever key thresholds are crossed (for example, when the imaginary part \(\mu\) of a complex eigenvalue vanishes or when an eigenvector’s causal character changes).  In particular, the Type IV subtype—requiring strictly null complex eigenvectors—is structurally unstable in the usual sense: arbitrarily small perturbations generically drive \(\mu\) away from zero.  By combining these conceptual insights with our high-precision analytic–numerical Cartan‐tetrad pipeline, we lay a robust foundation for future explorations of exotic spacetime geometries and the theoretical limits of faster‑than‑light travel in General Relativity.  All quantitative statements above hold under the stated assumptions of unit lapse (\(\alpha=1\)) and stationary flat spatial slices; relaxing these (e.g., during acceleration) can reintroduce momentum density and localized Type~IV regions.

\backmatter

\section*{Declarations}

\begin{itemize}
\item Funding
No funding was received for conducting this study.
\item Conflict of interest/Competing interests 
The author declares no competing interests.
\item Ethics approval and consent to participate
Not applicable
\item Consent for publication
Not applicable
\item Data availability
This is a theoretical study. All results are derived from the equations presented within the manuscript. Numerical data used to generate the figures are available from the author upon reasonable request.
\item Materials availability
Not applicable
\item Code availability 
The custom Wolfram Mathematica scripts used for numerical calculations and figure generation are available from the author upon reasonable request.
\item Author contribution
The sole author was responsible for all aspects of this work, including conceptualization, methodology, analysis, and manuscript preparation.
\end{itemize}

\begin{appendices}

\section{Notation, units, and coordinate system}\label{Notation}

 This section introduces the notation and units used throughout this paper and presents the coordinate system.
 
\smallskip

\begin{itemize}
    \item Spatial 3-tensors are denoted by the index $(3)$ on the top left of the tensor: $^{(3)}R_{i j}$. 
     \item Indices: Greek $\mu,\nu=0\ldots3$ (spacetime), Latin $i,j=1\ldots3$ (space). Signature $(-,+,+,+)$.
    \item Metrics: $g_{\mu\nu}={}^{(4)}g_{\mu\nu}$, $\gamma_{ij}={}^{(3)}g_{ij}$; flat spatial slices in Cartesian. 
    \item 3+1: lapse $\alpha$ (set to $1$), shift $\beta^i$, extrinsic curvature $K_{ij}$, with conventions as in~\cite{Gourgoulhon,AlcubierreBook,baumgarte_shapiro_2010}.
    \item Covariant differentiation in space (in 3+1 decomposition) is denoted by $D_i$ or $D^i$.
    \item The components of the Riemann curvature tensor $^{(4)}R^{\kappa}_{\; \mu \kappa \nu}$ and of the Ricci curvature tensor $^{(4)}R_{\mu \nu}$ are defined following the conventions in Misner et al. \cite{Misner1973}, where $R^{\alpha}{}_{\beta\mu\nu} = \partial_\mu \Gamma^{\alpha}_{\beta\nu} - \dots$ and the Ricci tensor is $^{(4)}R_{\mu \nu} = {^{(4)}R^{\kappa}}_{\; \mu \kappa \nu}$.
    \item Tetrad components of tensors are denoted with carets (hats) above the indices: $K_{\hat{i}\hat{j} }$, as in \cite{Misner1973}.
    \item Tetrad components aligned with the eigenvectors of a tensor and whose magnitudes equal the corresponding eigenvalues are distinguished by enclosing the indices in parentheses: \( T^{(\mu)}{}_{(\nu)} \). 
    \item Units are denoted by square brackets, e.g., $[\text{m}]$. Geometrized (natural) units are employed in the analysis. SI units are used in a few physical examples. Specifically:
    \begin{itemize}
       \item $x, y, z, r, \rho$: $[\text{m}]$ in both geometrized and in SI.
       \item $G, G^{00}, G_{i}^{\ 0}, G^{0}_{\ i}$: $[\text{m}^{-2}]$ in both geometrized units and in SI units.
       \item $\sigma$: $[\text{m}^{-1}]$ in both geometrized and in SI.
       \item $g_{\mu \nu},g^{\mu \nu},\gamma_{ij},\gamma^{ij}, f, \theta,\phi$: $[\text{1}]$ (dimensionless) in both geometrized and in SI.
       \item $v, c$: $[\text{1}]$ (dimensionless) in geometrized and $[\text{m}\, \text{s}^{-1}]$ in SI.
       \item $t$: $[\text{m}]$ in geometrized and $[\text{s}]$ in SI.
       \item $\kappa$: $=8\pi G/c^{4}$: $[\text{1}]$ (dimensionless) in geometrized; and $[\text{N}^{-1}]$ in SI.
       \item $T, T^{00}=\varrho, T_{i}^{\ 0}, T^{0}_{\ i}$: $[\text{m}^{-2}]$ in geometrized and $[\text{N}\, \text{m}^{-2}] = [\text{J} \, \text{m}^{-3}]$ in SI.
     \end{itemize}

      \item We adopt Natário's \cite{Natário} right-handed spherical coordinate system $(r, \theta, \phi)$ (Fig.~\ref{Fig1_SpherCoord}) with orthonormal basis vectors \( \mathbf{e_{r}} \), \( \mathbf{e_{\theta}} \), \( \mathbf{e_{\phi}} \). The polar axis is the direction of travel: the $x$-axis. The coordinates are the radial distance $r = \sqrt{x^2 + y^2 + z^2} \in [0, \infty)$, the polar angle $\theta = \arccos(x/r) \in [0, \pi]$ measured from the positive $x$-axis, and the azimuthal angle $\phi = \operatorname{atan2}(z, y) \in [-\pi, \pi]$. Surfaces $\theta=$ const define circular cones centered on the $x$-axis (vertex at origin), with $\theta=\pi/2$ being the $yz$ (equatorial) plane. Constant $r$ and $\theta$ define parallel circles centered on the $x$-axis, analogous to parallels of latitude on Earth. The cylindrical radius (distance from the $x$-axis) is $\mathsf{r} = r \sin \theta$. The spherical warp bubble boundary is defined by $r=\rho$. Its `equator' (intersection with the $yz$-plane, $\theta=\pi/2$) is the largest of these parallels on the bubble surface and has cylindrical radius $\mathsf{r} = \rho \sin(\pi/2) = \rho$. The two-argument function $\operatorname{atan2}(z, y)$ computes the angle $\phi$ in the $yz$-plane measured from the positive $y$-axis to the projection of the position vector. Surfaces $\phi=$const are half-planes containing the $x$-axis; their intersection with the bubble boundary $r=\rho$ yields circles of radius $\rho$ (analogous to meridians, i.e.\ lines of constant longitude). We refer to the \(xy\)-plane (\(\phi = 0\)) as the prime meridian plane.
\end{itemize}


\section{Regularity and asymptotics of the irrotational profile \texorpdfstring{$g(r)$}{g(r)} at the origin}
\label{app:gr0}

\subsection*{Axioms}
(1) \emph{Parameters:} $\sigma>0$ and $\rho>0$ are fixed. \;
(2) \emph{Analyticity:} Define
\begin{equation}
\label{eq:falc_def}
f_{\rm Alc}(r)
:= \frac{\tanh\!\big(\sigma(r+\rho)\big)-\tanh\!\big(\sigma(r-\rho)\big)}
           {2\,\tanh(\sigma\rho)} ,
\qquad
f(r):=1-f_{\rm Alc}(r),
\end{equation}
and
\begin{equation}
\label{eq:g_def}
g(r) \;:=\; \frac{1}{r}\int_{0}^{r} f(s)\,\mathrm{d}s \qquad (r\neq 0), 
\quad\text{with } g(0) \text{ defined by the limit } r\to0 .
\end{equation}
Since $f_{\rm Alc}$ is a composition/product of analytic functions, $f_{\rm Alc},f\in C^{\infty}(\mathbb{R})$.
Moreover,
\begin{equation}
\label{eq:falc_alt}
f_{\rm Alc}(r)
\;=\;
\frac{\cosh^{2}(\sigma\rho)}
     {\cosh\!\big(\sigma(r+\rho)\big)\,\cosh\!\big(\sigma(r-\rho)\big)} ,
\end{equation}
which is manifestly even in $r$.

\subsection*{Proposition}
$g$ is $C^{\infty}$ near $r=0$ and
\begin{equation}
\label{eq:g_limit}
\lim_{r\to 0} g(r) \;=\; 0,
\qquad
g(r) \;=\; \frac{\sigma^{2}\operatorname{sech}^{2}(\sigma\rho)}{3}\,r^{2} \;+\; O(r^{4}) 
\quad (r\to 0).
\end{equation}

\begin{proof}
\emph{Step 1 (model–independent limit).}
By the Fundamental Theorem of Calculus,
$\dfrac{\mathrm{d}}{\mathrm{d}r}\!\big(\int_{0}^{r} f(s)\,\mathrm{d}s\big)=f(r)$.
Thus L’Hôpital’s rule gives
\begin{equation}
\label{eq:lhopital}
\lim_{r\to 0} g(r)
=\lim_{r\to 0}\frac{\int_{0}^{r} f(s)\,\mathrm{d}s}{r}
=\lim_{r\to0} f(r) = f(0).
\end{equation}
From \eqref{eq:falc_def} and the oddness of $\tanh$, $f_{\rm Alc}(0)=1$ and hence
\begin{equation}
\label{eq:f0}
f(0)=1-f_{\rm Alc}(0)=0.
\end{equation}
Therefore $\lim_{r\to0}g(r)=0$.

\medskip
\emph{Step 2 (leading coefficient for $f_{\rm Alc}$).}
Set $a:=\sigma\rho$, $b:=\sigma$ and write
\[
F(r):=f_{\rm Alc}(r)=\frac{\cosh^{2}a}{\cosh(a+br)\cosh(a-br)}.
\]
Then
\[
(\ln F)'(r)=-b\tanh(a+br)+b\tanh(a-br),\qquad (\ln F)''(r)=-b^{2}\sech^{2}(a+br)-b^{2}\sech^{2}(a-br),
\]
so $(\ln F)'(0)=0$ and $(\ln F)''(0)=-2b^{2}\sech^{2}a$. Since $F'(0)=0$, we have
\begin{equation}
\label{eq:falc_second_at0}
f''_{\rm Alc}(0)=F''(0)=F(0)\,(\ln F)''(0)=-\,2\sigma^{2}\sech^{2}(\sigma\rho).
\end{equation}
Thus
\begin{equation}
\label{eq:falc_series}
f_{\rm Alc}(r)=1+\frac{f''_{\rm Alc}(0)}{2}\,r^{2}+O(r^{4})
=1-\sigma^{2}\sech^{2}(\sigma\rho)\,r^{2}+O(r^{4}),
\end{equation}
and hence
\begin{equation}
\label{eq:f_series}
f(r)=1-f_{\rm Alc}(r)=\sigma^{2}\sech^{2}(\sigma\rho)\,r^{2}+O(r^{4}).
\end{equation}
Integrating and dividing by $r$,
\begin{equation}
\label{eq:g_series}
\begin{aligned}
\int_{0}^{r} f(s)\,\mathrm{d}s \; &=\; \frac{\sigma^{2}\operatorname{sech}^{2}(\sigma\rho)}{3}\,r^{3} + O(r^{5}),\\
g(r) \; &=\; \frac{1}{r}\!\int_{0}^{r} f(s)\,\mathrm{d}s \;=\; \frac{\sigma^{2}\operatorname{sech}^{2}(\sigma\rho)}{3}\,r^{2} + O(r^{4}).
\end{aligned}
\end{equation}
which proves \eqref{eq:g_limit}.
\end{proof}
\subsection*{Corollary (General limit for any \texorpdfstring{$C^{1}$}{C¹} form function).}
If $f\in C^{1}$ near $0$ and $f(0)=0$, then $\displaystyle\lim_{r\to 0} \frac{1}{r}\int_{0}^{r} f(s)\,\mathrm{d}s=0$.
\emph{Proof.} Either by the Fundamental Theorem of Calculus and L’Hôpital's rule (as in Step~1) or by the integral mean value theorem:
$\frac{1}{r}\!\int_{0}^{r} f=\!f(\xi)\to f(0)=0$ for some $\xi\in(0,r)$. \qed

\subsection*{Remark 1 (Evenness and absence of a linear term)}
From \eqref{eq:falc_alt}, $f_{\rm Alc}$ is even; thus $f=1-f_{\rm Alc}$ is even, $f'(0)=0$, and $f(r)=O(r^{2})$, yielding $g(r)=O(r^{2})$ by \eqref{eq:g_series}.
This reflects axial symmetry about the direction of motion.

\subsection*{Remark 2 (Integration constant)}
Definition \eqref{eq:g_def} fixes $g$. Any primitive
$\tilde g(r)=\frac{1}{r}\big(\int_{0}^{r}f+C\big)$ is regular at $r=0$ only if $C=0$; otherwise $\tilde g(r)\sim C/r$.

\subsection*{Remark 3 (Hypothetical linear term and radial \texorpdfstring{$C^{1}$}{C¹} regularity)}
If one posits $f(r)=a\,r+O(r^{2})$ for small $r>0$, then
$\frac{1}{r}\!\int_{0}^{r} f(s)\,\mathrm{d}s=\frac{a}{2}r+O(r^{2})\to0$, so $g(0)=0$ still holds. However, such a linear term is \emph{incompatible with a $C^{1}$, rotationally symmetric scalar field on $\mathbb{R}^{3}$}: writing $u(\mathbf{x})=\phi(|\mathbf{x}|)$, differentiability at $\mathbf{0}$ forces $\phi'(0)=0$ (otherwise directional derivatives at $\mathbf{0}$ would be a nonzero constant, contradicting linearity of $v\mapsto\nabla u(0)\!\cdot\! v$ on the unit sphere). In our model, $f=1-f_{\rm Alc}$ is even by construction (cf.\ \eqref{eq:falc_alt}).

\makeatletter
\@ifundefined{marginparwidth}{\newlength{\marginparwidth}}{}
\setlength{\marginparwidth}{2cm}
\makeatother

\setcounter{equation}{0}
\numberwithin{equation}{section}

\section{Global Energy Budget, Volumes, Tail Model, and Localization Diagnostics}
\label{app:global_energy}

This appendix quantifies the global negative-- and positive--energy budgets and their spatial distribution on the numerical window used for integration, and then attaches a physics-- and numerics--based tail correction to estimate the contribution from $r>R_{\rm integ}$.

\subsection*{Field, masks, and nondimensionalization}

We compute the spatial Hessian \(H_{ij}=\partial_i\partial_j\Phi\) and the rotationally invariant scalar
\begin{equation}
\lambda_H \;=\; \tfrac12\!\left(\mathrm{Tr}[H^2]-\bigl(\mathrm{Tr}\,H\bigr)^2\right),
\label{eq:lambdaH_def}
\end{equation}
which, for an irrotational shift \(\beta_i=\partial_i\Phi\) on a stationary flat slice, obeys
\begin{equation}
K_{ij}K^{ij}-K^2 \;=\; 2\,\lambda_H \;= -2\kappa\,\varrho_p,
\label{eq:ADM_link}
\end{equation}
so that the proper energy density follows from Einstein’s equations as
\begin{equation}
\varrho_p \;=\; -\,\frac{\lambda_H}{\kappa},
\qquad
\kappa=8\pi\ \text{(geom.)},\ \ 
\kappa=2.07665\times10^{-43}\ [\mathrm{N}^{-1}]\ \text{(SI)} .
\label{eq:rho_from_lambdaH}
\end{equation}

\paragraph*{Basis and notation.}
Indices \(i,j\) are \emph{Cartesian} on the spatial slice. In an orthonormal tetrad,
\(T^{(0)}{}_{(0)}=-\varrho_p\).
Using \(G^{(a)}{}_{(b)}=\kappa\,T^{(a)}{}_{(b)}\) one has
\(\lambda_{G(0)}=\kappa\,\lambda_{T(0)}=-\kappa\,\varrho_p=\lambda_H\), hence
\(\varrho_p=-\lambda_H/\kappa\).

\paragraph*{Independent computation of the timelike eigenvalue.}
Besides extracting the timelike eigenvalue \(\lambda_{G(0)}\) in the main text by direct diagonalization of the mixed Einstein tensor,
\begin{equation}
G^{(a)}{}_{(b)}\,u_{(0)}^{(b)}=\lambda_{G(0)}\,u_{(0)}^{(a)},
\label{eq:tla_direct_corrected}
\end{equation}
we compute it here independently, from the Hessian route developed above. Using Einstein’s equations \(G^{(a)}{}_{(b)}=\kappa\,T^{(a)}{}_{(b)}\) and the invariant
\(\lambda_H=\tfrac12\big(\mathrm{Tr}[H^2]-(\mathrm{Tr}\,H)^2\big)\),
the timelike eigenvalues of \(G^{(a)}{}_{(b)}\) and \(T^{(a)}{}_{(b)}\) satisfy
\begin{equation}
\lambda_{G(0)}=\kappa\,\lambda_{T(0)}=-\kappa\,\varrho_p=\lambda_H.
\label{eq:map_lambdaH_to_eigs_corrected}
\end{equation}
Thus the timelike eigenvalue obtained by direct diagonalization
(Eq.~\eqref{eq:tla_direct_corrected}) is identically reproduced by the Hessian construction \(\lambda_H\)
without any eigenvector ordering or degeneracy issues. Numerically, over the reported grids and working precision,
the two evaluations of \(\lambda_{G(0)}\) agree to roundoff, providing a robust internal consistency check on \texttt{Eigensystem} and on the analytic–derivative implementation used to build \(H_{ij}\).

\paragraph*{Disjoint masks and tolerances.}
We partition space by the sign of \(\lambda_H\) with absolute deadbands:
\begin{equation}
\begin{aligned}
\mathcal R_- &= \bigl\{\lambda_H>\lambda_{\mathrm{tol}}^{(+)}\bigr\},\qquad
\mathcal R_+ = \bigl\{\lambda_H<-\lambda_{\mathrm{tol}}^{(-)}\bigr\},\\
\mathcal R_0 &= \bigl\{\,|\lambda_H|\le
\max\!\bigl(\lambda_{\mathrm{tol}}^{(+)},\lambda_{\mathrm{tol}}^{(-)}\bigr)\,\bigr\},
\end{aligned}
\label{eq:regions}
\end{equation}
\begin{equation}
\lambda_{\mathrm{tol}}^{(\pm)}
\;=\;
\max\!\bigl\{\lambda_{\rm floor},\ \varepsilon\,\lambda^{(\pm)}_{\max}\bigr\},
\qquad
\lambda_{\rm floor}=10^{-20},\ \ \varepsilon=10^{-6}.
\label{eq:tolerances}
\end{equation}

\subsection*{Grid, volume element, and numerical window}

With axial symmetry the azimuthal angle is integrated analytically, so
\begin{equation}
dV \;=\; 2\pi\,r^{2}\sin\theta \; dr \; d\theta .
\label{eq:vol_element}
\end{equation}
We use a uniform tensor grid \((n_{r},n_{\theta})=(1601,1441)\) at 40-digit working precision.
The integration window is \(r\in[0,R_{\rm integ}]\) with \(R_{\rm integ}=12\,\rho\).
A posteriori checks (Eq.~\eqref{eq:mask_jacobian_checks}) confirm the Jacobian normalization and mask disjointness to machine accuracy.

\paragraph*{Reporting precision.}
Consistent with common practice, we round \emph{central values} to \(\sim\)3 significant figures (s.f.) unless otherwise noted; \emph{percentages} are reported to two decimals; and \emph{uncertainties} are quoted with 1--2 s.f., with central values rounded to the same decimal place as their uncertainties.

\paragraph*{Nondimensionalization.}
For the irrotational profile, analytic structure gives \(\lambda_H \sim v^{2}\sigma^{2}\,F(\sigma r,\theta)\).
We therefore present the dimensionless field (for \(\sigma v\neq 0\))
\begin{equation}
\widetilde{\varrho}_p \;\equiv\;
\frac{\kappa\,\varrho_p}{\sigma^{2}(v/c)^{2}}
\;=\; -\,\frac{\lambda_H}{\sigma^{2}(v/c)^{2}},
\label{eq:rho_tilde}
\end{equation}
on dimensionless axes (for \(\rho\neq 0\))
\begin{equation}
(\tilde x,\tilde y) \;=\; \Bigl(\frac{x}{\rho},\,\frac{y}{\rho}\Bigr) ,
\label{eq:axes_dimless}
\end{equation}
so that runs with the same \(\sigma\rho\) therefore collapse onto a single nondimensional map, while the warp-bubble wall remains near \(r/\rho\simeq 1\).
.

\subsection*{Prime meridional plane energy density map}

\begin{figure}[t]
  \centering
  \includegraphics[width=0.82\linewidth]{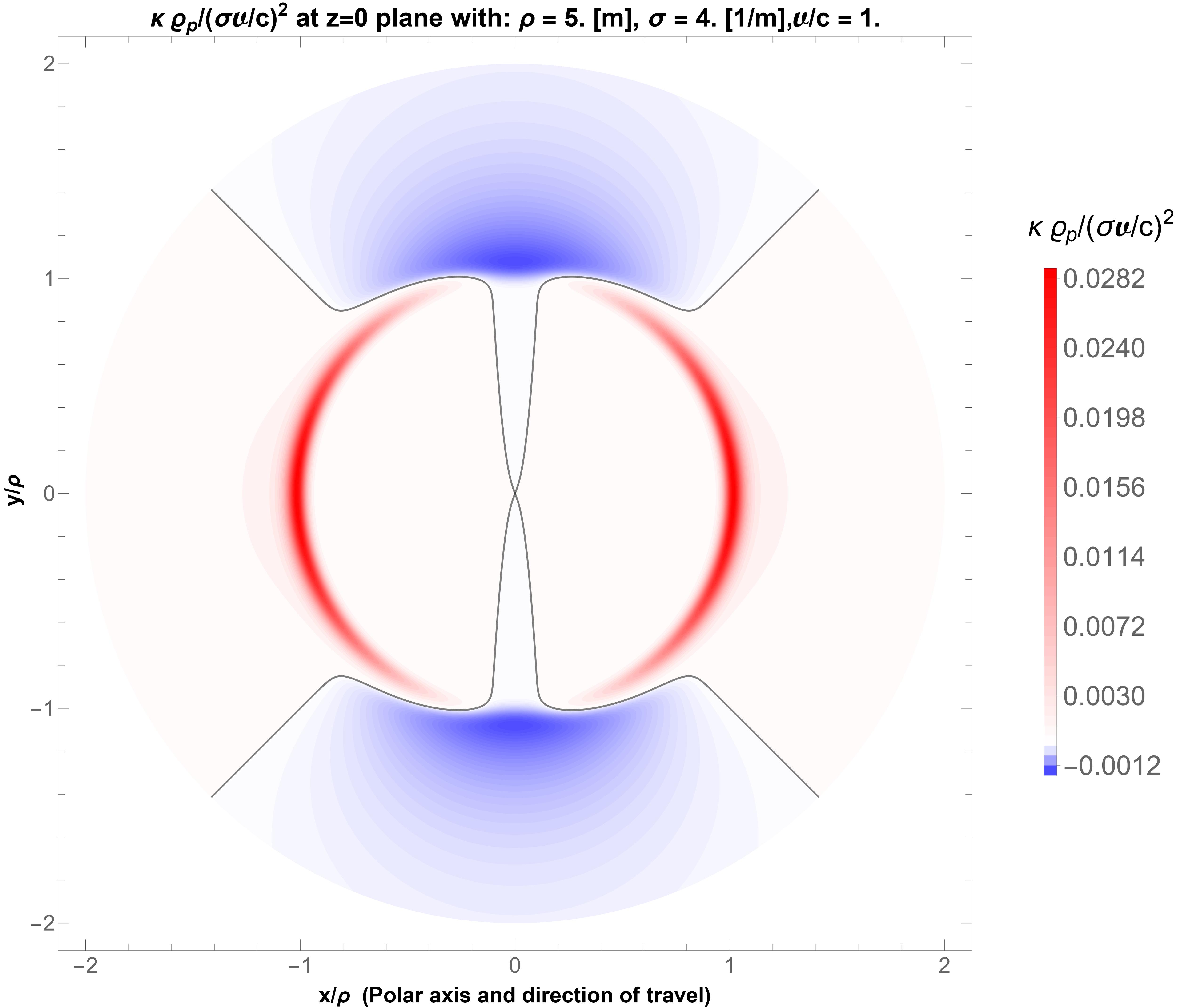}
\caption{Prime meridional slice (\(z=0\), \(xy\)-plane) of the
nondimensionalized proper energy density \(\kappa\,\varrho_p/(\sigma v/c)^{2}\) versus
nondimensional axes \(x/\rho\) (polar axis and direction of motion) and \(y/\rho\)
(transverse in–plane axis). The solid black curve is the null energy density \(\varrho_p=0\) contour
separating \(\varrho_p<0\) (blue) from \(\varrho_p>0\) (red). Parameters:
\(\rho=5~[\mathrm{m}]\), \(\sigma=4~[\mathrm{m}^{-1}]\), \(v/c=1\).}
  \label{fig:eq_density}
\end{figure}

On the plotted (zoomed) disk of Fig.~\ref{fig:eq_density} with \(R_{\rm plot}=2\rho\), the SI extrema are
\(\min\varrho_p=-1.29\times10^{41}\,[\mathrm{J\,m^{-3}}]\) and
\(\max\varrho_p=2.23\times10^{42}\,[\mathrm{J\,m^{-3}}]\).
These extrema are governed by the near–wall structure. By construction, blue indicates \(\varrho_p<0\) and red indicates \(\varrho_p>0\); the solid black curve marks
the zero locus \(\varrho_p=0\) (equivalently \(\lambda_H=0\)).

For the representative run \(\rho=5~[\mathrm{m}]\), \(\sigma=4~[\mathrm{m}^{-1}]\), \(v/c=1\) (so \(\sigma\rho=20\)),
the \emph{positive} energy density on the prime meridional plane appears as two \emph{disjoint, crescent–shaped} lobes concentrated near \(r\simeq\rho\).
The crescents are centered along the \(\pm x\) directions (axis of motion), where \(\varrho_p\) attains its in–plane maxima, and they taper toward the \(\pm y\) directions at the equator, where the \(\varrho_p=0\) contour pinches the would–be annulus into two separate lobes.
The configuration exhibits twofold (\(C_2\)) rotational symmetry about the bubble center, i.e.\ invariance under the half–turn \((x,y)\mapsto(-x,-y)\).
Each crescent has characteristic half–thickness \(\sim 1/\sigma\), which on the nondimensional axes corresponds to
\(\Delta(x/\rho)\sim\Delta(y/\rho)\sim 1/(\sigma\rho)\approx 0.05\).

The \emph{negative} energy density appears (i) outside the bubble ($r>\rho$ ) as weak exterior caps that occupy approximately quarter–disk sectors, whose \(\varrho_p=0\) boundaries are straight rays at \(\pm45^\circ\) to the \(x\) axis; and (ii) inside the bubble ($r<\rho$ )  as thin, very low amplitude slivers aligned with the \(y\) axis.

\noindent\textbf{3D interpretation.}
Revolving the two positive–energy crescents about the \(x\)–axis (polar axis) produces two \emph{disjoint polar collars} on the thin spherical shell \(r\simeq\rho\), with maximal thickness at the poles (\(\theta=0,\pi\)) and vanishing near the equator (\(\theta=\pi/2\)).
In contrast, the exterior negative–energy wedges lie at slightly larger radius (\(r>\rho\)) and, upon revolution, sweep out broad \emph{equatorial mantles} (annular sectors): they are low in amplitude but extend substantially in radius, peaking near the equator and decaying with increasing \(r\).

\paragraph*{Why the volumes differ.}
Although the $z=0$ meridional slice shows comparable positive and negative $\varrho_p$ areas—which approach equality as $r\to\infty$ because the $\varrho_p=0$ contours are straight rays at $\pm45^\circ$ to the $x$-axis for $r>\rho$—the axisymmetric 3-D volume element weights each $(r,\theta)$-ring by $2\pi r^{2}\sin\theta$ (Eq.~\eqref{eq:vol_element}). The positive support is confined to thin polar collars near $r\simeq\rho$ at small $\sin\theta$, whereas the negative support occupies broad equatorial mantles extending to larger $r$ where $\sin\theta\approx1$; hence $V_{-}>V_{+}$ even when the $z=0$ slice areas appear similar (cf.\ Fig.~\ref{fig:eq_density}, Tables~\ref{tab:band_captures}–\ref{tab:shell_captures}).

\subsection*{Global energy magnitudes and volumes (baseline window \(R_{\rm integ}=12\rho\))}

Let \(\lambda_H\) be as in Eq.~\eqref{eq:lambdaH_def}. By \(\kappa\,\varrho_p=-\lambda_H\),
\begin{equation}
\begin{aligned}
E_- \;&=\; \int_{\mathcal R_-} |\varrho_p|\,dV
      \;=\; \frac{1}{\kappa}\int_{\mathcal R_-} |\lambda_H|\,dV,\qquad
E_+ \;=\; \int_{\mathcal R_+} |\varrho_p|\,dV
      \;=\; \frac{1}{\kappa}\int_{\mathcal R_+} |\lambda_H|\,dV,\\[4pt]
E_{\rm neg}&=-E_-,\quad E_{\rm pos}=E_+,\quad E_{\rm net}=E_+-E_- .
\end{aligned}
\label{eq:Epm_magnitudes}
\end{equation}
Region volumes are
\begin{equation}
V_-=\!\int_{\mathcal R_-}\! dV,\qquad
V_0=\!\int_{\mathcal R_0}\! dV,\qquad
V_+=\!\int_{\mathcal R_+}\! dV ,\qquad
V_{\rm act}=V_-+V_0+V_+ .
\label{eq:volumes}
\end{equation}

\paragraph*{Baseline numerical totals (\(R_{\rm integ}=12\rho\), no tail).}
See Tables~\ref{tab:globals_SI} and \ref{tab:ratios_checks}.

\subsection*{Large--radius tail model and two--point estimator}
\label{sec:tail_model}

\paragraph*{Asymptotic scaling and why \(1/R\).}
At large \(r\), the analytic form used in the code gives
\(g(r)=1-\dfrac{I}{r}+\mathcal O\!\big(e^{-2\sigma r}/r\big)\) with 
\(I=\rho\,\coth(\sigma\rho)\).
Then \(\Phi=v\,r\,g(r)\cos\theta = v\,(x-I\cos\theta)+\mathcal O(e^{-2\sigma r})\),
so the Hessian obeys \(H_{ij}=\partial_i\partial_j\Phi=\mathcal O(r^{-2})\) and the invariant
\(\lambda_H=\tfrac12\!\big(\mathrm{Tr}[H^2]-(\mathrm{Tr}\,H)^2\big)=\dfrac{C(\theta)}{r^{4}}+\mathcal O(e^{-4\sigma r})\),
with \(C(\theta)\ge 0\) on a set of nonzero solid angle.
With an \emph{absolute} deadband (Eq.~\eqref{eq:tolerances}) and over the radii used for fitting
(\(8\rho\!\le r\!\le 12\rho\)), the active solid--angle fraction where \(|\lambda_H|>\lambda^{(+)}_{\rm tol}\) is
\emph{order--unity} (cf.\ Table~\ref{tab:band_captures}).
Hence a shell contributes \(dE_-(r)\propto (r^{-4})\times r^{2}\,dr = r^{-2}dr\),
and the far--field tail satisfies \(\displaystyle E_-(\infty)-E_-(R)\sim\int_R^\infty r^{-2}dr\sim 1/R\).
(For \(E_+\), support is compact near \(r\simeq\rho\), so the tail is much smaller—matching the data.)

\paragraph*{Two--point \(1/R\) estimator.}
If \(E_\pm(\infty)-E_\pm(R)=A_\pm/R+\mathcal O(R^{-2})\), then for \(R_1<R_2\)
\begin{equation}
E_\pm(\infty)\;\approx\;\frac{R_2\,E_\pm(R_2)-R_1\,E_\pm(R_1)}{R_2-R_1}
\;=\;E_\pm(R_2)\;+\;\frac{R_1}{R_2-R_1}\,\Big(E_\pm(R_2)-E_\pm(R_1)\Big).
\label{eq:tail_two_point}
\end{equation}
We use \(R_1=8\rho\), \(R_2=12\rho\), so the tail increment is
\(\Delta E_\pm^{\rm tail}=E_\pm(\infty)-E_\pm(12\rho)=2\,[E_\pm(12\rho)-E_\pm(8\rho)]\).

\paragraph*{Tail--corrected totals (two--point).}
Using Eq.~\eqref{eq:tail_two_point} with \(R_1=8\rho\), \(R_2=12\rho\) and the measured
\(\{E_\pm(8\rho),E_\pm(12\rho)\}\), we obtain
\[
\Delta E_-^{\rm tail}=1.19\times10^{43}\ [\mathrm{J}],\qquad
\Delta E_+^{\rm tail}=3.42\times10^{42}\ [\mathrm{J}],
\]
and
\begin{equation}
E_-(\infty)=1.33\times10^{44}\ [\mathrm{J}],\quad
E_+(\infty)=1.33\times10^{44}\ [\mathrm{J}],\quad
E_{\rm net}(\infty)=-1.12\times10^{41}\ [\mathrm{J}].
\label{eq:tail_results}
\end{equation}
Thus \(E_{\rm abs}(\infty)=E_-(\infty)+E_+(\infty)=2.66\times10^{44}\ [\mathrm{J}]\) and
\(|E_{\rm net}|/E_{\rm abs}(\infty)=0.04\%\).

\begin{table}[htbp]
\centering
\caption{Baseline (\( R_{\rm integ}=12\rho\)) versus tail--corrected (\(R\!\to\!\infty\)) global energies using a two--point \(1/R\) extrapolation with \(R_1=8\rho\), \(R_2=12\rho\).}
\label{tab:baseline_vs_tail}
\begin{tabular}{lcc}
\hline
Quantity & Baseline at \(12\rho\) & Tail--corrected \(R\!\to\!\infty\) \\
\hline
\(E_-\) [J] & \(1.21\times10^{44}\) & \(1.33\times10^{44}\) \\
\(E_+\) [J] & \(1.29\times10^{44}\) & \(1.33\times10^{44}\) \\
\(E_{\rm net}=E_+-E_-\) [J] &
\(8.37\times10^{42}\) &
\(-1.12\times10^{41}\) \\
\(\displaystyle \frac{|E_{\rm net}|}{\int |\varrho_p|\, dV}\) [\%] &
\(3.34\) &
\(0.04\) \\
\hline
\end{tabular}
\end{table}

Numerically, the tail raises both magnitudes and drives the net toward zero:
\(|E_{\rm net}|/\!\int|\varrho_p|\,dV\) drops from \(3.34\%\) (baseline window) to \(0.04\%\) (tail--corrected).

\textbf{The near--cancellation \(E_+\approx E_-\) thus becomes extremely tight once the far--field is included.}

\paragraph*{Physical remark on near cancellation of net energy.}
The \emph{proper} energy magnitudes nearly cancel: on the tail--corrected totals we find
\(|E_{\rm net}|/E_{\rm abs}=0.04\%\) (two--point \(1/R\) extrapolation; percentages reported to two decimals).
This statement concerns the \emph{proper} energy density defined by the timelike eigenvalue of \(T^\mu{}_\nu\);
it does not, by itself, establish a vanishing ADM or Komar mass, which require global geometric fluxes/constraints beyond the local energy density.

\subsection*{Localization diagnostics (baseline window)}

\paragraph*{Equatorial band captures about \(\theta=\pi/2\).}
For half--width \(\Delta\),
\begin{equation}
\begin{aligned}
f^{\mathrm{band}}_{V_\pm}(\Delta)
&=\frac{\displaystyle
\int_{\mathcal R_\pm}\!\!\int_{\pi/2-\Delta}^{\pi/2+\Delta}
2\pi r^2\sin\theta\,dr\,d\theta}{V_\pm},\\[6pt]
f^{\mathrm{band}}_{E_\pm}(\Delta)
&=\frac{\displaystyle
\int_{\mathcal R_\pm}\!\!\int_{\pi/2-\Delta}^{\pi/2+\Delta}
\frac{|\lambda_H|}{\kappa}\,2\pi r^2\sin\theta\,dr\,d\theta}{E_\pm}.
\end{aligned}
\label{eq:band_captures}
\end{equation}

\paragraph*{Bubble wall--shell captures.}
For half--width \(w/\sigma\) about \(r=\rho\) (clipped to
\([0,R_{\rm integ}]\)),
\begin{equation}
\begin{aligned}
f^{\mathrm{shell}}_{V_\pm}(w)
&=\frac{\displaystyle
\int_{\mathcal R_\pm}\!\!\int_{\rho-w/\sigma}^{\rho+w/\sigma}
2\pi r^2\sin\theta\,dr\,d\theta}{V_\pm},\\[6pt]
f^{\mathrm{shell}}_{E_\pm}(w)
&=\frac{\displaystyle
\int_{\mathcal R_\pm}\!\!\int_{\rho-w/\sigma}^{\rho+w/\sigma}
\frac{|\lambda_H|}{\kappa}\,2\pi r^2\sin\theta\,dr\,d\theta}{E_\pm}.
\end{aligned}
\label{eq:shell_captures}
\end{equation}

\subsection*{Numerical validation}

Unmasked signed integral reproduces the masked net energy to within roundoff:
\begin{equation}
E_{\rm net}^{(\mathrm{unmasked})}
=\int\!\left(-\frac{\lambda_H}{\kappa}\right)dV
\;\approx\;
E_{\rm net}^{(\mathrm{masked})}.
\label{eq:direct_check}
\end{equation}
Masks partition the grid and the discrete volume form is consistent:
\begin{equation}
\max_{\text{grid}}
\bigl|\mathbf{1}_{\mathcal R_-}+\mathbf{1}_{\mathcal R_0}+\mathbf{1}_{\mathcal R_+}-1\bigr|=0,
\qquad
\sum_{\text{grid}} dV \big/ V_{\mathrm{plot}}=1.000937,
\label{eq:mask_jacobian_checks}
\end{equation}
indicating a relative volume-normalization error of $9.37\times10^{-4}$.

\subsection*{Numerical summary for \(\rho=5~[\mathrm{m}]\), \(\sigma=4~[\mathrm{m}^{-1}]\), \(v/c=1\)}
\label{sec:num_summary}

\noindent\textbf{Tail--corrected net.}
Using the two--point \(1/R\) extrapolation, \textbf{the net proper energy satisfies}
\(|E_{\rm net}(\infty)|/E_{\rm abs}(\infty)=0.04\%\), i.e.\ \textbf{effectively zero at the reported precision.}

\begin{table}[htbp]
\centering
\caption{Global energy magnitudes and volumes (baseline window \(R_{\rm integ}=12\rho\)).}
\label{tab:globals_SI}
\begin{tabular}{l c}
\hline
Quantity & Value \\
\hline
Negative energy \(E_-\) [J] & \(1.21\times10^{44}\) \\
Positive energy \(E_+\) [J] & \(1.29\times10^{44}\) \\
Net energy \(E_{\rm net}=E_+-E_-\) [J] & \(8.37\times10^{42}\) \\
\(\displaystyle \int |\varrho_p|\,dV\) [J] & \(2.50\times10^{44}\) \\
Negative volume \(V_-\) [m\(^3\)] & \(6.38\times10^{5}\) \\
Positive volume \(V_+\) [m\(^3\)] & \(2.34\times10^{5}\) \\
Zero volume \(V_0\) [m\(^3\)] & \(3.40\times10^{4}\) \\
\hline
\end{tabular}
\end{table}

\begin{table}[htbp]
\centering
\caption{Ratios and consistency checks (baseline window).}
\label{tab:ratios_checks}
\begin{tabular}{l c}
\hline
Metric & Value \\
\hline
Energy ratio \(E_+/E_-\) & \(1.07\) \\
\(|E_{\rm net}|/\!\int |\varrho_p|\,dV\) & \(3.34\%\) \\
\(V_-/(V_-+V_++V_0)\) & \(70.44\%\) \\
\(V_+/(V_-+V_++V_0)\) & \(25.81\%\) \\
\(V_0/(V_-+V_++V_0)\) & \(3.75\%\) \\
\(V_-/(V_-+V_+)\) & \(73.19\%\) \\
\(V_+/(V_-+V_+)\) & \(26.81\%\) \\
\(V_-/V_+\) & \(272.95\%\) \\
Direct signed sum \(E_{\rm net}^{\rm direct}\) [J] & \(8.40\times10^{42}\) \\
\(|E_{\rm net}-E_{\rm net}^{\rm direct}|/\!\int |\varrho_p|\,dV\) & \(0.0138\%\) \\
\hline
\end{tabular}
\end{table}

\begin{table}[htbp]
\centering
\caption{Prime meridional band captures about \(\theta=\pi/2\) (baseline window).}
\label{tab:band_captures}
\begin{tabular}{c c c c c}
\hline
\(\Delta\) & \(V_-\) & \(E_-\) & \(V_+\) & \(E_+\) \\
\hline
\(5^\circ\)  & \(12.52\%\) & \(19.53\%\) & \(0.00\%\)  & \(0.00\%\) \\
\(10^\circ\) & \(24.79\%\) & \(37.77\%\) & \(0.00\%\)  & \(0.02\%\) \\
\(15^\circ\) & \(36.87\%\) & \(54.24\%\) & \(0.01\%\)  & \(0.28\%\) \\
\(20^\circ\) & \(48.67\%\) & \(68.50\%\) & \(0.02\%\)  & \(1.18\%\) \\
\hline
\end{tabular}
\end{table}

\begin{table}[htbp]
\centering
\caption{Bubble wall--shell captures for \(|r-\rho|\le w/\sigma\) (baseline window).}
\label{tab:shell_captures}
\begin{tabular}{c c c c c}
\hline
\(w\) & \(V_-\) & \(E_-\) & \(V_+\) & \(E_+\) \\
\hline
0.5 & \(0.00\%\) & \(0.33\%\) & \(0.03\%\)  & \(34.24\%\) \\
1.0 & \(0.00\%\) & \(1.09\%\) & \(0.06\%\)  & \(58.43\%\) \\
2.0 & \(0.01\%\) & \(3.43\%\) & \(0.11\%\)  & \(71.21\%\) \\
3.0 & \(0.02\%\) & \(7.52\%\) & \(0.14\%\)  & \(74.42\%\) \\
\hline
\end{tabular}
\end{table}

\section{Causal Proof: A Fixed-Smoothing Vortical Ablation Study}
\label{app:ablation}

\noindent
\textbf{Goal.} To isolate the causal role of \emph{irrotationality}, we conduct a rigorous ablation study. We hold the baseline ``smoothing'' profile—derived from the scalar potential \(\Phi\) and its associated integral map \(g(r)\)—\emph{fixed}, and then introduce a tunable vortical component. By recomputing the energetics with the \emph{same} numerical grid, masks, and tolerances, any observed degradation in the energy metrics is attributable solely to the loss of the curl-free condition, not to a change in smoothing.

\subsection*{Shared assumptions and numerics}
We work on a stationary, flat spatial slice with unit lapse \(\alpha=1\). All physical parameters (\(\rho=5~[\mathrm{m}]\), \(\sigma=4~[\mathrm{m}^{-1}]\), \(v/c=1\)), the integration window (\(r\in[0,12\rho]\)), and the numerical grid \(\big(n_r,n_\theta\big)=(1601,1441)\) are identical to the baseline irrotational case. Crucially, the masks \(\mathcal R_\pm,\mathcal R_0\) and their absolute tolerances \(\lambda^{(\pm)}_{\mathrm{tol}}\) are \emph{frozen} to the baseline values (see Appendix~\ref{app:global_energy}, Eqs.~\eqref{eq:regions}--\eqref{eq:tolerances}), ensuring a consistent and fair comparison across the entire ablation sweep. Central values are reported to \(3\) significant figures and percentages to two decimals, matching the conventions used elsewhere.

\subsection*{Kinematics and tensors}

\subsubsection*{Baseline irrotational structure}
The baseline metric is purely irrotational, with \(\beta_i=\partial_i\Phi\) and \(\Phi=v\,r\,g(r)\cos\theta\). The extrinsic curvature is the Hessian of the potential,
\begin{equation}
K_{ij} \;=\; H_{ij}:=\partial_i\partial_j\Phi,
\label{eq:K_irrot_app}
\end{equation}
\noindent
As established in Appendix~\ref{app:global_energy}, the Hessian invariant \(\lambda_H\) and the proper energy density \(\varrho_p\) are linked through Eqs.~\eqref{eq:lambdaH_def} and~\eqref{eq:rho_from_lambdaH}.

\subsubsection*{Vortical ablation with fixed smoothing}
We add a vortical component to the baseline shift vector,
\begin{equation}
\beta \;=\; \nabla\Phi \;+\; \nabla\times \vA ,
\label{eq:abl_beta_appendix}
\end{equation}
where the vorticity is introduced via a toroidal vector potential
$\vA = \vAphi(r,\theta)\,\mathbf{e}_\phi$ with
\begin{equation}
\vAphi(r,\theta) \;=\; \psi(r)\,\sin\theta .
\label{eq:A_choice_appendix}
\end{equation}
The radial profile \(\psi(r)\) generates a vortical perturbation localized in a compact shell around the bubble wall:
\begin{equation}
\psi(r)\;=\;\eta\,v\,\Big(\tfrac{r}{\rho}\Big)^2\,
\exp\!\left[-\Big(\tfrac{r-\rho}{w/\sigma}\Big)^2\right],
\qquad
w=1,\quad
\eta\in\{0,\,0.25,\,0.5,\,0.75,\,1\}.
\label{eq:psi_profile_appendix}
\end{equation}
In an orthonormal spherical tetrad \((\hat r,\hat\theta,\hat\phi)\), the nonzero components of the full extrinsic curvature \(K\) are
\begin{align}
K_{\hat r\hat r}(r,\theta) \;&=\; 
\Big[\,v\,(2g'(r)+r\,g''(r)) + 2\Big(\tfrac{\psi'(r)}{r}-\tfrac{\psi(r)}{r^2}\Big)\,\Big]\;\cos\theta ,
\label{eq:Krr_app}\\[3pt]
K_{\hat\theta\hat\theta}(r,\theta) \;&=\; 
\Big[\,v\,g'(r) - \tfrac{\psi'(r)}{r}+\tfrac{\psi(r)}{r^2}\,\Big]\;\cos\theta ,
\qquad
K_{\hat\phi\hat\phi}(r,\theta)=K_{\hat\theta\hat\theta}(r,\theta),
\label{eq:Ktt_app}\\[3pt]
K_{\hat r\hat\theta}(r,\theta) \;&=\; 
-\,\Big[\,v\,g'(r)+\tfrac12\,\psi''(r)\,\Big]\;\sin\theta ,
\qquad
K_{\hat r\hat\phi}(r,\theta)=K_{\hat\theta\hat\phi}(r,\theta)=0.
\label{eq:Krt_app}
\end{align}
The quadratic invariant for the ablation, which generalizes Eqs.~\eqref{eq:lambdaH_def} and~\eqref{eq:rho_from_lambdaH}
(and reduces to those relations when \(\eta=0\)), is
\begin{equation}
\lambda_{\mathrm{blend}}(r,\theta)\;=\;\tfrac12\!\left(K_{ij}K^{ij}-K^2\right),
\qquad
K:=K^i{}_i=K_{\hat r\hat r}+K_{\hat\theta\hat\theta}+K_{\hat\phi\hat\phi},
\label{eq:lambda_blend_appendix}
\end{equation}
and in terms of the orthonormal components it reads
\begin{equation}
\lambda_{\mathrm{blend}}
\;=\;
\tfrac12\Big(
K_{\hat r\hat r}^2 + 2\,K_{\hat\theta\hat\theta}^2 + 2\,K_{\hat r\hat\theta}^2
- \big(K_{\hat r\hat r} + 2\,K_{\hat\theta\hat\theta}\big)^2
\Big),
\qquad
\kappa\,\varrho_p \;=\; -\,\lambda_{\mathrm{blend}} .
\label{eq:lambda_blend_explicit_appendix}
\end{equation}

\subsubsection*{Baseline equality at \texorpdfstring{\(\eta=0\)}{eta=0}}
To ensure bit-for-bit consistency, the \(\eta=0\) case reuses the stored baseline grid \(\lambda_H(r,\theta)\), guaranteeing that all baseline totals are reproduced to roundoff:
\begin{equation}
\eta=0 \quad\Longrightarrow\quad \lambda_{\mathrm{blend}}(r,\theta)\equiv \lambda_H(r,\theta).
\label{eq:eta0_identity_appendix}
\end{equation}

\subsection*{Global measures}
Global magnitudes and volumes follow Appendix~\ref{app:global_energy}, Eqs.~\eqref{eq:Epm_magnitudes}--\eqref{eq:volumes}, with \(\lambda\) replaced by \(\lambda_{\mathrm{blend}}\) in the ablation.

\subsection*{Numerical results (3 s.f.)}

\subsubsection*{Windowed sweep at \texorpdfstring{\(R_{\mathrm{integ}}=12\rho\)}{R\_integ = 12 rho}}
Table~\ref{tab:abl_sweep} reports the global magnitudes
\(\displaystyle E_-=\int_{\mathcal R_-}\!|\varrho_p|\,dV\),
\(\displaystyle E_+=\int_{\mathcal R_+}\!|\varrho_p|\,dV\),
and the ratio \(E_+/E_-\) for \(\eta\in\{0,0.25,0.5,0.75,1\}\).
The mask tolerances are held fixed to their baseline values.

\begin{table}[htbp]
\centering
\caption[Fixed-smoothing vortical ablation sweep]{Fixed-smoothing vortical ablation sweep on the baseline window \(R_{\mathrm{integ}}=12\rho\). Central values to \(3\) s.f.}
\label{tab:abl_sweep}
\begin{tabular}{c c c c}
\hline
\(\eta\) & \(E_-\) [J] & \(E_+\) [J] & \(E_+/E_-\) \\
\hline
\textbf{\(0.00\)} & \textbf{\(1.21\times10^{44}\)} & \textbf{\(1.29\times10^{44}\)} & \textbf{\(1.07\)} \\
\(0.25\) & \(3.92\times10^{45}\) & \(6.00\times10^{43}\) & \(1.53\times10^{-2}\) \\
\(0.50\) & \(1.55\times10^{46}\) & \(4.91\times10^{43}\) & \(3.16\times10^{-3}\) \\
\(0.75\) & \(3.49\times10^{46}\) & \(4.13\times10^{43}\) & \(1.18\times10^{-3}\) \\
\(1.00\) & \(6.19\times10^{46}\) & \(3.62\times10^{43}\) & \(5.84\times10^{-4}\) \\
\hline
\end{tabular}
\end{table}

\subsection*{Interpretation}
Because \(g(r)\) is held fixed and the mask tolerances are frozen to the baseline values, Table~\ref{tab:abl_sweep} shows that even a modest vortical admixture (\(\eta=0.25\)) increases the negative-energy magnitude \(E_-\) by a factor of \(\approx 32\) relative to baseline, while collapsing the balance \(E_+/E_-\) by nearly two orders of magnitude (from \(1.07\) to \(1.53\times10^{-2}\)). As \(\eta\) increases to \(1\), \(E_-\) grows by a factor \(>5\times10^{2}\) and \(E_+/E_-\) falls to \(5.84\times10^{-4}\).
\textbf{This demonstrates that the favorable energy properties reported in the main text—particularly the near-cancellation of integrated proper energy—are a direct consequence of the irrotational, curl-free kinematics and not an artifact of profile smoothing.}


\section{Boundary behaviour details}
\label{app:boundary-details}

The limiting values of the shift components are summarized in Table~\ref{tab:boundary-behaviour}.
For Alcubierre, the interior shift equals $+v(t)$ along $+x$ and decays for large $r$.
For Nat\'ario and for the irrotational model (Sec.~\ref{secIrrotational}), the bubble center is at rest while the external universe drifts with velocity $-v(t)$, consistent with the adopted convention.

\begin{table}[htbp]
\centering
\setlength{\tabcolsep}{6pt}
\renewcommand{\arraystretch}{1.2}
\begin{tabular}{|l|c|c|c|c|c|}
\hline
Model & Angle
& $\displaystyle \lim_{r\to 0}\beta^{\hat r}$ 
& $\displaystyle \lim_{r\to 0}\beta^{\hat\theta}$
& $\displaystyle \lim_{r\to \infty}\beta^{\hat r}$
& $\displaystyle \lim_{r\to \infty}\beta^{\hat\theta}$ \\
\hline
Alcubierre & $\theta=0$ & $v(t)$ & $0$ & $0$ & $0$ \\
           & $\theta=\dfrac{\pi}{2}$ & $0$ & $-\!v(t)$ & $0$ & $0$ \\
\hline
Nat\'ario  & $\theta=0$ & $0$ & $0$ & $-\!v(t)$ & $0$ \\
           & $\theta=\dfrac{\pi}{2}$ & $0$ & $0$ & $0$ & $v(t)$ \\
\hline
Irrotational & $\theta=0$ & $0$ & $0$ & $-\!v(t)$ & $0$ \\
           & $\theta=\dfrac{\pi}{2}$ & $0$ & $0$ & $0$ & $v(t)$ \\
 \hline

\end{tabular}

\caption{Boundary behaviour of the shift tetrad components for the Alcubierre, Nat\'ario, and irrotational drives: limiting values of $(\beta^{\hat r},\beta^{\hat\theta})$ at $r\to0$ and $r\to\infty$ for two representative polar angles. (In all three, $\beta^{\hat\phi}=0$ by axial symmetry.) }

\label{tab:boundary-behaviour}
\end{table}

\section{Canonical Type-IV blocks and parameter maps—Relation to the literature}
\label{app:typeIV_maps}

Table~\ref{tab:typeIV_maps} aligns the $2\times2$ blocks used in the literature with the mixed canonical form
\(T^{a}{}_{b}=\begin{psmallmatrix}\xi & \mu\\[-1pt]-\mu & \xi\end{psmallmatrix}\).
Apparent differences come only from index position (covariant/contravariant), basis ordering, and signature; after the appropriate raising/lowering and basis change, all reduce to the same mixed block with conjugate eigenvalues \(\xi\pm i\mu\) and an invariant null two–plane \(g(z,\bar z)=0\).

\emph{Mappings (see Table~\ref{tab:typeIV_maps}):}
We use the mixed canonical form \(T^{a}{}_{b}=\begin{psmallmatrix}\xi & \mu\\[-1pt]-\mu & \xi\end{psmallmatrix}\) in an orthonormal \((t,s)\) basis with signature \((-,+,+,+)\).  Hawking–Ellis \cite[p.\,90]{hawking_ellis_1973} start from \(T^{ab}=\begin{psmallmatrix}-\varkappa & \nu\\ \nu & 0\end{psmallmatrix}\) (orthonormal \((s,t)\); \((+,+,+,-)\)).
Lowering yields \(T^{a}{}_{b}=\begin{psmallmatrix}-\varkappa & -\nu\\ \nu & 0\end{psmallmatrix}\) with eigenvalues \(\tfrac12\!\left(-\varkappa \pm i\sqrt{4\nu^{2}-\varkappa^{2}}\right)\), hence \(\xi=-\varkappa/2\), \(\mu=\sqrt{\nu^{2}-\varkappa^{2}/4}\); the Type~IV condition is \(\varkappa^{2}<4\nu^{2}\).
Maeda–Martínez \cite{Maeda2020} use \(T^{ab}=\begin{psmallmatrix}\rho & \nu\\ \nu & -\rho\end{psmallmatrix}\) ((\(t,s\)); \((-,+,+,+)\)), giving \(T^{a}{}_{b}=\begin{psmallmatrix}-\rho & \nu\\ -\nu & -\rho\end{psmallmatrix}\) with \(-\rho\pm i\nu\), so \(\xi=-\rho\), \(\mu=\nu\).
Martín–Moruno–Visser \cite{Martín-Moruno_2018} have \(T^{ab}=\begin{psmallmatrix}\rho & f\\ f & -\rho\end{psmallmatrix}\) (same signature), hence \(T^{a}{}_{b}=\begin{psmallmatrix}-\rho & f\\ -f & -\rho\end{psmallmatrix}\) with \(-\rho\pm i f\), so \(\xi=-\rho\), \(\mu=f\).
Petrov \cite[§47]{Petrov} writes the covariant block \(T_{ab}=\begin{psmallmatrix}\alpha-\beta & \alpha+\beta\\ \alpha+\beta & -\alpha+\beta\end{psmallmatrix}\) in a quasi–orthonormal \((k,l)\) frame with \(g_{ab}=\begin{psmallmatrix}1&1\\ 1&-1\end{psmallmatrix}\); raising gives \(T^{a}{}_{b}=\begin{psmallmatrix}\alpha & \beta\\ -\beta & \alpha\end{psmallmatrix}\) with \(\alpha\pm i\beta\), so \(\xi=\alpha\), \(\mu=\beta\).

All four descriptions thus encode the same algebraic geometry. The invariant null two–plane does \emph{not} imply a real null eigenvector, consistent with \cite{hawking_ellis_1973}. The “invariant-null-plane $\Rightarrow$ null-eigenvector’’ lemma of Stephani et al.\ \cite[p.~58, item (4)]{Stephani_Kramer_MacCallum_Hoenselaers_Herlt_2003} is proved for \(R^{a}{}_{b}\); since \(G^{a}{}_{b}=R^{a}{}_{b}-\tfrac12 R\,\delta^{a}{}_{b}\) differs only by a trace, eigenvectors and invariant planes are unchanged, and the conclusion transfers verbatim to \(G^{a}{}_{b}\propto T^{a}{}_{b}\).

\begin{table*}[t]
\caption{Canonical Type–IV $2\times2$ blocks across the literature and mapping to our mixed canonical form \(T^{a}{}_{b}=\begin{psmallmatrix}\xi & \mu\\[-1pt]-\mu & \xi\end{psmallmatrix}\).}
\label{tab:typeIV_maps}
\centering
\small
\setlength{\tabcolsep}{5.5pt}
\resizebox{\textwidth}{!}{%
\begin{tabular}{@{}l l l l l l@{}}
\toprule
Source
& Initial tensor block
& Basis / signature
& Mixed block
& Eigenvalues
& Mapping to \((\xi,\mu)\) \\
\midrule
This work
& \(T^{a}{}_{b}=\begin{psmallmatrix}\xi & \mu\\[2pt]-\mu & \xi\end{psmallmatrix}\)
& Ort.\,(t,s) / \((-,+,+,+)\)
& \(T^{a}{}_{b}=\begin{psmallmatrix}\xi & \mu\\[2pt]-\mu & \xi\end{psmallmatrix}\)
& \(\xi \pm i\mu\)
& \(\xi=\xi,\;\mu=\mu\) \\
Hawking–Ellis
& \(T^{ab}=\begin{psmallmatrix}-\varkappa & \nu\\[2pt]\nu & 0\end{psmallmatrix}\)
& Ort.\,(s,t) / \((+,+,+,-)\)
& \(T^{a}{}_{b}=\begin{psmallmatrix}-\varkappa & -\nu\\[2pt]\nu & 0\end{psmallmatrix}\)
& \(-\varkappa/2\)
& \(\xi=-\varkappa/2\) \\
& & & & \(\pm \tfrac{i}{2}\sqrt{4\nu^{2}-\varkappa^{2}}\)
& \(\mu=\sqrt{\nu^{2}-\varkappa^{2}/4}\) \\
Maeda–Martinez
& \(T^{ab}=\begin{psmallmatrix}\rho & \nu\\[2pt]\nu & -\rho\end{psmallmatrix}\)
& Ort.\,(t,s) / \((-,+,+,+)\)
& \(T^{a}{}_{b}=\begin{psmallmatrix}-\rho & \nu\\[2pt]-\nu & -\rho\end{psmallmatrix}\)
& \(-\rho \pm i\nu\)
& \(\xi=-\rho,\;\mu=\nu\) \\
Moruno–Visser
& \(T^{ab}=\begin{psmallmatrix}\rho & f\\[2pt]f & -\rho\end{psmallmatrix}\)
& Ort.\,(t,s) / \((-,+,+,+)\)
& \(T^{a}{}_{b}=\begin{psmallmatrix}-\rho & f\\[2pt]-f & -\rho\end{psmallmatrix}\)
& \(-\rho \pm i f\)
& \(\xi=-\rho,\;\mu=f\) \\
Petrov
& \(T_{ab}=\begin{psmallmatrix}\alpha-\beta & \alpha+\beta\\[2pt]\alpha+\beta & -\alpha+\beta\end{psmallmatrix}\)
& Qs.\,(k,l) / \(g_{ab}=\begin{psmallmatrix}1&1\\[1pt]1&-1\end{psmallmatrix}\)
& \(T^{a}{}_{b}=\begin{psmallmatrix}\alpha & \beta\\[2pt]-\beta & \alpha\end{psmallmatrix}\)
& \(\alpha \pm i\beta\)
& \(\xi=\alpha,\;\mu=\beta\) \\
\bottomrule
\end{tabular}}
\end{table*}

\end{appendices}


\bibliography{sn-bibliography}

\end{document}